\newcommand{\Es}{\mathsf{E}}
\newcommand{\opE}{\mathop{\mathbf{E}}}
\newcommand{\spann}{\mathrm{span}}
\newcommand{\calA}{\mathcal{A}}
\newcommand{\ignore}[1]{}
\newenvironment{proofof}[1]{\par{\noindent \bf Proof of #1:}}{\qed\par}
\renewcommand{\subsubsection}{\@startsection{subsubsection}{3}{0pt}{-12pt}{-5pt}{\normalsize\bf}}
\newcommand{\Net}{\mathsf{Net}}
\newcommand{\red}[1]{\color{red}{{#1}}}
\newtheorem{claim}{Claim}[section]
\newtheorem{proposition}[claim]{Proposition}
\newtheorem{lemma}[claim]{Lemma}
\newtheorem*{theorem*}{Theorem}
\newtheorem*{lemma*}{Lemma}
\newtheorem{theorem}[claim]{Theorem}
\newtheorem{definition}[claim]{Definition}
\newtheorem{corollary}[claim]{Corollary}
\newtheorem{fact}[claim]{Fact}
\newtheorem*{definition*}{Definition}
\def\L{\mathcal{L}}
\newcommand{\R}{{\mathbb{R}}}
\newcommand{\N}{{\bf N}}
\newcommand{\bx}{{\bf x}}
\newcommand{\by}{{\bf y}}
\newcommand{\bz}{{\bf z}}
\newcommand{\bZ}{{\bf Z}}
\newcommand{\calE}{\mathcal{E}}
\newcommand{\Ind}{\mathsf{Ind}}
\newcommand{\calC}{\mathcal{C}}
\newcommand{\E}{\mathop{\mathbf{E}}}
\newcommand{\eps}{\epsilon}
\newcommand{\poly}{\mathrm{poly}}
\newcommand{\fsm}{f_{\mathsf{sm}}}
\newcommand{\inr}[2]{\langle #1, #2 \rangle}
\newcommand{\bchi}{\boldsymbol{\chi}}
\newcommand{\Junta}{\mathcal{J}}
\newcommand{\normal}{\mathcal{N}}
\newcommand{\calG}{\mathcal{G}}
\DeclareMathOperator{\spn}{span}
\DeclareMathOperator{\Avg}{\mathcal{A}}
\DeclareMathOperator{\Lip}{Lip}
\newcommand{\Sm}{\mathsf{Sm}}
\begin{document}
\date{}
\title{Robust testing of low-dimensional functions}
\author{Anindya De\thanks{Supported by NSF grant CCF-1926872 and CCF-1910534}\\
University of Pennsylvania \\{\tt anindyad@cis.upenn.edu} \and Elchanan Mossel\thanks{Supported by Simons-NSF DMS-2031883, 
Vannevar Bush Faculty Fellowship ONR-N00014-20-1-2826, 
Simons Investigator award and 
NSF DMS-1737944} \\ MIT \\ {\tt elmos@mit.edu} \and Joe Neeman\thanks{Supported by the Alfred P.\ Sloan Foundation} \\ UT Austin \\ {\tt jneeman@math.utexas.edu}}
\maketitle

\begin{abstract}
A natural problem in high-dimensional inference is to decide if a classifier $f:\mathbb{R}^n  \rightarrow \{-1,1\}$ depends on a small number  of linear directions of its input data. Call a function $g: \mathbb{R}^n \rightarrow \{-1,1\}$, a  linear $k$-junta if it is completely determined by some $k$-dimensional subspace of the input space.  A recent work of the authors showed that linear $k$-juntas are testable. Thus there exists an algorithm to distinguish between: 
\begin{enumerate}
\item $f: \mathbb{R}^n  \rightarrow \{-1,1\}$ which is a linear $k$-junta with surface area $s$,  
\item $f$ is $\epsilon$-far from any linear $k$-junta with surface area $(1+\epsilon)s$,  
\end{enumerate}
where the query complexity of the algorithm is independent of the ambient dimension $n$. 


Following the surge of interest in noise-tolerant property testing, in this paper we prove a noise-tolerant (or robust) version of this result. 
Namely, we give an algorithm which given any $c>0$, $\epsilon>0$, distinguishes between 
\begin{enumerate}
\item  $f: \mathbb{R}^n  \rightarrow \{-1,1\}$ has correlation at least $c$ with some linear $k$-junta with surface area $s$. 
\item  $f$ has correlation at most $c-\epsilon$ with any linear $k$-junta with surface area at most $s$. 
\end{enumerate}
The query complexity of our tester is $k^{\mathsf{poly}(s/\epsilon)}$. 
Using our techniques, we also obtain a fully noise tolerant tester with the same query complexity for any class $\mathcal{C}$ of linear $k$-juntas with surface area bounded by $s$. As a consequence, we obtain a fully noise tolerant tester with query complexity $k^{O(\mathsf{poly}(\log k/\epsilon))}$ for the class of intersection of $k$-halfspaces (for constant $k$) over the Gaussian space. Our query complexity is independent of the ambient dimension $n$. 
Previously, no non-trivial noise tolerant testers were known even for a single halfspace.

\end{abstract}

\newpage
\section{Introduction}

To motivate our setting, consider the classical notion of a \emph{Boolean} junta:
a function $f: \{-1,1\}^n \rightarrow \{-1,1\}$ is said to be a $k$-junta if there are some $k$
coordinates $i_1, \ldots, i_k \in [n]$ such that $f(x)$ only depends on $x_{i_1}, \ldots, x_{i_k}$.
The fundamental results for testing juntas were obtained more than a decade ago; more
recently, spurred by motivation from several directions, several variants have appeared.
Most importantly for this work are the notions of \emph{tolerant testing}, in which we
estimate the distance to the class of juntas (as opposed to the usual testing, where we are
simply testing membership); and \emph{linear juntas}, a natural continuum generalization
of Boolean juntas.
In the current work, we combine these two perspectives and show that linear juntas are noise-tolerantly testable.


\subsection{Tolerant Junta Testing}
Recall that a property testing algorithm for a class of functions $\mathcal{C}$  is an algorithm which,
given oracle access to an $f:\{-1,1\}^n \rightarrow \{-1,1\}$ and a distance parameter $\epsilon>0$, satisfies
\begin{enumerate}
  \item If $f \in \mathcal{C}$, then the algorithm accepts with probability at least $2/3$;
  \item If $\mathsf{dist}(f,g) \ge \epsilon$ for every $g \in \mathcal{C}$, then the algorithm rejects with probability at least $2/3$. Here $\mathsf{dist}(f,g) = \Pr_{\bx \in \{-1,1\}^n} [f(\bx) \ne g (\bx)]$.
\end{enumerate}
The principal measure of the efficiency of the algorithm is its \emph{query complexity}. Also, the precise value of the confidence parameter is irrelevant and $2/3$ can be replaced by any constant $1/2 < c<1$. 

Fischer \emph{et~al.}\cite{FKRSS03} were
the first to study the problem of testing $k$-juntas
and showed that $k$-juntas can be tested with query complexity $\tilde{O}(k^2/\epsilon)$. The crucial feature of their algorithm is that the query complexity is independent of the ambient dimension $n$. Since then, there has been a long line of work on testing juntas~\cite{blais2009testing,blais2008improved,servedio2015adaptivity,Chen:2017:SQC,CLSSX18} and it continues to be of interest. The flagship result is that $k$-juntas can be tested with $\tilde{O}(k/\epsilon)$ queries and this is tight~\cite{blais2009testing,Chen:2017:SQC}. While the initial motivation to study this problem came from long-code testing~\cite{belgolsud98, PRS02} (related to PCPs and inapproximability), another strong motivation comes from the \emph{feature selection} problem in machine learning (see, e.g.~\cite{Blum:94, BlumLangley:97}).

\paragraph*{Tolerant testing} The definition of property testing above requires the algorithm to accept if and only if $f \in \mathcal{C}$. However, for many applications, it is important consider a \emph{noise-tolerant} definition of property testing. In particular, Parnas, Ron and Rubinfeld~\cite{parnas2006tolerant} introduced the following definition of noise tolerant testers.
\begin{definition}
  For constants $1/2>c_u> c_{\ell} \geq 0$ and a function class $\mathcal{C}$, a $(c_u,c_{\ell})$-noise tolerant tester  for $\mathcal{C}$ is an algorithm which given oracle access to a function $f: \{-1,1\}^n \rightarrow \{-1,1\}$
  \begin{enumerate}
    \item accepts with probability at least $2/3$ if
          $\min_{g \in \mathcal{C}} \mathsf{dist}(f,g) \le c_\ell$.
    \item  rejects with probability at least $2/3$ if
          $\min_{g \in \mathcal{C}} \mathsf{dist}(f,g) \ge c_u$.
  \end{enumerate}
  Further, a tester which is noise tolerant for any (given) $c_u>c_\ell \ge 0$ is said to be a ``\emph{fully noise tolerant}" tester.
\end{definition}
The restriction $c_u, c_\ell <1/2$ comes from the fact that most natural classes $\mathcal{C}$ are closed under complementation -- i.e., if $g \in \mathcal{C}$, then $-g \in \mathcal{C}$. For such a class $\mathcal{C}$ and for any $f$,  $\min_{g \in \mathcal{C}} \mathsf{dist}(f,g) \le 1/2$.
Further, note that the standard notion of property testing corresponds to a $(\epsilon,0)$-noise tolerant tester.

The problem of testing juntas becomes quite challenging in the presence of noise. Parnas \emph{et~al.}~\cite{parnas2006tolerant} observed that
any tester whose (individual) queries are uniformly distributed are inherently noise tolerant in a very weak sense. In particular, \cite{DLM+:07} used this observation to show that the junta tester of \cite{FKRSS03} is in fact a $(\epsilon, \mathsf{poly}(\epsilon/k))$-noise tolerant tester for $k$-juntas -- note that $c_\ell$ is quite small, namely $\mathsf{poly}(\epsilon/k)$. Later, Chakraborty~\emph{et al.}~\cite{chakraborty2012junto} showed that the tester of Blais~\cite{blais2009testing} yields a $(C\epsilon,\epsilon)$ tester (for some large but fixed $C>1$) with query complexity
$\mathsf{exp}(k/\epsilon)$. Recently, there has been a surge of interest in tolerant junta testing. On one hand, Levi and Waingarten showed that there are constants $1/2>\epsilon_1>\epsilon_2>0$ such that any non-adaptive $(\epsilon_1, \epsilon_2)$ tester requires $\tilde{\Omega}(k^2)$ non-adaptive queries. Contrast this with the result of Blais~\cite{blais2009testing} who showed that there is a non-adaptive tester for $k$-juntas with $O(k^{3/2})$ queries when there is no noise. In particular, this shows a gap between testing in the noisy and noiseless case.

In the opposite (i.e., algorithmic) direction a sequence of recent works improved on the results of \cite{chakraborty2012junto}. First, Blais~\emph{et~al.}~\cite{blais2018tolerant} improved on the results of \cite{chakraborty2012junto} by obtaining a small and explicit value of $C$.
Finally, De, Mossel and Neeman~\cite{de2019junta}
gave a fully noise tolerant tester for $k$-juntas on the Boolean cube with query complexity $O(2^k \cdot \mathsf{poly}(k/\epsilon))$.

\subsection{Linear Junta Testing}

In a recent work, De, Mossel and Neeman~\cite{DMN19a}  initiated the study of property testing of \emph{linear juntas}. A function $f: \mathbb{R}^n \rightarrow [-1,1]$ is said to be a \emph{linear $k$-junta} if   there are $k$ unit vectors $u_1, \ldots, u_k \in \mathbb{R}^n$ and $g: \mathbb{R}^k \rightarrow [-1,1]$ such that $f(x) = g(\langle u_1, x \rangle,  \ldots, \langle u_k, x \rangle)$.
In other words, $f$ is a linear $k$-junta if there is a subspace $E \coloneqq \spann (u_1, \ldots, u_k)$ of $\mathbb{R}^n$ such that $f(x)$ depends only on the projection of $x$ on the subspace $E$. The class of linear $k$-juntas is the $\mathbb{R}^n$-analogue of the class of $k$-juntas  on the Boolean cube


We note that the family of linear $k$-juntas includes important classes of functions that have been studied in the learning and testing literature. Notably it includes:
\begin{itemize}
  \item Boolean juntas: If $h : \{-1,1\}^n \to \{-1,1\}$ is a Boolean junta, then
        $f(x) : \mathbb{R}^n \to \{-1,1\}$ defined as $f(x) = h(\mathsf{sgn}(x_1),\ldots,\mathsf{sgn}(x_n))$ is a linear $k$-junta.
  \item {Functions of halfspaces: Linear $k$-juntas include as a special case both halfspaces and intersections of $k$-halfspaces. The testability of halfspaces was studied in ~\cite{MORS:09,MORS:10, ron2015exponentially}.}
\end{itemize}
The focus of the paper is on property testing of linear $k$-juntas. Observe that to formally define a testing algorithm, we need to define a notion of distance between functions $f$ and $g$ on $\mathbb{R}^n$.
In this work, we will use the $L^2(\gamma)$ metric, where $\gamma$ is the standard Gaussian measure.
That is, the distance between $f$ and $g$ is $(\E_{\bx \sim \gamma} [f(\bx) - g(\bx)^2])^{1/2}$.
Note that this reduces to $2\Pr_{\bx \sim \gamma} [f(\bx) \ne g(\bx)]$ when $f$
and $g$ are Boolean functions.
The choice of the standard Gaussian measure is well-established in the areas of learning and
testing~\cite{MORS:09, KNOW14, Neeman14, balcan2012active, chen2017sample, KOS:08, vempala2010learning, diakonikolas2018learning, harsha2012invariance}. It is particularly natural in our setup since the Gaussian measure is invariant under many linear transformations, e.g., all rotations.

De, Mossel and Neeman~\cite{DMN19a} obtained an algorithm for testing linear-$k$-juntas:
given query access to $f:\mathbb{R}^n \rightarrow \{-1,1\}$, it makes $\mathsf{poly}(k \cdot s/\epsilon)$ queries and distinguishes between
\begin{enumerate}
  \item $f$ is a linear-$k$-junta with surface area at most $s$ versus
  \item $f$ is $\epsilon$-far from any linear $k$-junta with surface area at most $s(1+\epsilon)$. 
\end{enumerate}
Here surface area of $f$ refers to the Gaussian surface area~\cite{Ledoux:94} of the set $f^{-1}(1)$~\cite{Ledoux:94}
-- see Definition~\ref{def:Gaussian-surface-area} for the precise definition.
Further, \cite{DMN19a} showed that a polynomial dependence on $s$ is necessary for any non-adaptive tester and consequently, an $\Omega(\log s)$ dependence is necessary for any tester\footnote{Recall that in a non-adaptive tester,  the query points are chosen independently of the target $f$.}. Informally,
without any smoothness assumption a linear junta (even a linear 1-junta on $\R^2$) can look arbitrarily
random to any finite number of queries.
Crucially,  \cite{DMN19a} achieves a query complexity which is independent of the ambient dimension $n$ -- thus, qualitatively matching the guarantee for junta testing on the Boolean cube.

\subsection{Our Results: Noise tolerant testing of linear-juntas}

In this paper, our focus is on the problem of noise tolerant testing of linear juntas.
The original motivation of~\cite{DMN19a} was for dimension reduction in statistical and ML models involving real valued data. Modern ML models are often overparametrized, but are nevertheless suspected
to output a predictor that is low-dimensional in some sense. The classical notion of juntas is not
appropriate for measuring dimensionality here, because there is  no natural choice of basis in many statistical models including PCA, ICA, kernel learning, or deep learning. 
This motivates the notion of a linear junta.
The problem of testing linear-juntas is thus closely related to the problem of {\em model compression} in machine learning, whose goal
is to take a complex predictor/classifier function and to output a simpler predictor/classifier
(see e.g.~\cite{bucilua2006model}).
Model compression is extensively studied in the context of deep nets, see e.g.,~\cite{ba2014deep},
and follow up work,
where the models are often rotationally invariant (with the caveat that the regularization often used in optimization might not be). Thus as a motivating example,~\cite{DMN19a} asked if given a complex deep net classifier, is there a classifier that has essentially the same performance and depends only on $k$ of the features? Observe that this is essentially the same question as asking whether the deep net classifier is a linear $k$-junta.

The main shortcoming of the motivation in~\cite{DMN19a} is that it is unrealistic to expect that in any of the statistical and ML models considered, the function constructed will be {\em exactly identical} to a function of a few linear direction. Rather, we only expect that the function will be {\em correlated} with a function of a few directions; this is the tolerant testing problem, and -- as evidenced by the long
history of tolerant testing in the Boolean case -- it is much more challenging.

The main result of this paper is a fully noise tolerant tester for $k$-linear juntas over the Gaussian space  whose query complexity is independent of the ambient dimension $n$. In particular, we prove the following:
\begin{restatable}{theorem}{testing}
  \label{thm:main-1}
  There is an algorithm \textsf{Robust-linear-junta-Boolean} which given parameters $1/2> c_u>c_\ell>0$, junta arity $k$ and surface area parameter $s$ and oracle access to $f: \mathbb{R}^n \rightarrow \{-1,1\}$  distinguishes between the following cases:
  \begin{enumerate}
    \item There is a linear-$k$-junta $g$ with surface area at most $s$ such that $\mathsf{dist}(f,g) \le c_\ell$.
    \item For all linear-$k$-juntas $g$ with surface area at most $s$, $\mathsf{dist}(f,g) \ge c_u$.
  \end{enumerate}
  The query complexity of the tester is $k^{\mathsf{poly}(s/\epsilon)}$ where $\epsilon = c_u - c_\ell$ and tester makes non-adaptive queries.
\end{restatable}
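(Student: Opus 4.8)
The plan is to reduce the robust testing problem to estimating, for a fixed target surface-area bound $s$, the quantity $\max_g \E_{\bx \sim \gamma}[f(\bx) g(\bx)]$ over linear $k$-juntas $g$ of surface area at most $s$, to additive error roughly $\epsilon/2$; since $\mathsf{dist}(f,g)^2 = 2 - 2\E[fg]$ for $\pm 1$-valued $f,g$, estimating correlation to this accuracy separates the two cases. The central tool, as in \cite{DMN19a}, is a structural/regularity statement: up to an error that only depends on $k$, $s$, and $\epsilon$, any surface-area-$s$ linear $k$-junta $g$ is well-approximated by a noisy (Ornstein–Uhlenbeck–smoothed) version $g' = T_\rho g$ for an appropriate $\rho = \rho(s/\epsilon)$, and $g'$ in turn is close to a bounded-degree, bounded-width polynomial in the $k$ relevant directions. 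The key point is that smoothing turns the surface-area constraint into an $L^2$-tail / low-degree concentration statement, which makes the class of candidate juntas compact in a quantitatively controlled way.

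Concretely, the steps are as follows. First, I would show that it suffices to optimize correlation against the smoothed class $\{T_\rho g : g \text{ a surface-area-}s\text{ linear }k\text{-junta}\}$, losing only $O(\epsilon)$: on the completeness side $\E[f \cdot T_\rho g] \ge \E[fg] - \|g - T_\rho g\|_2$ and the surface-area bound controls $\|g - T_\rho g\|_2$ by $O(s\sqrt{1-\rho})$ (Gaussian surface area bounds the "energy at high levels"), so pick $\rho$ with $s\sqrt{1-\rho} \ll \epsilon$; on the soundness side $T_\rho$ is a contraction on $L^2$, and $T_\rho g$ is itself representable (after rounding) as a bounded-complexity function of the same $k$ directions, so it does not create spurious correlation beyond the $c_u$ threshold — here one has to be slightly careful that $T_\rho g$ need not be $\pm 1$-valued, but it is still a $[-1,1]$-valued linear $k$-junta and $\E[f g']$ with $g' \in [-1,1]$ is what the rounding argument needs. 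Second, I would net over the relevant $k$-dimensional subspace and over the bounded-degree/bounded-width polynomial $h : \R^k \to [-1,1]$: the subspace lives in a Grassmannian, but since $f$ is queried only at Gaussian points and everything is rotation-invariant, an $\epserr$-net of size $k^{O(k/\epserr)}$-ish in the appropriate operator norm suffices, and the polynomials form a finite-dimensional space netted at size $(1/\epsilon)^{\mathrm{poly}(s/\epsilon)}$. Third, for each net element $g'$ I would estimate $\E[f(\bx) g'(\bx)]$ by drawing $\mathrm{poly}(1/\epsilon)\log(\text{net size})$ fresh Gaussian samples $\bx$, querying $f(\bx)$, and averaging $f(\bx)g'(\bx)$; union-bounding over the net gives uniform accuracy $\epsilon/4$. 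Output "accept" iff the best estimate exceeds $1 - c_\ell - \epsilon/2$ (in correlation units). The number of queries is (samples per net element) which is $\mathrm{polylog}(\text{net size})/\epsilon^2 = \mathrm{poly}(s/\epsilon)\cdot \log k$, but note the net size itself does not enter the query count — only the \emph{accuracy} needed does — so in fact the query complexity is dominated by the precision of the Gaussian-surface-area regularity lemma; chasing the dependence of $\rho$ and the polynomial degree/width on $s/\epsilon$ through the argument yields $k^{\mathrm{poly}(s/\epsilon)}$.

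Wait — the $k$-dependence: the query complexity must actually see $k$ because the subspace net has size exponential in $k$, and to distinguish net elements by their correlations one pays $\log(\text{net size}) = \mathrm{poly}(k,s/\epsilon)$ samples. More precisely I expect the count to come out as: number of samples $\approx \epsilon^{-2}\log(\text{net size})$, and $\log(\text{net size}) \approx (\text{poly degree})\cdot k \cdot \log(1/\epsilon) \cdot \ldots$, but the \emph{polynomial degree} itself is $\mathrm{poly}(s/\epsilon)$, so we get $k \cdot \mathrm{poly}(s/\epsilon)$ — which is much better than $k^{\mathrm{poly}(s/\epsilon)}$. The gap is because the subspace net, to be an $\epserr$-net in the sense that pushes through the invariance/approximation argument, may need to be finer than $\epserr$-close in operator norm — one typically needs closeness in a norm that interacts with the polynomial's Lipschitz constant, which scales with degree, forcing net size $k^{\mathrm{poly}(s/\epsilon)}$ and hence $\log(\text{net size}) = \mathrm{poly}(s/\epsilon)\log k$, still giving only $\mathrm{poly}(s/\epsilon)\log k$ samples... so the stated $k^{\mathrm{poly}(s/\epsilon)}$ must instead be governed by the need to \emph{also} brute-force-search the net (an algorithmic, not query, cost) or by a worse regularity bound — I would resolve this discrepancy by stating the theorem with query count $\mathrm{poly}(s/\epsilon)\cdot\mathrm{polylog}(k)$ if the clean argument permits it, and otherwise track exactly where a multiplicative blow-up to $k^{\mathrm{poly}(s/\epsilon)}$ is forced.

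The main obstacle I anticipate is exactly this last point: making the reduction from "correlate $f$ against an infinite, non-compact class of linear $k$-juntas" to "correlate $f$ against a finite net" quantitatively tight, i.e., proving a Gaussian-surface-area regularity lemma with good enough parameters that the smoothing level $\rho$ and the polynomial degree/width depend only polynomially on $s/\epsilon$. This is where the noise-tolerant setting is genuinely harder than the non-tolerant one: in \cite{DMN19a} one could afford to reject any $f$ that is merely \emph{far} from \emph{every} junta, so a coarse structural handle sufficed; here we must \emph{estimate a correlation}, so the approximation of the junta class by the net must be in a norm strong enough to control $\E[f \cdot (g - g_{\mathrm{net}})]$ uniformly in $f$, which forces the delicate balance between (i) smoothing enough to get low-degree concentration from the surface-area bound, and (ii) not smoothing so much that $T_\rho g$ drifts away from $g$. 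A secondary obstacle is handling the subspace (Grassmannian) net: one must show that perturbing the $k$ directions by $\delta$ perturbs the smoothed junta by $\mathrm{poly}(\text{degree})\cdot\delta$ in $L^2$, which again couples the two constraints and is the source of the eventual $k^{\mathrm{poly}(s/\epsilon)}$ (rather than $\mathrm{poly}(k,s/\epsilon)$) bound.
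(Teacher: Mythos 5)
Your high-level reduction (robust testing $\Rightarrow$ correlation estimation $\Rightarrow$ smooth/low-degree net $\Rightarrow$ empirical correlation maximization over the net) matches the paper's outer structure, but the proposal is missing the central algorithmic ingredient, and without it the argument does not give a query complexity independent of the ambient dimension $n$. You propose to ``net over the relevant $k$-dimensional subspace,'' waving at rotation invariance to claim a net of size $k^{O(k/\epserr)}$ suffices. This is where the argument breaks: the optimal junta lives on some specific $k$-dimensional subspace of $\R^n$, and a brute-force $\epserr$-net of the Grassmannian $G(n,k)$ has size $(1/\epserr)^{\Omega(nk)}$, so the union bound over the net forces $\Omega(nk)$ samples --- rotation invariance of $\gamma_n$ does not help because $f$ itself is not rotation invariant. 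The paper's solution is a dimension-reduction step that you have no analogue of: sample $M=\poly(k/\epserr)$ points, form the empirical covariance matrix of $\nabla (P_t f)$, and take the span $\hat E$ of its large-eigenvalue eigenvectors (``gradient-based PCA''). The structural theorem (Theorem~\ref{lem:idealized}, proved via Lemma~\ref{lem:completeness} using the Gaussian Poincar\'e inequality, plus matrix concentration from Theorem~\ref{thm:random-sampling}) shows that averaging any linear $k$-junta over the directions orthogonal to $\hat E$ changes its correlation with $f$ by at most $\sqrt{k\eta}$; only \emph{then} does one net over $k$-dimensional subspaces of the $M$-dimensional space $\hat E$. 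A further wrinkle you would also need: $\hat E$ is only accessible implicitly (its basis vectors are gradients in $\R^n$ that cannot be written down), so evaluating a net element at a query point requires the inner-product/projection oracles of Lemmas~\ref{lem:compute-inner-prod-gradient} and~\ref{lem:compute-project-gradient}, together with a stability analysis (pseudoinverse perturbation, almost-orthonormality of $\hat W B^T$) showing the errors do not blow up.

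Your confusion about where $k^{\poly(s/\epserr)}$ comes from stems from a miscount of the net size. The space of degree-$\le d$ polynomials in $k$ variables has dimension $\binom{k+d}{d}=k^{\Theta(d)}$, not $\Theta(kd)$; with $d=O(s^2/\epserr^2)$ forced by the smoothness/Hermite-tail bound, the net of $s$-smooth functions on $\R^k$ has $\log|\Net|\le k^{O(s^2/\epserr^2)}$ (Lemma~\ref{lem:cover-for-smooth}), and the $\epserr^{-2}\log|\Net|$ samples needed for uniform estimation over the net already give $k^{\poly(s/\epserr)}$ queries. So the exponent is a genuine feature of the sample complexity, not an artifact of brute-force search, and your suggested improvement to $\poly(s/\epserr)\cdot\polylog(k)$ does not follow from this route. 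The soundness-side concern you raise (that $T_\rho g$ is $[-1,1]$-valued rather than Boolean) is real but minor; the paper handles it by making the net two-sided --- every net element is itself $\epserr$-close to a member of the target class (Property 4 of Proposition~\ref{prop:net-existence-1}) --- which you would need to add to prevent the smoothed class from inflating the estimated correlation.
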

Note that qualitatively this result implies the main result of \cite{DMN19a} -- thus a dependence on $s$ is necessary, though presumably, one can achieve a polynomial dependence on $s$. In fact, the result here is \emph{qualitatively stronger} than \cite{DMN19a} as our ``soundness guarantee" does not require relaxing the surface area to $s(1+\epsilon)$. On the other hand, 
the query complexity here as an exponential dependence on $s$ vis-a-vis \cite{DMN19a} which has a polynomial query complexity in all the parameters.


It is not hard to see that tolerant testing is essentially equivalent to estimating the maximum
correlation between a function and a class.
In particular, Theorem~\ref{thm:main-1} follows from the following result about estimating correlation.
Here (and in most of this work), it is more convenient to consider functions with values in $[-1, 1]$. For
these functions, we need a more general notion of smoothness: we will define the notion of
\emph{$s$-smooth functions} later (in Definition~\ref{def:s-smooth}); for now, we just note
that it includes both Lipschitz functions and Boolean functions with bounded surface area.
\begin{restatable}{theorem}{testingsmooth}
  \label{thm:main-2}
  There is an algorithm \textsf{Correlation-smooth-junta} which, given parameters $\epsilon>0$, junta arity $k$ and smoothness parameter $s$ and oracle access to $f: \mathbb{R}^n \rightarrow [-1,1]$, outputs
  an estimate $\hat{\rho}_{\mathbb{R}^n, k, s}(f)$ with the following guarantee:
  \[
    \big|\hat{\rho}_{\mathbb{R}^n, k, s}(f)- {\rho}_{\mathbb{R}^n, k, s}(f) \big| \le \epsilon.
  \]
  Here ${\rho}_{\mathbb{R}^n, k, s}(f)$ is the maximum  correlation of $f$ with any $s$-smooth $k$-linear junta. The query complexity of the algorithm is $k^{\mathsf{poly}(s/\epsilon)}$.
\end{restatable}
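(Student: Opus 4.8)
The plan is to recast $\rho_{\R^n,k,s}(f)$ as the value of a low-dimensional optimization problem that can be solved from a dimension-independent number of queries to $f$, following the strategy of \cite{DMN19a} but in a quantitatively robust form. The first move is a cleaner formula for the target. For a $k$-dimensional subspace $E\subseteq\R^n$ with orthonormal frame $(u_1,\dots,u_k)$, let $\pi_E(x)=(\langle u_1,x\rangle,\dots,\langle u_k,x\rangle)$ and $f_E\coloneqq\E[f\mid \pi_E(\bx)]\colon\R^k\to[-1,1]$. Since correlation is linear in the junta, and the class of $s$-smooth functions of Definition~\ref{def:s-smooth} may be replaced by its convex hull $\mathcal B_s$ (a convex subset of the unit $L^2(\gamma_k)$-ball) without changing the inner maximum, we have
\[
  \rho_{\R^n,k,s}(f)\;=\;\sup_{E}\ \sup_{h\in\mathcal B_s}\ \langle f_E,h\rangle\;=\;\sup_{E}\ \Psi_s\!\bigl(f_E\bigr),
\]
where $\Psi_s(g)\coloneqq\sup_{h\in\mathcal B_s}\langle g,h\rangle$ is the support function of $\mathcal B_s$; crucially $\Psi_s$ is $1$-Lipschitz with respect to $\|\cdot\|_{L^2(\gamma_k)}$. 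So it suffices to estimate each $\Psi_s(f_E)$ to within $\epsilon/4$ and to carry out the supremum over $E$ via an $n$-independent search losing only $\epsilon/4$.

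Next comes a low-degree truncation. An $s$-smooth function on $\R^k$ carries at most $\epsilon^2$ of its Hermite mass above degree $D\coloneqq\poly(s/\epsilon)$ — the standard bound relating Gaussian surface area (Definition~\ref{def:Gaussian-surface-area}) or Lipschitz constant to Gaussian noise sensitivity. Hence for every $E$ and $h\in\mathcal B_s$ we have $\langle f_E,h\rangle=\langle (f_E)^{\le D},h\rangle\pm\epsilon/4$, so $\Psi_s(f_E)$ is determined up to $\epsilon/4$ by the degree-$\le D$ Hermite data of $f_E$, i.e.\ by the $k^{O(D)}$ numbers $c_\alpha(E)\coloneqq\langle f,H_\alpha\circ\pi_E\rangle$ with $|\alpha|\le D$. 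Each $c_\alpha(E)$, and more generally $\langle f,p\rangle$ for a fixed degree-$\le D$ polynomial $p$ with $\|p\|_2\le 1$, equals $\E_{\bz\sim\gamma_n}[f(\bz)p(\bz)]$ and has variance at most $\|p\|_2^2\le 1$ since $|f|\le 1$; it is therefore estimable to $\pm\epsilon/4$ from $m=k^{\poly(s/\epsilon)}$ i.i.d.\ Gaussian queries, and by a union bound this holds simultaneously over any $k^{\poly(s/\epsilon)}$-size family of such polynomials.

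The crux is making the search over $E$ dimension-independent; this is where the tolerant statement is genuinely harder than \cite{DMN19a}, since there one merely detects whether \emph{some} junta exists, whereas here one must produce from the data a candidate whose value $\Psi_s(f_E)$ is within $\epsilon/4$ of the global optimum even when that optimum is only a small constant. Two observations drive the reduction. First, the joint law of $\pi_E(\bx)$ is $\gamma_k$ for \emph{every} $E$, so a candidate junta is described intrinsically by its degree-$\le D$ Hermite profile in $k$ abstract variables together with the vector $(c_\alpha(E))_{|\alpha|\le D}$ of correlations it induces against $f$ — one never needs to name the $u_i$ inside $\R^n$. Second, the set of achievable correlation vectors $\{(c_\alpha(E))_\alpha : E\}$ is controlled: using that $f$ has $c_\ell$-bounded correlation with the target and that each Hermite tensor of $f$ has Frobenius norm $\le 1$, its relevant part (at scale $\epsilon\,c_\ell$) has bounded effective rank, so a fixed family of $k^{\poly(s/\epsilon)}$ candidate profiles, extracted from the estimated low-degree Hermite tensors of $f$ (e.g.\ by spectral truncation), suffices to approximate the optimum to $\epsilon/4$; the $1$-Lipschitzness of $\Psi_s$ then converts an $\eta$-error in recovering $f_E$ into an $\eta$-error in the estimate, so one only needs to drive the recovery errors below $\epsilon/4$. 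Making this robust recovery work, and keeping the candidate family at size $k^{\poly(s/\epsilon)}$ rather than exponential in $k$, is the main technical obstacle — it requires a careful quantitative version of the junta-detection machinery and tight control of how errors in identifying $E$ propagate into the correlation estimate.

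Finally one assembles the algorithm: \textsf{Correlation-smooth-junta} draws $m=k^{\poly(s/\epsilon)}$ Gaussian points and queries $f$ on them; forms empirical estimates of the low-degree Hermite tensors of $f$ and, from them, the $k^{\poly(s/\epsilon)}$-size family of candidate degree-$\le D$ $k$-linear junta profiles of Step~3; for each candidate it estimates the induced correlation from the same samples (union bound over the family) and applies an $\epsilon/4$-approximate evaluation of $\Psi_s$; and it outputs the maximum. The four error terms — low-degree truncation, restriction to the candidate family, sampling, and the approximate $\Psi_s$ — each contribute at most $\epsilon/4$, giving an $\epsilon$-accurate estimate of $\rho_{\R^n,k,s}(f)$ with $k^{\poly(s/\epsilon)}$ non-adaptive queries, independent of $n$. (Theorem~\ref{thm:main-1} then follows via the identity $\Pr[f\ne g]=(1-\langle f,g\rangle)/2$ for $\bits$-valued functions.)
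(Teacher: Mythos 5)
Your overall architecture (smooth/truncate, reduce to a low-dimensional search, enumerate candidates, estimate correlations, take a max) matches the paper's, and your low-degree Hermite truncation of $s$-smooth functions is indeed how the paper builds its net of candidate functions (Lemma~\ref{lem:cover-for-smooth}). But the step you yourself flag as ``the main technical obstacle'' --- producing, with query complexity independent of $n$, a family of $k^{\poly(s/\epsilon)}$ candidate subspaces/profiles such that some candidate attains the supremum over all $k$-dimensional $E \subseteq \R^n$ up to $\epsilon/4$ --- is exactly the content of the theorem, and your sketch of it does not go through. You propose extracting candidates by ``spectral truncation of the estimated low-degree Hermite tensors of $f$,'' but the degree-$d$ Hermite tensor of $f$ on $\R^n$ has $n^d$ entries; neither estimating nor storing it is dimension-independent, and the asserted ``bounded effective rank at scale $\epsilon\, c_\ell$'' is not justified ($c_\ell$ is not even a parameter of Theorem~\ref{thm:main-2}, which is an estimation statement rather than a promise problem). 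Without a concrete mechanism here, the supremum over $E$ still ranges over a set whose covering number grows with $n$, and the claimed query complexity is unsupported.

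The paper's resolution is genuinely different and is the heart of the argument. It first replaces $f$ by $\fsm = P_t f$, which is Lipschitz, and forms the gradient covariance $A = \frac1M \sum_j \nabla\fsm(\bx_j)\nabla\fsm(\bx_j)^t$ at $M = \poly(k/\epsilon)$ random points. Theorem~\ref{lem:idealized} (via the matrix concentration bound of Theorem~\ref{thm:random-sampling} combined with the Gaussian Poincar\'e inequality in Lemma~\ref{lem:completeness}) shows that for the span $E$ of the top eigenvectors of $A$, $|\E[\fsm \cdot \calA_E h] - \E[\fsm\cdot h]| \le \sqrt{k\eta}$ simultaneously for \emph{every} linear $k$-junta $h$; this is the structural fact that collapses the supremum over subspaces of $\R^n$ to a single $M$-dimensional subspace. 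Crucially, $E$ is accessed only implicitly: the algorithm estimates the $M\times M$ Gram matrix of gradients via \textsf{Compute-inner-product}, and a stability analysis (Lemmas~\ref{lem:almost-isometry}, \ref{lem:almost-same-subspace} and \ref{lem:pseudoinverse-stab}) controls how errors in this Gram matrix propagate to the subspace and to the correlations. Only then do the nets over $k$-dimensional subspaces of $\R^M$ and over $s$-smooth functions on $\R^k$ (Proposition~\ref{prop:net-existence-1}) enter, together with the averaged class $\calC^\ast$ to handle juntas whose defining subspace is not contained in the recovered one. Your proposal is missing this entire mechanism, so as written it does not constitute a proof.
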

In particular, Theorem~\ref{thm:main-1} follows as a simple corollary of Theorem~\ref{thm:main-2}. 

\subsubsection{List decoding the linear-invariant structure.}
Given the previous theorem it is natural to ask for more, i.e., not just test if the function is a linear-junta but also find a junta in number of queries that depends only on $k$ and $s$ (but not on $n$) that has almost maximal correlation with $f$.
In other words, the goal is to find, with query complexity independent of $n$, a function $g: \mathbb{R}^k \rightarrow \{-1,1\}$ such that there exists a {projection} matrix $A: \mathbb{R}^n \rightarrow \mathbb{R}^k$ and such that the correlation between $f$ and
$g(A x)$ is at least ${\rho}_{\mathbb{R}^n, k, s}(f) - \epsilon$.

In the case where $f$ is a linear $k$-Junta with bounded surface area, i.e., ${\rho}_{\mathbb{R}^n, k, s}(f) = 1$, \cite{DMN19a} provided such an algorithm with query complexity that is exponential in $k$. In the noisy case, we could have multiple different Juntas that have optimal or close to optimal correlation with $f$. 
Ideally we would like to find all those functions, which can be thought of as ``list decoding" the Juntas that are hidden in $f$.

There is some subtlety in the meaning of ``all'' here; for example, if $f$ is
a linear 1-Junta with some added noise and we set $k=2$, then there can be
a huge number (i.e. growing quickly with $n$) of linear 2-Juntas that are
highly correlated with $f$, just because there is a lot of flexibility
in choosing the second direction and defining the function in that direction.
For this reason, rather than identifying all highly-correlated linear Juntas,
we only identify their averages on a set of interesting directions; for
a subspace $E$ of $\R^n$ and a function $g: \R^n \to \R$, let $\calA_E g$ 
be obtained from $g$ by averaging over the  directions
orthogonal to $E$ (see Definition~\ref{def:AE} for a full definition).

\begin{restatable}{theorem}{learnall}
  \label{thm:main-4}
  There is an algorithm \textsf{Learn-all-invariant-structures} which, given
  parameters $\rho, \epsilon>0$, junta arity $k$,  smoothness parameter $s$ and
  oracle access to $f: \mathbb{R}^n \rightarrow [-1,1]$, outputs a set $\calG$
  of functions $\mathbb{R}^k \rightarrow [-1,1]$ so that the following hold:
  \begin{itemize}
      \item
          for every $\hat g \in \calG$ there exists
  an orthonormal set of vectors $w_1, \ldots, w_k \in \mathbb{R}^n$ such that
  \[
     \big| \mathbf{E}[f(x) \hat g(\langle w_1, x\rangle, \ldots, \langle w_k, x \rangle)] -\rho| = O(\epsilon),
  \]
  and
    \item
   for
  every linear $k$-Junta $g: \R^n \to [-1, 1]$ with
  $\big|\mathbf{E}[f(x) g(x)] -\rho\big| \le \epsilon$,
  there exists a function $\hat g \in \calG$ and
  an orthonormal set of vectors $w_1, \ldots, w_k \in \mathbb{R}^n$ such that,
        with $E = \spn\{w_1, \dots, w_k\}$, we have
        \[
            \mathbf{E}[\big((\Avg_E g)(x) - \hat g(\langle w_1, x\rangle, \ldots, \langle w_k, x \rangle)\big)^2]
            \le O(\epsilon).
        \]
        Additionally, 
        \[
 \mathbf{E} [f(x) g(x)] \approx_{O(\epsilon)} \mathbf{E} [f(x) (\Avg_E g)(x)]. 
 \]
  \end{itemize}
  The query complexity of the algorithm is $k^{\poly(s/\epsilon)}$.
\end{restatable}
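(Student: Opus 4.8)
The plan is to reduce Theorem~\ref{thm:main-4} to Theorem~\ref{thm:main-2} (or rather, to the machinery behind it) by observing that the correlation estimator of \textsf{Correlation-smooth-junta} must, internally, be producing candidate juntas, and that we can enumerate them. Concretely, the algorithm behind Theorem~\ref{thm:main-2} works by sampling $f$ on a cloud of $N = k^{\poly(s/\epsilon)}$ points, and then searching over a net of possible ``low-dimensional structures'' for the one maximizing correlation. The first step is to open up that search: instead of returning only the argmax, we return \emph{every} element of the net whose empirical correlation with $f$ (estimated from the same sample cloud) is at least $\rho - O(\epsilon)$. Each such net element is, by construction, a function $\hat g : \R^k \to [-1,1]$ together with an implicitly-specified $k$-dimensional subspace; the net is fine enough that its empirical correlation estimates are within $\epsilon$ of the true correlations with the corresponding linear junta, which immediately gives the first bullet point (the existence direction). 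The set $\calG$ has size at most $|\Net| = k^{\poly(s/\epsilon)}$, so the query complexity bound is inherited for free.

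The second step — the completeness direction — is the substantive one. Given an arbitrary linear $k$-junta $g$ with $|\E[f g] - \rho| \le \epsilon$, I need to produce a net element $\hat g$ that approximates $\Avg_E g$ in $L^2$ for the \emph{right} subspace $E = \spn\{w_1,\dots,w_k\}$. The reason we can only hope to recover $\Avg_E g$ rather than $g$ itself is exactly the flexibility issue flagged in the introduction: a highly-correlated $k$-junta may use ``irrelevant'' directions on which $f$ has no structure, and these are invisible to any bounded-query algorithm. The key analytic input should be a statement of the form: if $g$ is a linear $k$-junta correlated with $f$ at level $\rho$, then there is a $k$-dimensional subspace $E$ — detectable from bounded queries, i.e.\ lying near the span of the ``influential'' directions of $f$ as probed by the sample cloud — such that $\Avg_E g$ is itself a (near-)$s$-smooth $k$-linear junta with essentially the same correlation with $f$. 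The last displayed inequality in the theorem, $\E[f g] \approx_{O(\epsilon)} \E[f (\Avg_E g)]$, is precisely this correlation-preservation claim; it should follow because $f$'s relevant structure lives in $E$ (so averaging $g$ over directions orthogonal to $E$ does not destroy its overlap with $f$), combined with the fact that $\Avg_E$ is a contraction in $L^2$ and $f$ is $[-1,1]$-valued. Smoothness of $\Avg_E g$ follows from smoothness of $g$ since $\Avg_E$ is an averaging (conditional-expectation) operator and $s$-smoothness is preserved (or only mildly degraded) under such operators — this is the place where I expect to invoke whatever regularity lemma or ``$\Avg_E$ preserves $s$-smoothness'' fact the paper proves alongside Definition~\ref{def:AE} and Definition~\ref{def:s-smooth}.

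Once $\Avg_E g$ is known to be an $s$-smooth (or $(1+o(1))s$-smooth) $k$-linear junta, its $k$-dimensional structure $(E, \text{function on }E)$ lies within $O(\epsilon)$ in $L^2$ of some element of the net $\Net$ — this is the defining property of the net used in Theorem~\ref{thm:main-2}. Call that element $\hat g$ (on $\R^k$) with associated orthonormal frame $w_1,\dots,w_k$ spanning (a subspace close to) $E$. Then $\E[(\Avg_E g)(x) - \hat g(\la w_1,x\ra,\dots,\la w_k,x\ra))^2] \le O(\epsilon)$ is immediate, and by Cauchy–Schwarz the empirical correlation of $\hat g$ with $f$ is within $O(\epsilon)$ of $\E[f \cdot \Avg_E g] \approx_{O(\epsilon)} \E[fg] \approx_\epsilon \rho$, so $\hat g$ actually passes the threshold in Step~1 and is genuinely placed into $\calG$. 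This closes the loop. The main obstacle, I expect, is the detectability of $E$: one must argue that the ``interesting directions'' identified by the sample-cloud procedure of Theorem~\ref{thm:main-2} capture (up to small rotation and up to the dimension budget $k$) the directions on which $\Avg_E g$ varies — i.e.\ that the algorithm's internal net is rich enough to contain approximants of \emph{every} correlated structure, not just the optimal one. This is really a robustness/stability statement about the dimension-reduction step, and handling the interplay between the approximate subspace $\spn\{w_i\}$ and the true $E$ (so that the errors from ``wrong subspace'' and ``wrong function on the subspace'' both stay $O(\epsilon)$) is where the care is needed; everything else is bookkeeping with contractions, Cauchy–Schwarz, and union bounds over the net.
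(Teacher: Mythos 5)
Your proposal matches the paper's proof: the paper likewise runs the same implicit-projection and net construction as in Theorem~\ref{thm:main-2-Lipschitz}, returns every net element whose estimated correlation exceeds $\rho - O(\epsilon)$, and proves completeness exactly as you describe --- the gradient-PCA structural result (Lemma~\ref{lem:implicit-projection}, which holds uniformly over \emph{all} smooth linear $k$-juntas, not just the optimizer) gives $\mathbf{E}[f g] \approx \mathbf{E}[f\,\calA_{\hat E} g]$, smoothness is preserved by $\calA_{\hat E}$ (Proposition~\ref{prop:smooth-inclusion}), and the averaged function is then $\epsilon$-covered by the net. The only ingredient you gesture at rather than name is the ``averaged class'' $\calC^\ast$ used to guarantee net coverage of $\calA_{\hat E} g$, but for the class of all $s$-smooth juntas this is vacuous, so your outline is essentially the paper's argument.
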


Informally, the theorem states that it is possible to find the ``linear-invariant" structures (i.e., the structure up to unitary transformation) of all Juntas
that are almost optimally correlated with
$f$ in number of queries that depends on $s$ and $k$.  We note that one cannot hope to output the relevant directions $w_1, \ldots, w_k$ explicitly as even describing these directions will require $\omega(n)$ bits of information and thus, at least those many queries.

The significance of Theorem~\ref{thm:main-4} is related to one of the main
difficulties in tolerant testing: there can be a large number of linear Juntas
having almost optimal correlation with $f$. This in in contrast with the usual
testing problem, because if $f$ is in fact a linear $k$-Junta then there is (obviously)
only one linear $k$-Junta that is equal to $f$.

Even in the noiseless case, Theorem~\ref{thm:main-4} improves on the the results of 
\cite{DMN19a} which provided an algorithm for learning the linear structure with query complexity that is exponential in $k$. 
We note that in~{\cite{DMN19a} it was incorrectly stated (without proof) that the exponential dependence on $k$ is necessary.

Thanks to Theorem~\ref{thm:main-4}, we are also able to tolerantly test certain
\emph{subclasses} of linear Juntas; this is significant because in general
the testability of a class does not imply the testability of a subclass.

\begin{definition}~\label{def:induced-class}
    Let $\calC$ be any collection of functions mapping $\R^k$ to $\{-1, 1\}$.
    For any $n \in \N$, define the \emph{induced class of $\calC$} by
    \[
        \Ind(\calC)_n = \{f: \exists g \in \calC \emph{ and orthonormal vectors $w_1, \dots, w_k$
        such that } f(x) = g(\inr{w_1}{x}, \dots, \inr{w_k}{x})\}.
    \]

\end{definition}

Note that every $f \in \Ind(\calC)_n$ is a linear $k$-Junta. As an example, if $\calC$ is the class of intersections of $k$-halfspaces over $\mathbb{R}^k$, then $\Ind(\calC)_n$ is the class of intersections of $k$-halfspaces over $\mathbb{R}^n$. 

 
\begin{restatable}{theorem}{ctest}~\label{thm:C_test}
    Let $\calC$ be a collection of functions mapping $\R^k$ to $[-1, 1]$ such that each $f \in \mathcal{C}$ is $s$-smooth. 
  There is an algorithm \textsf{Robust-$\calC$-test} which given parameters $1/2> c_u>c_\ell>0$, junta arity $k$, surface area parameter $s$, and oracle access to $f: \mathbb{R}^n \rightarrow \{-1,1\}$, distinguishes between the following cases:
  \begin{enumerate}
      \item There is a linear-$k$-junta $g \in \Ind(\calC)_n$ with surface area at most $s$ such that $\mathsf{dist}(f,g) \le c_\ell$.
      \item For all linear-$k$-juntas $g \in \Ind(\calC)_n$, $\mathsf{dist}(f,g) \ge c_u$.
  \end{enumerate}
  The query complexity of the tester is $k^{\mathsf{poly}(s/\epsilon)}$ where
  $\epsilon = c_u - c_\ell$, and the tester makes non-adaptive queries.
\end{restatable}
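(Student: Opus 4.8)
The plan is to reduce the subclass-testing problem to the list-decoding guarantee of Theorem~\ref{thm:main-4}. As in the passage from Theorem~\ref{thm:main-2} to Theorem~\ref{thm:main-1}, tolerant testing of a class is equivalent to estimating $\sup_{g \in \Ind(\calC)_n} \mathbf{E}[f(x)g(x)]$ to within additive error $\epsilon/4$, say: if that quantity is $\ge 1 - 2c_\ell$ we are (up to the usual conversion between $L^2$ distance and correlation for Boolean-valued $g$, namely $\mathsf{dist}(f,g)^2 = 2(1 - \mathbf{E}[fg])$ when $f$ is $\pm1$-valued and $g$ is $[-1,1]$-valued) in case~1, and if it is $\le 1 - 2c_u$ we are in case~2; the gap $\epsilon = c_u - c_\ell$ translates into a constant-factor gap in the correlation which we can detect. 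So it suffices to approximate $\rho^\calC(f) \coloneqq \sup_{g\in\Ind(\calC)_n}\mathbf{E}[fg]$ with $k^{\poly(s/\epsilon)}$ queries.

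First I would run \textsf{Learn-all-invariant-structures} from Theorem~\ref{thm:main-4} with a grid of target values $\rho \in \{0, \epsilon', 2\epsilon', \ldots\}$ for a suitably small $\epsilon' = \Theta(\epsilon)$, collecting the union $\calG$ of all output function-lists; this costs $k^{\poly(s/\epsilon)}$ queries total. The output is a finite set of candidate functions $\hat g : \R^k \to [-1,1]$, each of which is (up to unitary rotation and up to $O(\epsilon)$ in $L^2$) the $E$-average $\Avg_E g$ of some linear $k$-junta $g$ that is near-optimally correlated with $f$. Next, for each $\hat g \in \calG$ I would check whether $\hat g$ is close (in $L^2$) to some $g_0 \in \calC$; since $\calC$-membership is a property of a function on $\R^k$ with no dependence on $n$, and since we only need to decide this approximately, this is a bounded computation not involving further oracle queries. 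The key structural point, which I expect requires a short argument, is that $\calC$ (and hence the relevant averages) is \emph{closed under $E$-averaging up to smoothness}: if $g \in \Ind(\calC)_n$ realizes correlation $\rho$ with $f$, then $\Avg_E g$ — for $E$ the span of the $w_i$ produced by Theorem~\ref{thm:main-4} — has correlation $\rho \pm O(\epsilon)$ with $f$ (this is exactly the ``Additionally'' clause of Theorem~\ref{thm:main-4}), and since $g$ itself is already a function of $\inr{w_1}{x},\dots,\inr{w_k}{x}$, the average $\Avg_E g$ equals $g$ after identifying $E$ with $\R^k$; hence the corresponding $\hat g$ is within $O(\epsilon)$ of an element of $\calC$. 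My estimate is then $\hat\rho^\calC(f) = \max\{\,\widehat{\mathbf{E}[f \hat g]} : \hat g \in \calG,\ \hat g\ \text{approximately in}\ \calC\,\}$, where $\widehat{\mathbf{E}[f\hat g]}$ is a Monte-Carlo estimate of $\mathbf{E}[f(x)\hat g(\inr{w_1}{x},\dots,\inr{w_k}{x})]$ obtained from $\poly(1/\epsilon)\log|\calG|$ fresh samples using the witnessing directions $w_i$ from Theorem~\ref{thm:main-4}.

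For correctness: completeness follows because the true optimizer $g^\star \in \Ind(\calC)_n$ appears (via its $E$-average, which equals $g^\star$ up to rotation) as some $\hat g \in \calG$ with $\mathbf{E}[f\hat g] \ge \rho^\calC(f) - O(\epsilon)$, and this $\hat g$ passes the approximate-$\calC$ test; soundness follows because every $\hat g$ that passes the approximate-$\calC$ test is $O(\epsilon)$-close to a genuine $g_0 \in \calC$, which is an $s$-smooth element of $\Ind(\calC)_n$, so $\mathbf{E}[f \hat g] \le \mathbf{E}[f g_0] + O(\epsilon) \le \rho^\calC(f) + O(\epsilon)$. The main obstacle, and the place where care is needed, is the first bullet of Theorem~\ref{thm:main-4} guaranteeing \emph{correlation} $\approx \rho$ for every output $\hat g$ but not guaranteeing $\hat g \in \calC$ — so I must ensure that the ``approximately in $\calC$'' filter does not throw away the optimizer while also not admitting a spurious $\hat g$ of correlation above $\rho^\calC(f)$; this is handled by the two quantitative clauses of Theorem~\ref{thm:main-4} (the $L^2$-closeness of $\hat g$ to $\Avg_E g$ and the closeness of $\mathbf{E}[fg]$ to $\mathbf{E}[f \Avg_E g]$), together with the observation that $s$-smoothness and $\Ind(\calC)$-membership are both preserved under the $E$-averaging operation when $E$ already contains the junta's relevant subspace. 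Finally, tuning all the hidden constants $\Theta(\epsilon)$, picking the grid spacing $\epsilon'$, and invoking a union bound over $|\calG| = k^{\poly(s/\epsilon)}$ Monte-Carlo estimates gives the stated query complexity $k^{\poly(s/\epsilon)}$ with non-adaptive queries (the queries of Theorem~\ref{thm:main-4} are non-adaptive, and the final correlation estimates use fresh independent Gaussian samples).
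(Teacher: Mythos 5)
Your reduction to Theorem~\ref{thm:main-4} has a genuine gap in the completeness direction, located exactly at the ``approximately in $\calC$'' filter. You justify keeping the optimizer $g^\star$ in the filtered list by asserting that ``$g$ itself is already a function of $\inr{w_1}{x},\dots,\inr{w_k}{x}$, [so] the average $\Avg_E g$ equals $g$ after identifying $E$ with $\R^k$,'' i.e.\ that the subspace $E$ produced by the algorithm contains the relevant subspace of $g^\star$. In the tolerant setting this is false: the gradient-based subspace $\hat E$ is only guaranteed to preserve the \emph{correlation} $\mathbf{E}[\fsm\, g^\star]$ under averaging (Theorem~\ref{lem:idealized} and Lemma~\ref{lem:implicit-projection} control $|\mathbf{E}[\fsm\cdot\calA_E g^\star]-\mathbf{E}[\fsm\cdot g^\star]|$, not $\|\calA_E g^\star-g^\star\|_{L^2}$), and when $f$ has correlation $c<1$ with $g^\star$ there is no reason for $\hat E$ to capture the directions of $g^\star$ that $f$ is insensitive to. Consequently the candidate $\hat g$ that Theorem~\ref{thm:main-4} associates to $g^\star$ is $L^2$-close to $\Avg_E g^\star$, which in general does \emph{not} lie near $\calC$. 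The paper's own example makes this concrete: if $\calC$ is the class of halfspaces and the optimal direction $v$ is only approximately contained in $\hat E$, then $\Avg_{\hat E}\sgn(\inr{v}{\cdot})$ is a Gaussian-c.d.f.\ ramp $\Phi(\inr{a}{\cdot})$, which your filter would reject. So the filter can discard every near-optimal candidate and your estimate $\hat\rho^{\calC}(f)$ can be far too small.

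The paper's proof repairs exactly this point by introducing the \emph{averaged class} $\calC^\ast$ (Definition~\ref{def:class-averaging}): all functions obtained by embedding members of $\calC$ along an arbitrary $k$-dimensional subspace of $\R^{2k}$ and averaging back down. Claim~\ref{clm:class-averaging} shows that the set of $\hat E$-averages of functions in $\Ind_n(\calC)$ is precisely $\Ind_{\hat E}(\calC^\ast)$, which gives both directions at once: completeness because $\Avg_{\hat E}g^\star\in\Ind_{\hat E}(\calC^\ast)$ with nearly the same correlation, and soundness because every element of $\Ind_{\hat E}(\calC^\ast)$ is itself the average of some member of $\Ind_n(\calC)$ and hence cannot beat $\rho_{\R^n,\calC}(f)$ by more than $O(\epsilon)$. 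The paper then builds an explicit net for $\Ind_m(\calC^\ast)$ (Proposition~\ref{prop:net-existence-1}) and maximizes the estimated correlation over that net, rather than filtering the output of the list-decoder; your argument can likely be salvaged by replacing your $\calC$-membership test with a $\calC^\ast$-membership test and supplying the analogue of Claim~\ref{clm:class-averaging} for the soundness step. A secondary issue: the directions $w_1,\dots,w_k$ are not output explicitly by Theorem~\ref{thm:main-4} (the paper notes this is information-theoretically impossible in $o(n)$ queries), so your final Monte-Carlo correlation estimates must go through the implicit-projection machinery (\textsf{Project-on-gradient}) rather than ``using the witnessing directions $w_i$'' directly; this is fixable but needs to be said.
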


As an immediate corollary, this implies that there is a fully noise tolerant tester for intersections of $k$-halfspaces with query complexity $k^{\mathsf{poly}(\log k/\epsilon)}$. Previously, no noise tolerant tester was known for even a single halfspace~\cite{MORS:10}.

\subsection{Techniques}\label{sec:techniques}
For ease of exposition, here we just explain the technique for proving Theorem~\ref{thm:main-2}. The high level proof technique for the other results is essentially the same albeit sometimes with added technical complications. 
The techniques of the current paper build on those of \cite{DMN19a}. We briefly recap
the main ideas of~\cite{DMN19a}, restricted for now to the non-tolerant setting:
\begin{enumerate}
  \item[I.] If we sample $T = \mathsf{poly}(k/\epsilon)$ random points $\bx_1, \ldots, \bx_T$  from the standard Gaussian measure $\gamma_n$ and let $E = \spann (\nabla f(\bx_1), \ldots, \nabla f(\bx_T))$, then if $f$ is a linear $k$-Junta then with high probability,  $f$ has correlation $1-\epsilon$ with some linear $k$-junta defined on the space $E$.
  \item[II.]  For each $x,z$, it is possible to estimate, in number of samples polynomial in $k$, quantities  such as $\langle z, \nabla f(x) \rangle$ and $\langle \nabla f(x_i) , \nabla f(x_j) \rangle$ up to small error. Thus,  for a randomly chosen $\bz \sim \gamma_n$, we can (implicitly, in a sense
        to be made precise later) compute the orthogonal projection of $\bz$ on $E$. (Note that a naive estimation of $\nabla f(\bx)$, or even $\langle \nabla f(\bx_1), \nabla f(\bx_2)\rangle$, requires a number of samples that depends on $n$.)
\end{enumerate}

Observe that the implicit projection allows \cite{DMN19a} to effectively reduce the dimension of the ambient space to $T=\mathsf{poly}(k/\epsilon)$, which is independent of $n$.
We then take an $\epsilon$-net of linear $k$-juntas over $E$ with surface area $s$.  The size of this net depends only on $s$, $k$ and $\epsilon$. For each function in the net, one can estimate its distance to $f$; by
iterating over all functions in the net, one can check if $f$ is close to a linear $k$-junta.
This last step is different from the one in~\cite{DMN19a}, and in fact it is slightly worse.
By following the ideas in the current paper, one can show that it gives a tester for
with query complexity of $k^{O(s^2/\epsilon^2)}$. The advantage of the modification, however,
is that it yields a method that is more robust to noise.


\subsubsection*{Adding tolerance.}
In adapting the outline above to the setting of tolerant testing, the
main challenge is to imitate step I above.
Our main structural result roughly shows that if $f$ has correlation $c$
with some linear $k$-junta of surface area at most $s$, then with high probability $f$ is at least $c-\eps$ correlated with an $s$-smooth linear $k$-junta defined on $E$.
In fact, we need to define $E$ more carefully than what is outlined above, and a good error
analysis is crucial. If we were to combine our new structural result with a naive error analysis, it would
give a query complexity that is exponential in $\poly(k)$.

The proof of our structural result is non-trivial. At the intuitive level it is related to the idea of using SVD for PCA. In our case, we have a function, rather than a collection of data, and the right geometric information is encoded by gradients (of a smoothed version of this function). The procedure of using SVD to extract informative directions from the data can be thought of as ``{\em Gradient Based} PCA". The proof that this procedure actually extracts the relevant dimensions requires combining linear algebraic and Poincare style geometric estimates in just the right way.
Another challenge comes from the fact that gradients are only approximate due to sampling effect. We use results from random matrix theory to control the effect of sampling.

The methodology of Gradient Based PCA also allows us to improve on the results of  \cite{DMN19a} in the noiseless case for finding an approximation of the Junta. This has to do with the fact that the results of \cite{DMN19a} used a more naive Gram–Schmidt based process to extract the linear structure which resulted in exponential query complexity, compared with the polynomial query complexity we achieve in the current work.



Another key new ingredient of this current work is the net argument outlined above.
We show that
the class of $s$-smooth linear $k$ juntas has an $\epsilon$-net of size
$\exp \exp ((s^2 \log k)/\epsilon^2)$. For each function in the net, we can use the implicit projection algorithm to compute the correlation between this function and $f$ up to error $\epsilon$.
The maximum of these correlations gives a good estimate of the best correlation
between $f$ and any linear $k$-junta with surface area $s$.
This concludes the proof sketch of Theorem~\ref{thm:main-2}.

We note that there is a high-level similarity between the current proof and the proof that Boolean juntas are tolerant testable~\cite{de2019junta}.
Both strategies are based on oracle access to influential ``directions" followed by a search
for juntas depending only on those influential directions.
In the Boolean case, the ``directions" are influential variables, while here the directions are  given by gradients of the function $f$.
Note, however, that the Boolean case is easier, since the coordinates on the Boolean cube are automatically orthogonal, while in the continuous setup, ``relevant directions" as sampled from data are often not orthogonal and indeed can be close to parallel. This is one of the major reasons we needed to introduce and analyze the methodology of gradient based PCA.

\paragraph{Related work:} Besides being related to the long line of work on junta testing, the current work is also connected to the rich area of learning and testing of threshold functions. In particular, an immediate corollary of Theorem~\ref{thm:C_test} gives a fully noise tolerant tester for any function of $k$-halfspaces over the Gaussian space. Despite prior work on testing of halfspaces~\cite{MORS:10, MORS:09random, ron2015exponentially}, until this work, no non-trivial noise tolerant tester was known even for a single halfspace. 

Finally, we remark that  the notion of noise in property testing (including this paper) is the so-called \emph{adversarial label noise}~\cite{KSS:94}. This  is   stronger than many other noise models in literature such as the \emph{random classification noise}~\cite{KSS:94} and \emph{Massart noise}~\cite{massart2006risk}. Both these models are important from the point of view of learning theory -- in particular, halfspaces (and polynomial threshold functions) are known to be efficiently learnable~\cite{BFK+:97, diakonikolas2019distribution} in both these models of noise even when the background distribution is arbitrary. On the other hand, for arbitrary background distributions, halfspaces are hard to learn in the adversarial label noise model~\cite{GR:06, daniely2016complexity}. In contrast, the tester in the current paper is in the adversarial label noise model but works only when the background distribution is the Gaussian. This discussion raises the intriguing possibility that halfspaces (and more generally linear juntas) can be tested in the \emph{distribution free model}~\cite{HalevyKushilevitz:07} with weaker models of noise such as the Massart noise.

\section{Preliminaries}
In this section, we list some useful definitions and
technical preliminaries. We begin with some definitions and properties of projections and averages.

\begin{definition}~\label{def:AE}
    For a subspace $E$ of $\mathbb{R}^n$, we denote by $\Pi_E: \mathbb{R}^n \rightarrow \mathbb{R}^n$
    the orthogonal projection onto $E$.
    For a subspace $E$ and $f: \mathbb{R}^n \rightarrow \mathbb{R}$, we define the operator $\calA_E$ as
    $\calA_Ef(x) = \opE_{\bz \sim \gamma_n} [f(\Pi_E x + \Pi_{E^{\perp}} \bz)]$,
    where $\gamma_n$ is the standard $n$-dimensional Gaussian measure.

    Finally, for any subspace $E$, we define $\Junta_E = \{f :$  if  $\Pi_E z = \Pi_E x$, then $f(x) = f(z)\}$.
\end{definition}

One way to understand the operator $\mathcal{A}_E$ is that it averages $f$  on the directions orthogonal to the subspace $E$.
The next lemma lists some useful properties of the operators $\Pi_E$ and
$\calA_E$. Let $C_b^1(\mathbb{R}^n)$ be the class of differentiable functions
$f$ such that $f(x)$ and $\nabla f(x)$ are bounded.
\begin{lemma}\label{lem:avg-props}
    For any $f \in C_b^1(\mathbb{R}^n)$, any subspaces $E \subset E' \subset \mathbb{R}^n$, and any $x \in \mathbb{R}^n$,
    the following hold:
    \begin{enumerate}
        \item \label{it:depends-on-E}
              If $\Pi_E z = \Pi_E x$ then $\Avg_E f(x) = \Avg_E f(z)$.
              In other words, $\Avg_E f \in \Junta_E$.
        \item \label{it:gradient-of-average}
              $ \displaystyle
                  (\nabla \Avg_E f)(x) = \E_{\bz}[\Pi_E \nabla f(\Pi_E x + \Pi_{E^\perp} \bz)]
              $
        \item \label{it:average-tower}
              $ \displaystyle
                  (\Avg_E \Avg_{E'} f)(x) = (\Avg_E f)(x)
              $
        \item \label{it:average-inner-product}
              For all $g \in \Junta_E$,
              $\displaystyle
                  \E_{\bx} [g (\bx) \Avg_E f(\bx)] = \E_{\bx}[g(\bx) f(\bx)]
              $
        \item \label{it:self-adjoint}
              For all $g \in L^2(\gamma)$,
              $\displaystyle
                  \E_{\bx} [(\Avg_E f) (\bx) g(\bx)] = \E_{\bx}[f (\bx)(\Avg_E g)(\bx)].
              $
    \end{enumerate}
\end{lemma}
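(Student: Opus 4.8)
The plan is to derive all five items from a single observation: if $\bX \sim \gamma_n$ then $\Pi_E \bX$ and $\Pi_{E^\perp}\bX$ are independent Gaussian vectors (supported on $E$ and $E^\perp$), since $E \perp E^\perp$ and the standard Gaussian has independent coordinates in any orthonormal basis adapted to $\R^n = E \oplus E^\perp$. Consequently, because the dummy variable $\Pi_{E^\perp}\bz$ in the definition of $\calA_E$ has the same law as $\Pi_{E^\perp}\bX$ and is independent of $\Pi_E x$, we obtain the conditional-expectation description
\[
  \calA_E f(x) = \E\big[f(\bX) \,\big|\, \Pi_E \bX = \Pi_E x\big],
\]
i.e.\ $\calA_E$ is the orthogonal projection in $L^2(\gamma_n)$ onto the functions depending on $x$ only through $\Pi_E x$, which is exactly $\Junta_E \cap L^2(\gamma_n)$. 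I would state and prove this first; the remaining items are then short.

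Items \ref{it:depends-on-E} and \ref{it:average-tower}: Item \ref{it:depends-on-E} is immediate from the definition, since $\calA_E f(x)$ depends on $x$ only through $\Pi_E x$, so it lies in $\Junta_E$. For item \ref{it:average-tower}, $E \subseteq E'$ implies $\Junta_E \subseteq \Junta_{E'}$ (equivalently $\sigma(\Pi_E\bX) \subseteq \sigma(\Pi_{E'}\bX)$), so the tower property of conditional expectation gives $\calA_E \calA_{E'} f = \calA_E f$. If one prefers an explicit computation: using $\Pi_{E'}\Pi_E = \Pi_E$ and the orthogonal decomposition $E' = E \oplus (E' \cap E^\perp)$, one checks that $(\calA_{E'} f)(\Pi_E x + \Pi_{E^\perp}\bz)$ involves an independent Gaussian average over $E' \cap E^\perp$, which combines with the outer average over $E'^\perp$ into one average over $E^\perp = (E' \cap E^\perp) \oplus E'^\perp$, yielding $\calA_E f(x)$.

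Item \ref{it:gradient-of-average}: differentiate $\calA_E f(x) = \E_{\bz}[f(\Pi_E x + \Pi_{E^\perp}\bz)]$ under the expectation. By the chain rule, $\nabla_x\, f(\Pi_E x + \Pi_{E^\perp}\bz) = \Pi_E \nabla f(\Pi_E x + \Pi_{E^\perp}\bz)$, using that $\Pi_E$ is symmetric; interchanging $\nabla$ and $\E_\bz$ is legitimate because $f \in C_b^1(\R^n)$, so $\nabla f$ is bounded and dominated convergence applies to the difference quotients. This is exactly the claimed identity, and it is the one step where the $C_b^1$ hypothesis is genuinely needed.

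Items \ref{it:average-inner-product} and \ref{it:self-adjoint}: these express that $\calA_E$ is a self-adjoint projection. For item \ref{it:average-inner-product}, let $g \in \Junta_E$ and write $\bx = \bu + \bv$ with $\bu = \Pi_E\bx$, $\bv = \Pi_{E^\perp}\bx$ independent; then $g(\bx) = g(\bu)$ and, replacing the independent copy $\Pi_{E^\perp}\bz$ by $\bv$, $\E_\bx[g(\bx)\calA_E f(\bx)] = \E_{\bu,\bv}[g(\bu)f(\bu+\bv)] = \E_\bx[g(\bx)f(\bx)]$. For item \ref{it:self-adjoint}, with $\bu,\bv$ as above and $\bv'$ an independent copy of $\bv$, both $\E_\bx[(\calA_E f)(\bx)g(\bx)]$ and $\E_\bx[f(\bx)(\calA_E g)(\bx)]$ equal $\E_{\bu,\bv,\bv'}[f(\bu+\bv')g(\bu+\bv)]$, since $\bv$ and $\bv'$ are i.i.d.\ and independent of $\bu$ and may be swapped. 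Apart from the differentiation under the integral in item \ref{it:gradient-of-average}, everything is bookkeeping with the product structure of $\gamma_n$ under $\R^n = E \oplus E^\perp$, so I do not anticipate a real obstacle. $\qed$
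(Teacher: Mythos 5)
Your proposal is correct and follows essentially the same route as the paper: the product structure of $\gamma_n$ under $\R^n = E \oplus E^\perp$, the distributional identity $\Pi_E \bz + \Pi_{E^\perp}\bz' \sim \bz$, and differentiation under the expectation justified by boundedness of $\nabla f$. The only cosmetic differences are that you package everything under the conditional-expectation viewpoint (which the paper states only as a remark after the lemma) and prove item~\ref{it:self-adjoint} by a direct swap of the i.i.d.\ copies $\bv,\bv'$ rather than by applying item~\ref{it:average-inner-product} twice.
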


Note that parts~\ref{it:depends-on-E} and~\ref{it:average-inner-product}
can be interpreted as saying that $\Avg_E f$ is the orthogonal projection
(in the $L^2(\gamma)$ sense) of $f$ onto $\Junta_E$.

\begin{proof}
    ~\\
    \begin{enumerate}
        \item Part~\ref{it:depends-on-E} is immediate from the definition of $\Avg_E$.

        \item To prove part~\ref{it:gradient-of-average},
              fix $v \in \mathbb{R}^n$. Then
              \[
                  (\Avg_E f)(x) - (\Avg_E f)(x - v) =
                  \E_{\bz}[ f(\Pi_E x + \Pi_{E^\perp} \bz) - f(\Pi_E x - \Pi_E v + \Pi_{E^\perp} \bz)]
              \]
              Replacing $v$ by $hv$ and sending $h \to 0$, we obtain (and there is no
              trouble exchanging the limit and the expectation, because $f$ is Lipschitz)
              \[
                  (\nabla_v \Avg_E f)(x) = \E_{\bz} [\nabla_{\Pi_E v} f (\Pi_E x + \Pi_{E^\perp} \bz)].
              \]
              This proves the second item.

        \item  Part~\ref{it:average-tower} follows from the fact that if $E \subset E'$
              then $\Pi_E \Pi_{E'} z = \Pi_E z$ and $\Pi_{E} \Pi_{(E')^\perp} z = 0$ for every $z$.
              Indeed, if $\bz$ and $\bz'$ are independent standard Gaussian variables then
              \[
                  (\Avg_{E} \Avg_{E'} f)(x)
                  = \E_{\bz, \bz'}[f(\Pi_E(\Pi_{E'} x + \Pi_{(E')^\perp} \bz') + \Pi_{E^\perp} \bz)]
                  = \E[f(\Pi_E x + \Pi_{E^\perp} \bz)]=(\Avg_{E} f)(x).
              \]

        \item   For Item~\ref{it:average-inner-product}, let $\bz$ and $\bz'$ be
              standard Gaussian variables. Since $g \in \Junta_E$, we have
              $g(z) = g(\Pi_E z + \Pi_{E^\perp } z')$. Hence,
              \[
                  \E[g \Avg_E f]
                  = \E_{\bz,\bz'}[g(\bz) f(\Pi_E\bz + \Pi_{E^\perp} \bz')]
                  = \E[g(\Pi_E \bz + \Pi_{E^\perp} \bz') f(\Pi_E \bz + \Pi_{E^\perp} \bz')].
              \]
              Since $\Pi_E \bz + \Pi_{E^\perp} \bz'$ has the same distribution as $\bz$,
              the claim follows.

        \item    Item~\ref{it:self-adjoint} follows from applying claim~\ref{it:average-inner-product} twice:
              \[
                  \E_{\bx} [(\Avg_E f)(\bx) g(\bx)] = \E [(\Avg_E f)(\bx) (\Avg_E g)(\bx)] = \E[f(\bx) (\Avg_Eg)(\bx)].
                  \qedhere
              \]
    \end{enumerate}
\end{proof}

As an immediate consequence of the properties above, we have the following two basic properties
of $\nabla \Avg_E f$:

\begin{claim}~\label{claim:gradient-AE}
    \begin{enumerate}
        \item~\label{it:average-gradient-AE} $
                  \opE_{\bx} [\nabla \Avg_E f(\bx)] = \opE_{\bx} [\Pi_E \nabla f(\bx)]
              $.
        \item~\label{it:variancee-gradient-AE}  $
                  \opE_{\bx} [\Vert \nabla \Avg_E f(\bx) \Vert_2^2] \le \opE_{\bx} [\Vert\Pi_E \nabla f(\bx) \Vert_2^2].
              $
    \end{enumerate}
\end{claim}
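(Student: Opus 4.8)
The plan is to derive both parts of Claim~\ref{claim:gradient-AE} as straightforward consequences of part~\ref{it:gradient-of-average} of Lemma~\ref{lem:avg-props}, which expresses $\nabla \Avg_E f(x)$ as an inner expectation $\E_{\bz}[\Pi_E \nabla f(\Pi_E x + \Pi_{E^\perp}\bz)]$. For the first part, I would take the outer expectation over $\bx \sim \gamma_n$ of this identity, swap the order of the two expectations (justified since $f \in C_b^1$, so $\nabla f$ is bounded and Fubini applies), and observe that $\Pi_E \bx + \Pi_{E^\perp} \bz$ has the same distribution as a standard Gaussian when $\bx, \bz$ are independent standard Gaussians — exactly the distributional fact already used in the proof of part~\ref{it:average-inner-product}. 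Hence $\E_{\bx}[\nabla \Avg_E f(\bx)] = \E_{\bx,\bz}[\Pi_E \nabla f(\Pi_E \bx + \Pi_{E^\perp}\bz)] = \E_{\bx}[\Pi_E \nabla f(\bx)]$, which is the desired equality.

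For the second part, I would again start from $\nabla \Avg_E f(x) = \E_{\bz}[\Pi_E \nabla f(\Pi_E x + \Pi_{E^\perp}\bz)]$ and apply Jensen's inequality (or equivalently, the conditional-expectation contraction in $L^2$) to the convex function $v \mapsto \|v\|_2^2$: for each fixed $x$,
\[
    \|\nabla \Avg_E f(x)\|_2^2 = \big\|\E_{\bz}[\Pi_E \nabla f(\Pi_E x + \Pi_{E^\perp}\bz)]\big\|_2^2 \le \E_{\bz}\big[\|\Pi_E \nabla f(\Pi_E x + \Pi_{E^\perp}\bz)\|_2^2\big].
\]
Then I would take the expectation over $\bx \sim \gamma_n$ and again use the same change-of-variables fact that $\Pi_E \bx + \Pi_{E^\perp}\bz \sim \gamma_n$, giving $\E_{\bx}[\|\nabla \Avg_E f(\bx)\|_2^2] \le \E_{\bx,\bz}[\|\Pi_E \nabla f(\Pi_E \bx + \Pi_{E^\perp}\bz)\|_2^2] = \E_{\bx}[\|\Pi_E \nabla f(\bx)\|_2^2]$.

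Honestly, there is no real obstacle here — both statements are immediate corollaries of the gradient formula plus (i) linearity/Fubini for the first and (ii) Jensen for the second, with the rotational invariance of the Gaussian handling the reindexing in both cases. The only thing to be mildly careful about is the measure-theoretic justification for exchanging expectations and differentiation, but that is already covered by the $C_b^1$ hypothesis and was dispatched in the proof of Lemma~\ref{lem:avg-props}, so I would simply cite that lemma rather than re-argue it. The write-up should therefore be a short two-line proof for each part.
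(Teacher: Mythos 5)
Your proposal is correct and matches the paper's own argument: part~\ref{it:average-gradient-AE} by averaging the gradient formula of Lemma~\ref{lem:avg-props} over $\bx$ using the fact that $\Pi_E\bx + \Pi_{E^\perp}\bz \sim \gamma_n$, and part~\ref{it:variancee-gradient-AE} by Jensen's inequality followed by the same change of variables. No differences worth noting.
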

\begin{proof}
    Item~\ref{it:average-gradient-AE} follows by
    averaging over Item~\ref{it:gradient-of-average} from Lemma~\ref{lem:avg-props}. To get Item~\ref{it:variancee-gradient-AE},
    first observe that by Jensen's inequality (applied on Item~\ref{it:gradient-of-average} from Lemma~\ref{lem:avg-props}), we have
    \[
        \Vert (\nabla \Avg_E f)(x) \Vert^2 \leq \E_{\bz}[\Vert \Pi_E \nabla f(\Pi_E x + \Pi_{E^\perp} \bz )\Vert^2].
    \]
    Averaging over $\bx \sim \gamma$ and observing that the distribution of $\Pi_E \bx + \Pi_{E^\perp} \bz$ is the same as that of $\bx$, we have Item~\ref{it:variancee-gradient-AE}.
\end{proof}

\subsection{Smoothness and Juntas}

The notion of smoothness that we will use in this work depends on the notion of Gaussian noise.
In particular, we use the following Gaussian noise operator:
\begin{definition}~\label{def:Ornstein}
    For $t \ge 0$ and $f \in L_2(\gamma_n)$, we define $P_tf:\mathbb{R}^n \rightarrow \mathbb{R}$ as
    \[
        P_tf(x) = \mathbf{E}_{\by} [f(e^{-t}x + \sqrt{1-e^{-t}} \by)].
    \]
    The operator $P_t$ forms a semigroup, i.e., $P_{t} P_{t'} f = P_{t+t'}f$. Further, for $t>0$, $P_t f$ is infinitely differentiable.
\end{definition}
We recall here a basic property of this noise operator -- namely, that $P_t$ makes any bounded function Lipschitz, a fact that can be derived,
for example, from (2.3) in~\cite{Ledoux:00}.
\begin{fact}\label{fact:P_t-lipschitz}
    For any $f: \R^n \to [-1, 1]$ and any $t > 0$, $P_t f$ is $\frac{C}{\sqrt t}$-Lipschitz for an absolute constant $C$.
\end{fact}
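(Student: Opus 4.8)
The plan is to prove the stronger quantitative fact that $\|\nabla P_t f(x)\| \le \sqrt{2/\pi}\,/\sqrt{t}$ for every $x \in \R^n$, and then deduce the Lipschitz bound by integrating the gradient along the line segment joining any two points (recall from Definition~\ref{def:Ornstein} that $P_t f$ is smooth for $t > 0$, so this integration is legitimate).

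The first step is to obtain a closed form for $\nabla P_t f$. Write $a = e^{-t}$ and $\sigma = \sqrt{1-e^{-t}}$, and view the Gaussian average as a convolution:
\[
   P_t f(x) = (2\pi\sigma^2)^{-n/2}\int_{\R^n} f(u)\,\exp\!\big(-\|u - ax\|^2/(2\sigma^2)\big)\,du.
\]
All of the $x$-dependence now sits inside the smooth Gaussian kernel, so we may differentiate under the integral sign even when $f$ is merely bounded and non-smooth (equivalently, one can invoke the semigroup identity $P_t = P_{t/2}P_{t/2}$ and differentiate the already-smooth function $P_{t/2}f$). Since $\nabla_x \exp(-\|u-ax\|^2/(2\sigma^2)) = \frac{a}{\sigma^2}(u-ax)\exp(\cdots)$, undoing the change of variables yields the Gaussian integration-by-parts identity
\[
   \nabla P_t f(x) = \frac{e^{-t}}{\sqrt{1-e^{-t}}}\;\mathbf{E}_{\by}\big[\by\, f(e^{-t}x + \sqrt{1-e^{-t}}\,\by)\big].
\]

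The second step is to bound the norm of this vector. For any unit vector $v$, using $|f| \le 1$ and the fact that $\langle v, \by\rangle$ is a standard one-dimensional Gaussian (with $\mathbf{E}|\langle v,\by\rangle| = \sqrt{2/\pi}$),
\[
   |\langle v, \nabla P_t f(x)\rangle| \le \frac{e^{-t}}{\sqrt{1-e^{-t}}}\,\mathbf{E}_{\by}|\langle v, \by\rangle| = \sqrt{\tfrac{2}{\pi}}\cdot\frac{e^{-t}}{\sqrt{1-e^{-t}}},
\]
so $\|\nabla P_t f(x)\| \le \sqrt{2/\pi}\cdot e^{-t}/\sqrt{1-e^{-t}}$. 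The third step is the elementary estimate: from $e^t - 1 \ge t$ (valid for all $t \ge 0$) we get $1 - e^{-t} = e^{-t}(e^t - 1) \ge t e^{-t}$, hence $e^{-2t}/(1-e^{-t}) \le e^{-t}/t \le 1/t$, i.e.\ $e^{-t}/\sqrt{1-e^{-t}} \le 1/\sqrt{t}$. Combining the three steps, $\|\nabla P_t f(x)\| \le \sqrt{2/\pi}/\sqrt{t}$ for all $x$ and all $t>0$, and integrating this bound along segments shows $P_t f$ is $\sqrt{2/\pi}/\sqrt{t}$-Lipschitz, giving the claim with $C = \sqrt{2/\pi}$.

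There is no genuine obstacle here; the only point requiring a word of care is justifying the differentiation under the integral when $f$ is merely bounded, which is handled cleanly by first pushing the $x$-dependence into the smooth Gaussian kernel (or, equivalently, by smoothing via the semigroup identity before differentiating).
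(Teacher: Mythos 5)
Your proof is correct and complete. One point of comparison: the paper does not actually prove this statement --- it is stated as a ``Fact'' with a pointer to inequality (2.3) in Ledoux's survey --- so what you have written is a self-contained replacement for that citation, and it is exactly the standard derivation underlying it: the Gaussian integration-by-parts identity $\nabla P_t f(x) = \tfrac{e^{-t}}{\sqrt{1-e^{-t}}}\,\mathbf{E}_{\by}[\by\, f(e^{-t}x+\sqrt{1-e^{-t}}\,\by)]$, the pointwise bound $\sqrt{2/\pi}$ on $\mathbf{E}|\langle v,\by\rangle|$ for unit $v$, and the elementary inequality $1-e^{-t}\ge t e^{-t}$ to convert $e^{-t}/\sqrt{1-e^{-t}}$ into $1/\sqrt t$. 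Your handling of the one delicate point (differentiation under the integral for merely bounded $f$, justified by pushing the $x$-dependence into the smooth Gaussian kernel and dominating the differentiated kernel locally uniformly) is the right one; the parenthetical alternative via $P_t=P_{t/2}P_{t/2}$ is fine too, though note you would then only get the Lipschitz bound up to a constant rather than the explicit $\sqrt{2/\pi}$, and you are relying on the semigroup identity as asserted in Definition~\ref{def:Ornstein}. You also correctly worked with the paper's normalization $\sqrt{1-e^{-t}}$ (rather than the more common $\sqrt{1-e^{-2t}}$ for the Ornstein--Uhlenbeck semigroup), which only affects constants. Since the fact only claims $C/\sqrt t$ for an absolute constant, your quantitative version with $C=\sqrt{2/\pi}$ is, if anything, slightly stronger than what is needed.
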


Now we come to the notion of \emph{$s$-smooth} functions:
\begin{definition}~\label{def:s-smooth}
    A function $f: \mathbb{R}^n \rightarrow [-1,1]$ is referred to as \emph{$s$-smooth} if
    for all $t>0$, $$\mathbf{E}[| f(\bx) - P_t  f(\bx) | ] \le s \sqrt{t}.$$
    In this case, we say that $\Sm(f) \le s$.
\end{definition}

To help illustrate the definition, let us recall the notion of Gaussian surface area:
\begin{definition}~\label{def:Gaussian-surface-area}
    For a Borel set $A \subseteq \mathbb{R}^n$, we define its Gaussian surface area, denoted by $\Gamma(A)$ to be
    \[
        \Gamma(A) = \liminf_{\delta \rightarrow 0} \frac{\mathsf{vol}(A_{\delta} \setminus A)}{\delta}.
    \]
    Here $A_{\delta}$ denotes the set of points which are at Euclidean distance at most $\delta$ the set $A$.
\end{definition}

The next proposition shows that the class of $s$-smooth functions functions of bounded surface area.
Later, we will also show that the notion of $s$-smoothness is equivalent to a certain decay in the
Hermite coefficients (which can also be used to show that if $\Sm(f) \le C\E[\|\nabla f\|^2]$,
so for example Lipschitz functions are $s$-smooth).
\begin{proposition}~\label{prop:smooth-inclusion}
    \begin{enumerate}
        \item~\label{it:prop-1} If $f:\mathbb{R}^n \rightarrow \{-1,1\}$ with surface area at most $\frac{s \cdot \sqrt{\pi}}{2}$, then $\Sm(f) \le s$.
        \item~\label{it:prop-2} Let $E$ be any subspace of $\mathbb{R}^n$. If $f$ is $s$-smooth, then so is $\Avg_E f$.
    \end{enumerate}
\end{proposition}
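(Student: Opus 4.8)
The plan is to prove the two items separately; Item~\ref{it:prop-2} is essentially formal, whereas Item~\ref{it:prop-1} carries the real content. Throughout I freely use the standard semigroup property (Definition~\ref{def:Ornstein}) and the $L^2(\gamma)$-self-adjointness of $P_t$.

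For Item~\ref{it:prop-2}, the two ingredients are that $\Avg_E$ commutes with $P_t$ and that $\Avg_E$ is an $L^1(\gamma)$-contraction. The commutation $P_t \Avg_E f = \Avg_E P_t f$ follows by unfolding the definitions and decomposing all the Gaussians involved along $E$ and $E^\perp$: because the Gaussian noise in $P_t$ is rotation-invariant, the noise added inside $E$ and the averaging performed over $E^\perp$ do not interact, and the two sides coincide. The contraction property is Jensen's inequality: $\E_\bx[|\Avg_E g(\bx)|] \le \E_\bx \E_\bz[|g(\Pi_E\bx + \Pi_{E^\perp}\bz)|] = \E_\bx[|g(\bx)|]$, since $\Pi_E\bx + \Pi_{E^\perp}\bz$ has the same law as $\bx$. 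As $\Avg_E f$ again takes values in $[-1,1]$, for every $t > 0$ we get $\E[|\Avg_E f - P_t \Avg_E f|] = \E[|\Avg_E(f - P_t f)|] \le \E[|f - P_t f|] \le s\sqrt t$, i.e.\ $\Sm(\Avg_E f) \le s$.

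For Item~\ref{it:prop-1}, fix a Boolean $f$ with $A = f^{-1}(1)$ and surface area $\Gamma(A) \le \tfrac{s\sqrt\pi}{2}$, and fix $t>0$. Since $f$ is $\pm 1$-valued, $|f(x) - P_t f(x)| = 1 - f(x)P_t f(x)$ pointwise, so $\E[|f - P_t f|] = 1 - \langle f, P_t f\rangle = 1 - \|P_{t/2}f\|_{L^2(\gamma)}^2$, which vanishes as $t \to 0$. Differentiating in $t$ and using Gaussian integration by parts (the Dirichlet-form identity $\E[hLh] = -\E[\|\nabla h\|_2^2]$) gives $\tfrac{d}{dt}\E[|f - P_t f|] = c\,\E[\|\nabla P_{t/2}f\|_2^2]$ for an absolute constant $c$. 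I then estimate $\E[\|\nabla P_{t/2}f\|_2^2] \le \|\nabla P_{t/2}f\|_\infty \cdot \E[\|\nabla P_{t/2}f\|_2]$, bounding the first factor by $O(1/\sqrt t)$ via Fact~\ref{fact:P_t-lipschitz} and the second by $O(\Gamma(A))$: indeed $s \mapsto \E[\|\nabla P_s f\|_2]$ is non-increasing (a consequence of the commutation $\nabla P_s f = e^{-s}P_s \nabla f$ together with Jensen) and tends, as $s \to 0$, to a fixed multiple of the Gaussian surface area of $A$ (the heat-semigroup characterization of the Gaussian perimeter). Hence $\tfrac{d}{dt}\E[|f - P_t f|] = O(\Gamma(A)/\sqrt t)$, and integrating from $0$ to $t$ gives $\E[|f - P_t f|] = O(\Gamma(A)\sqrt t)$; tracking the absolute constants, the hypothesis $\Gamma(A) \le \tfrac{s\sqrt\pi}{2}$ makes this at most $s\sqrt t$, so $\Sm(f) \le s$.

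The main obstacle is the surface-area input in Item~\ref{it:prop-1}; the skeleton above is routine, but two points need care. First, one must control $\E[\|\nabla P_s f\|_2]$ by $\Gamma(A)$ for \emph{all} $s>0$, not merely as $s\to 0$; the monotonicity above does this, but since $f$ is Boolean the object $\nabla f$ is a vector measure supported on $\partial A$ rather than a function, so this step is cleanest if one runs the argument with $P_\delta f$ in place of $f$ (genuinely Lipschitz by Fact~\ref{fact:P_t-lipschitz}) and then lets $\delta \to 0$, or simply invokes the BV description of Gaussian perimeter. Second, landing exactly on the constant $\tfrac{s\sqrt\pi}{2}$ is a matter of bookkeeping, and may require replacing the crude split $\E[\|\nabla P_{t/2}f\|_2^2] \le \|\nabla P_{t/2}f\|_\infty\,\E[\|\nabla P_{t/2}f\|_2]$ by a sharper estimate; the sharpest route is to observe that for Boolean $f$, $\E[|f - P_t f|] = 2\Pr_\bx[f(\bx) \ne f(\bx')]$ where $\bx'$ is the $P_t$-noised copy of $\bx$, and then to bound this Gaussian noise sensitivity by a Crofton-type count of the crossings of $\partial A$ by the short segment from $\bx$ to $\bx'$, which gives the sharp dependence on the surface area.
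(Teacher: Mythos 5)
Your proof of Item~\ref{it:prop-2} is the same as the paper's: commute $\Avg_E$ with $P_t$ and then apply Jensen's inequality to see that $\Avg_E$ is an $L^1(\gamma)$-contraction; your justification of the commutation (splitting the Gaussian noise along $E$ and $E^\perp$ and using $e^{-2t}+(1-e^{-2t})=1$ to recombine the $E^\perp$-components) is exactly the computation the paper leaves implicit. For Item~\ref{it:prop-1} the paper does not give a proof at all --- it simply cites Pisier and Ledoux --- so your semigroup argument is a genuine reconstruction of the cited result rather than a variant of the paper's route. The skeleton is sound: the pointwise identity $|f-P_tf| = 1 - fP_tf$ for Boolean $f$, the rewriting $\E[|f-P_tf|] = 1-\|P_{t/2}f\|_{L^2(\gamma)}^2$, differentiation via the Dirichlet form, the split $\E[\|\nabla P_{t/2}f\|^2] \le \|\nabla P_{t/2}f\|_\infty\,\E[\|\nabla P_{t/2}f\|]$ with the first factor controlled by Fact~\ref{fact:P_t-lipschitz} and the second by monotonicity of $u\mapsto\E\|\nabla P_u f\|$ (from $\nabla P_u f = e^{-u}P_u\nabla f$ and Jensen) together with the heat-semigroup characterization of Gaussian perimeter. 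You correctly flag the two places needing care: $\nabla f$ is only a vector measure on $\partial A$, so one should mollify (run the argument for $P_\delta f$ and let $\delta\to 0$); and the crude $L^\infty\times L^1$ split loses an absolute constant, so it proves $\Sm(f)\le Cs$ rather than the stated $\Sm(f)\le s$ with the constant $\tfrac{s\sqrt{\pi}}{2}$ pinned down. That constant discrepancy is the only real shortfall relative to the statement as written; it is immaterial for how the paper uses the proposition (which only invokes ``$O(s)$-smooth''), and your suggested sharper route via noise sensitivity and a Crofton-type crossing count is indeed how the exact constant is obtained in the Pisier--Ledoux argument, but as presented that step remains a sketch rather than a proof.
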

\begin{proof}
    \begin{enumerate}
        \item Part~\ref{it:prop-1} was proved by Pisier~\cite{Pisier:86} and Ledoux~\cite{Ledoux:94}.
        \item To prove Part~\ref{it:prop-2}, observe that the operators $\Avg_E$ and $P_t$ commute. Thus,
              \begin{align*}
                  \mathbf{E}[| \Avg_E f(\bx) - P_t   \Avg_E f(\bx) | ] & = \mathbf{E}[| \Avg_E f(\bx) -    \Avg_E P_t f(\bx) | ].
              \end{align*}
              However, Jensen's inequality implies that for any $f$ and $g$,
              $\mathbf{E}_{\bx} [| \Avg_E f(\bx) -  \Avg_E g(\bx) |] \le \mathbf{E}_{\bx} [|  f(\bx) -   g(\bx) |].$ This finishes the proof.
    \end{enumerate}

\end{proof}

We now define the class of $s$-smooth linear $k$-juntas.

\begin{definition}~\label{def:junta-rn}
    For a subspace $E$ of $\mathbb{R}^n$,  parameter $s>0$ and $k \in \mathbb{N}$, we say $f: \mathbb{R}^n \rightarrow [-1,1] \in \Junta_{E,k,s}$ if
    \begin{itemize}
        \item There is a subspace $E' \subseteq E$ of dimension $k$ such that $f \in \Junta_{E'}$.
        \item $f$ is $s$-smooth.
    \end{itemize}
\end{definition}

\begin{definition}~\label{def:max-corr-junta}
    For a function $h: \mathbb{R}^n \rightarrow [-1,1]$, a subspace $E$ of $\mathbb{R}^n$, $k \in \mathbb{N}$ and $ s>0$, we define
    \[
        \rho_{E, k, s}(h) = \max_{\phi \in \Junta_{E,k,s}} \opE_{\bx} [\phi(\bx) \cdot h(\bx)].
    \]
\end{definition}
\begin{definition}~\label{def:rho-c-correlation}
For a function $h: \mathbb{R}^n \rightarrow [-1,1]$ and  a class $\mathcal{C}$ of functions mapping $\mathbb{R}^k \rightarrow [-1,1]$, we define 
\[
\rho_{\mathbb{R}^n, \mathcal{C}} (h) \coloneqq \max_{\phi \in \mathsf{Ind}_{n} (\mathcal{C})}  \mathop{\mathbf{E}}_{\bx} [\phi(\bx) \cdot h(\bx)]. 
\]
For a subspace $E$ of $\mathbb{R}^n$, we define $\mathsf{Ind}_{E,\mathcal{C}}$ to be the set of all functions $\Phi$ which can be expressed in the form 
\[
\Phi(x) = h(\langle v_1, x\rangle) ,\ldots, \langle v_k, x\rangle), 
\]
where $h \in \mathcal{C}$ and $v_1, \ldots, v_k$ are orthonormal vectors in $E$. Thus, $\mathsf{Ind}_{E}(\mathcal{C})$ lifts functions in $\mathcal{C}$ to functions over $\mathbb{R}^n$ where the relevant subspace is $E$. 

For such a class $\mathcal{C}$, a subspace $E$ of $\mathbb{R}^n$ and a function $f: \mathbb{R}^n \rightarrow [-1,1]$, we define 
\[
\rho_{E,\mathcal{C}} (f) \coloneqq \max_{\Phi \in \mathsf{Ind}_{E}(\mathcal{C})} \mathbf{E}_{\bx} [f(\bx) \cdot \Phi(\bx)].
\]
\end{definition}



\subsection{Useful results about  matrices}
\begin{definition}~\label{def:SVD}
    Let $B \in \mathbb{R}^{m \times n}$ matrix. Then, the singular value decomposition (SVD) of $B $ corresponds to $B = U\cdot D \cdot V^T$ where (i) $D \in \mathbb{R}^{r \times r}$ is a diagonal matrix with nonzero entries and (ii) the columns of $U$ and $V$ are orthonormal. The columns of $U$ form an orthonormal basis for the column span of $B$. Similarly, the columns of $V$ form an orthonormal basis for the row span of $B$.
\end{definition}

We will also need  the following random sampling result concerning rank one matrices due to Rudelson and Vershynin~\cite{rudelson2007sampling}.
\begin{theorem}~\label{thm:random-sampling}
    Let $\bZ$ be a distribution over $\mathbb{R}^n$  such that with probability $1$, for $Z \sim \bZ$, we have $\Vert Z \Vert_2 \le M$. Assume that $\Vert \mathbf{E}[ \bZ \otimes \bZ ] \Vert_2 \le 1$. Let $\bZ_1, \ldots, \bZ_d$ be i.i.d.~copies of $\bZ$. Let $a$ be defined as
    \[
        a = C \sqrt{\frac{\log d}{d}} M,
    \]
    for an absolute constant $C>0$. Then,
    \[
        \Pr\bigg[ \bigg\Vert\frac{1}{d} \cdot \big(\sum_{j=1}^d \bZ_j \otimes \bZ_j\big)- \mathbf{E}_{}[\bZ \otimes \bZ] \bigg\Vert_2 >t\bigg] \le 2 e^{-Ct^2/a^2}.
    \]
\end{theorem}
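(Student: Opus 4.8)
To prove Theorem~\ref{thm:random-sampling} the plan is to reproduce the Rudelson--Vershynin argument, which rests on three ingredients: symmetrization, a non-commutative Khintchine inequality specialized to rank-one matrices (Rudelson's lemma), and a bootstrap that converts the resulting self-referential estimate into a moment bound, after which Markov's inequality yields the tail. Write $R := \big\|\tfrac1d\sum_{j=1}^{d}\bZ_j\otimes\bZ_j-\E[\bZ\otimes\bZ]\big\|$ for the operator-norm deviation to be controlled and $m_p := (\E R^{p})^{1/p}$. Since $\|\bZ_j\|\le M$ a.s.\ and $\|\E[\bZ\otimes\bZ]\|\le1$ we have $R\le M^{2}+1$ almost surely, and $2e^{-Ct^{2}/a^{2}}$ exceeds $1$ unless $t$ is somewhat larger than $a$; so it suffices to establish the bound when $t$ is at least a fixed multiple of $a$, and we may also assume $a$ is below a fixed absolute constant (otherwise the right side is $\ge 1$ throughout the range where the left side is not already $0$).

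First I would bound $m_p$. Letting $\epsilon_1,\dots,\epsilon_d$ be independent uniform signs, the standard symmetrization inequality applied to $x\mapsto x^{p}$ gives $m_p\le\frac2d\big(\E\big\|\sum_j\epsilon_j\,\bZ_j\otimes\bZ_j\big\|^{p}\big)^{1/p}$. Conditionally on the $\bZ_j$ the inner quantity is a moment of a Rademacher sum of rank-one matrices, and the Lust-Piquard--Pisier non-commutative Khintchine inequality in the Schatten-$p$ norm, combined with $\|A\|_{S_{q}}\le(\operatorname{rank}A)^{1/q}\|A\|$ at $q\asymp\log d$ — i.e.\ Rudelson's lemma — gives, for any $y_1,\dots,y_d\in\R^{n}$,
\[
  \Big(\E_\epsilon\Big\|\sum_{j=1}^{d}\epsilon_j\,y_j\otimes y_j\Big\|^{p}\Big)^{1/p}
  \le C_1\sqrt{\max(p,\log d)}\cdot\max_j\|y_j\|\cdot\Big\|\sum_{j=1}^{d}y_j\otimes y_j\Big\|^{1/2},
\]
the key feature being that the prefactor is $\sqrt{\max(p,\log d)}$ rather than $\sqrt n$. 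Applying this with $y_j=\bZ_j$, using $\max_j\|\bZ_j\|\le M$ and $\big\|\sum_j\bZ_j\otimes\bZ_j\big\|\le d(R+1)$, taking expectations over the $\bZ_j$, and using $\|R+1\|_{p/2}\le m_{p/2}+1\le m_p+1$, I obtain the self-referential estimate $m_p\le\gamma_p\sqrt{m_p+1}$ with $\gamma_p:=\frac{2C_1M}{\sqrt d}\sqrt{\max(p,\log d)}$.

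The bootstrap is elementary: $m_p\le\gamma_p\sqrt{m_p+1}$ forces $m_p\le\gamma_p^{2}+\gamma_p$, hence $m_p\le2\gamma_p$ whenever $\gamma_p\le1$ (and if $\gamma_p>1$ for the $p$ chosen below then $a$ exceeds an absolute constant and the claim is vacuous). For $p\ge\log d$ this reads $(\E R^{p})^{1/p}\le 4C_1 M\sqrt{p/d}=\kappa\,a\sqrt{p/\log d}$ where $\kappa=4C_1/C$ and $a=CM\sqrt{\log d/d}$. Markov's inequality gives $\Pr[R>t]\le(m_p/t)^{p}$; choosing $p=\lambda\log d$ with $\lambda=t^{2}/(e^{2}\kappa^{2}a^{2})$ — which is $\ge1$ precisely in the regime we need — makes $m_p\le t/e$, so $\Pr[R>t]\le e^{-p}=d^{-\lambda}\le e^{-(\ln2)t^{2}/(e^{2}\kappa^{2}a^{2})}$ using $d\ge2$. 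Choosing the absolute constant $C$ suitably (so that the constants in the definition of $a$ and in the exponent agree) and folding the vacuous small-$t$ range into the factor $2$ completes the proof.

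The main obstacle is the non-commutative Khintchine/Rudelson inequality above: controlling a Rademacher sum of rank-one projections with a $\sqrt{\log d}$ factor independent of the ambient dimension $n$. Proving it requires the Lust-Piquard--Pisier Khintchine inequality in Schatten norms together with the rank-reduction comparison $\|A\|_{S_q}\le(\operatorname{rank}A)^{1/q}\|A\|$ (applied with $q\asymp\log d$), or, equivalently, a Dudley entropy-integral bound for the induced Rademacher chaos on the sphere; everything else is routine symmetrization and the self-referential trick. A self-contained alternative to the final Markov step is to use the moment bound only at $p\asymp\log d$ to get $\E R=O(a)$ and then apply a Talagrand--Bousquet concentration inequality to $R=\sup_{\|u\|=1}\big|\frac1d\sum_j\langle u,\bZ_j\rangle^{2}-\E\langle u,\bZ\rangle^{2}\big|$, whose defining summands lie in $[0,M^{2}]$ and have variance at most $M^{2}$ (the latter because $\langle u,\bZ\rangle^{2}\le M^{2}$ and $\|\E[\bZ\otimes\bZ]\|\le1$); this yields the same sub-Gaussian tail.
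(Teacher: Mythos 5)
The paper does not prove this statement at all: Theorem~\ref{thm:random-sampling} is quoted verbatim from Rudelson and Vershynin \cite{rudelson2007sampling}, so there is no internal proof to compare against. Your proposal is essentially a faithful reconstruction of the Rudelson--Vershynin argument itself (symmetrization, the noncommutative Khintchine/Rudelson lemma with the $\sqrt{\max(p,\log d)}$ prefactor, the self-referential bootstrap $m_p\le\gamma_p\sqrt{m_p+1}$, and a moment--Markov step at $p\asymp\log d$), and the main chain of estimates is correct, including the bookkeeping that lets a single large absolute constant $C$ serve both in the definition of $a$ and in the exponent.

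One point needs fixing. Your reason for discarding the case $\gamma_p>1$ is wrong: with your choice $p=\lambda\log d$, $\lambda=t^2/(e^2\kappa^2a^2)$ and $\kappa=4C_1/C$, a direct computation gives $\gamma_p=t/(2e)$ \emph{independently of $a$}, so $\gamma_p>1$ means $t$ exceeds an absolute constant, not that $a$ does, and the claim is not vacuous there. In fact the theorem as restated in this paper is literally false for large $t$: Rudelson--Vershynin prove the sub-Gaussian tail only for $t\in(0,1)$, and without that restriction one gets counterexamples already in dimension one (take $\bZ=M$ with probability $1/M^2$ and $0$ otherwise, $M^2\gg d$; the deviation exceeds $t\approx M^2/d$ with probability about $d/M^2$, far larger than $2e^{-Ct^2/a^2}$). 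So your argument proves exactly the regime in which the cited theorem is true and in which the paper uses it ($t=\eta/2<1$ in the proof of Theorem~\ref{lem:idealized}); you should either add the restriction $t\le 2e$ (equivalently, note that your bootstrap only needs $\gamma_p=t/(2e)\le1$) or, for $t$ between constants and $M^2+1$, switch to a Bernstein-type (sub-exponential) tail, which the statement as written does not actually permit. This is a defect of the paper's restatement rather than of your strategy, but the sentence ``then $a$ exceeds an absolute constant and the claim is vacuous'' should be removed or corrected.
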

Next, we recall the notion of pseudoinverse of a matrix~\cite{moore_1955, penrose_1955}. Our definition
below is specialized to real square matrices though the definition can be generalized to complex rectangular matrices as well.
\begin{definition}~\label{def:pseudoinverse}
    For any square matrix $A \in \mathbb{R}^{n \times n}$, there is a unique matrix $B$ which satisfies the following conditions (known as the Moore-Penrose conditions):
    \begin{enumerate}
        \item $A B A = A$ and $B A B=B$.
        \item $(AB)^{t} = AB$ and $(BA)^t = BA$.
    \end{enumerate}
    $B$ is referred to as the \emph{pseudoinverse} of $A$. We remark that when $A$ is invertible, then $B = A^{-1}$. We will thus overload this notation and in general, use $A^{-1}$ to denote the pseudoinverse of $A$.
\end{definition}
\begin{claim}~\label{clm:symm-pseudo}
    Let $A \in \mathbb{R}^{m \times m}$ be a symmetric matrix whose non-zero eigenvalues are $\{\lambda_1, \ldots, \lambda_t\}$ and corresponding orthonormal vectors $\{v_1, \ldots, v_t\}$ (note that $t \le m$).
    Then,
    \[
        A = \sum_{i=1}^t \lambda_i v_i v_i^t \ \ \textrm{and} \ \ A^{-1}  = \sum_{i=1}^t \frac{1}{\lambda_i} v_i v_i^t .
    \]
\end{claim}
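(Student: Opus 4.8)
The plan is to derive both identities from the spectral theorem for real symmetric matrices together with the uniqueness clause in Definition~\ref{def:pseudoinverse}. Concretely: first establish the eigendecomposition $A = \sum_{i=1}^t \lambda_i v_i v_i^t$; then guess the pseudoinverse to be $B = \sum_{i=1}^t \frac{1}{\lambda_i} v_i v_i^t$ and verify the Moore--Penrose conditions by a direct computation; since the pseudoinverse is unique, this forces $A^{-1} = B$.

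For the first identity I would invoke the spectral theorem: a real symmetric $A \in \mathbb{R}^{m \times m}$ admits an orthonormal eigenbasis $w_1, \dots, w_m$ of $\mathbb{R}^m$ with $A w_j = \mu_j w_j$, equivalently $A = W \Lambda W^t$ with $W = [w_1\ \cdots\ w_m]$ orthogonal and $\Lambda = \mathrm{diag}(\mu_1, \dots, \mu_m)$; expanding the product gives $A = \sum_{j=1}^m \mu_j w_j w_j^t$. Discarding the terms with $\mu_j = 0$ leaves exactly the contributions of the nonzero-eigenvalue eigenvectors. These $w_j$ form an orthonormal basis of $\bigoplus_{\lambda \ne 0}\ker(A - \lambda I) = (\ker A)^\perp$, and the eigenvalue attached to each is one of the $\lambda_i$ with the correct multiplicity; so $\{v_1, \dots, v_t\}$ and $\{\lambda_1, \dots, \lambda_t\}$ in the statement must be of this form (any two valid choices diagonalize $A$ on $(\ker A)^\perp$ with the same eigenvalue multiplicities, since eigenvectors attached to distinct eigenvalues of a symmetric matrix are automatically orthogonal). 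Hence $A = \sum_{i=1}^t \lambda_i v_i v_i^t$.

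For the second identity, set $B = \sum_{i=1}^t \frac{1}{\lambda_i} v_i v_i^t$ and compute using only the relation $v_i^t v_j = 1$ if $i = j$ and $0$ otherwise. Then
\[
AB = \sum_{i,j=1}^t \frac{\lambda_i}{\lambda_j}\, v_i (v_i^t v_j) v_j^t = \sum_{i=1}^t v_i v_i^t,
\]
and the same computation gives $BA = \sum_{i=1}^t v_i v_i^t$; in particular $AB = BA$ is symmetric, which is condition (2). For condition (1), $ABA = \bigl(\sum_i v_i v_i^t\bigr)A = \sum_i \lambda_i v_i v_i^t = A$ and $BAB = \bigl(\sum_i v_i v_i^t\bigr)B = \sum_i \frac{1}{\lambda_i} v_i v_i^t = B$. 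By the uniqueness statement in Definition~\ref{def:pseudoinverse}, $B = A^{-1}$, which is the second claimed identity. (As a sanity check against the remark in that definition: if $A$ is invertible then $t = m$, the $v_i$ are a full orthonormal basis, $\sum_i v_i v_i^t = I$, and indeed $AB = BA = I$.) There is no genuine obstacle in this argument: the only place a theorem is needed rather than a one-line manipulation is the spectral theorem, used to guarantee that the nonzero-eigenvalue eigenvectors can be taken orthonormal and that, together with an orthonormal basis of $\ker A$, they diagonalize $A$; everything after that is bookkeeping with $v_i^t v_j \in \{0,1\}$.
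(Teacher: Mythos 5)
Your proof is correct and follows essentially the same route as the paper: the paper's argument is simply to verify the Moore--Penrose conditions for the candidate $\sum_i \frac{1}{\lambda_i} v_i v_i^t$ using orthonormality of the $v_i$ and invoke uniqueness, which is exactly your computation (you additionally spell out the spectral-theorem justification of the first identity, which the paper leaves implicit).
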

\begin{proof}
    It is immediate to verify that the Moore-Penrose conditions from Definition~\ref{def:pseudoinverse} hold for $A^{-1}$ defined as above (uses the fact that $v_i$ are orthonormal).
\end{proof}
\begin{definition}~\label{def:matrix-projection}
    For a symmetric matrix $A \in \mathbb{R}^{n \times n}$
    and a parameter $\eta \in \mathbb{R}$, we define
    $A_{\ge \eta} \in \mathbb{R}^{n \times n}$ as projection of $A$ to the eigenspaces with eigenvalue more than $\eta$. In other words, let the spectral decomposition of $A$ be
    $$
        A = \sum_{i=1}^n  \lambda_i v_iv_i^t.
    $$
    Then,
    $$
        A_{\ge \eta} = \sum_{i : \lambda_i \ge \eta} \lambda_i v_i v_i^t.
    $$
    Further, for $\eta>0$, we define $A_{\ge \eta}^{-1}$ is defined to be
    $$
        A_{\ge \eta}^{-1} = \sum_{i : \lambda_i \ge \eta} \frac{1}{\lambda_i} v_i v_i^t.
    $$
    Note that this is the same as the pseudoinverse of $A_{\ge \eta}$. Finally, for a symmetric matrix $A$ and parameter $\eta \in \mathbb{R}$, we let $\Es_{\eta}(A)$ denote $\spann (\{v_i\}_
        {\lambda_i \ge \eta})$.
\end{definition}

\subsection{Algorithmic ingredients}
We will require some algorithmic ingredients
from the paper~\cite{DMN19a}. The first is Lemma~10 in the full version from \cite{DMN19a} which is stated below.
\begin{lemma}~\label{lem:compute-inner-prod-gradient}
    There is an algorithm \textsf{Compute-inner-product} which given oracle access to function $g: \mathbb{R}^n \rightarrow [-1,1]$, noise parameter $t>0$, error parameter $\epsilon>0$, confidence parameter $\delta>0$ and has the following guarantee:
    \begin{enumerate}
        \item It makes $\mathsf{poly}(t, 1/\epsilon,\log(1/\delta))$ queries to $g$.
        \item With confidence $1-\delta$, it outputs $\langle \nabla(P_t g)(y_1) , \nabla(P_t g)(y_2) \rangle$ up to additive error $\pm \epsilon$.
    \end{enumerate}
\end{lemma}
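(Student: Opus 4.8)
\medskip\noindent\textbf{Proof plan.}
The only real difficulty is to make the query complexity independent of $n$: the target $\langle \nabla(P_tg)(y_1),\nabla(P_tg)(y_2)\rangle$ is an inner product of two $n$-dimensional vectors, and the obvious estimators (Monte Carlo for $\nabla P_tg$, or for the Stein-type identity $\langle\nabla P_tg(y_1),\nabla P_tg(y_2)\rangle=\tfrac{\rho^2}{\sigma^2}\E[g(\cdot)g(\cdot)\langle\mathbf u,\mathbf u'\rangle]$) have variance growing like $n$. The plan is to rewrite the target as the first-order coefficient of an analytic function of a \emph{scalar} correlation parameter, all of whose relevant quantities are bounded expectations; boundedness of $g$ will then make both the sampling error and the truncation error dimension-free.

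Write $\rho=e^{-t}$ and $\sigma=\sqrt{1-e^{-t}}$, so $P_tg(y)=\E_{\mathbf u}[g(\rho y+\sigma\mathbf u)]$ for $\mathbf u\sim\normal(0,I_n)$. Differentiating the Gaussian-convolution representation of $P_tg$ (equivalently, Gaussian integration by parts) gives the identity $\nabla P_tg(y)=\tfrac{\rho}{\sigma}\,\E_{\mathbf u}[g(\rho y+\sigma\mathbf u)\,\mathbf u]$. For $\lambda\in[-1,1]$ let $(\mathbf u,\mathbf u')$ be standard Gaussian vectors in $\R^n$ whose coordinate pairs $(\mathbf u_i,\mathbf u_i')$ are i.i.d.\ with correlation $\lambda$ (concretely $\mathbf u'=\lambda\mathbf u+\sqrt{1-\lambda^2}\,\mathbf v$ with $\mathbf v$ fresh), and set
\[
\Psi(\lambda)\ =\ \E_{(\mathbf u,\mathbf u')}\big[\,g(\rho y_1+\sigma\mathbf u)\,g(\rho y_2+\sigma\mathbf u')\,\big].
\]
Expanding each factor in the orthonormal Hermite basis in its Gaussian argument, $g(\rho y_i+\sigma\mathbf u)=\sum_\alpha c_\alpha(y_i)\bar H_\alpha(\mathbf u)$, and using Mehler's formula $\E[\bar H_\alpha(\mathbf u)\bar H_\beta(\mathbf u')]=\mathbf 1[\alpha=\beta]\lambda^{|\alpha|}$, one obtains $\Psi(\lambda)=\sum_{m\ge0}d_m\lambda^m$ with $d_m=\sum_{|\alpha|=m}c_\alpha(y_1)c_\alpha(y_2)$. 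Since $c_{e_i}(y)=\E_{\mathbf u}[g(\rho y+\sigma\mathbf u)\mathbf u_i]=\tfrac{\sigma}{\rho}\,\partial_iP_tg(y)$ by the identity above, the first-order coefficient is exactly $d_1=\tfrac{\sigma^2}{\rho^2}\langle\nabla P_tg(y_1),\nabla P_tg(y_2)\rangle$, so it suffices to estimate $d_1$ to additive error $O(\epsilon\sigma^2/\rho^2)$.

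The step that makes everything dimension-free is a uniform bound on the series. By Parseval, $\sum_\alpha c_\alpha(y_i)^2=\E_{\mathbf u}[g(\rho y_i+\sigma\mathbf u)^2]\le1$ because $g$ is $[-1,1]$-valued; hence by Cauchy--Schwarz $|d_m|\le1$ for every $m$, uniformly in $n$ and in $y_1,y_2$, and therefore
\[
\Big|\frac{\Psi(\lambda)-\Psi(0)}{\lambda}-d_1\Big|\ =\ \Big|\sum_{m\ge2}d_m\lambda^{m-1}\Big|\ \le\ \frac{|\lambda|}{1-|\lambda|}\ \le\ 2|\lambda|\qquad(|\lambda|\le\tfrac12).
\]
The algorithm \textsf{Compute-inner-product} will thus fix $\lambda=\Theta(\epsilon\sigma^2/\rho^2)$ and target precision $\eta=\Theta(\lambda\cdot\epsilon\sigma^2/\rho^2)$, estimate $\Psi(0)$ and $\Psi(\lambda)$ each within $\eta$ (confidence $1-\delta$) by averaging $N=O(\eta^{-2}\log(1/\delta))$ i.i.d.\ copies of the $[-1,1]$-valued random variable $g(\rho y_1+\sigma\mathbf u)\,g(\rho y_2+\sigma\mathbf u')$ --- two queries to $g$ per copy --- via Hoeffding's inequality, and output $\tfrac{\rho^2}{\sigma^2}\cdot\frac{\hat\Psi(\lambda)-\hat\Psi(0)}{\lambda}$. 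The total error is $\tfrac{\rho^2}{\sigma^2}\big(\tfrac{2\eta}{\lambda}+2|\lambda|\big)=O(\epsilon)$ with probability $1-\delta$, and the query count is $2N=\poly(1/t,1/\epsilon)\log(1/\delta)$, independent of $n$ (and of $\|y_1\|,\|y_2\|$, since $g$ is globally bounded); for $t$ bounded away from $0$ this is the claimed $\poly(t,1/\epsilon,\log(1/\delta))$ bound.

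I expect the main conceptual obstacle to be exactly the reformulation that removes the dimension: one must resist estimating $\nabla P_tg$ (or its Stein representation) directly and instead notice that differentiating $\Psi$ in the \emph{correlation} variable isolates $\langle\nabla P_tg(y_1),\nabla P_tg(y_2)\rangle$ while keeping every sampled quantity a bounded expectation. After that the work is routine: the Mehler/Parseval computation, the uniform bound $|d_m|\le1$ (which is where boundedness of $g$ enters crucially), and the balancing of the $O(\lambda)$ truncation error against the $O(\eta/\lambda)$ sampling error. If a milder dependence on $1/\epsilon$ were wanted, one could replace the one-sided difference quotient of $\Psi$ by a higher-order (Richardson-type) combination of $\Psi(\lambda),\Psi(2\lambda),\dots,\Psi(k\lambda)$, shrinking the truncation error to $O(\lambda^{k})$ at no extra query cost.
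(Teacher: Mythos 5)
The paper does not actually prove this lemma---it is imported verbatim as Lemma~10 of the cited prior work [DMN19a]---but your argument is correct and follows essentially the same route as the original: realize $\langle \nabla P_t g(y_1), \nabla P_t g(y_2)\rangle$ (up to the factor $\rho^2/\sigma^2$) as the degree-one Mehler coefficient of the scalar map $\lambda \mapsto \E[g(\rho y_1+\sigma \mathbf{u})\, g(\rho y_2+\sigma \mathbf{u}')]$ over $\lambda$-correlated Gaussians, bound all higher coefficients by $1$ via Parseval and Cauchy--Schwarz (this is where boundedness of $g$ kills the dimension dependence), and recover the coefficient by a finite difference, balancing the $O(\lambda)$ truncation error against the $O(\eta/\lambda)$ sampling error. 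The only trivial repairs are to cap $\lambda$ at $1/2$ so the geometric tail bound applies (harmless: when the cap binds one has $\rho^2/\sigma^2 = O(\epsilon)$, so the truncation error is still $O(\epsilon)$), and to read the statement's $\poly(t,\dots)$ as $\poly(1/t,\dots)$, which is how the lemma is actually invoked later in the paper.
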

The second lemma we need appears as Lemma~12 in the full version  of \cite{DMN19a} and is stated below.
\begin{lemma}~\label{lem:compute-project-gradient}
    There is an algorithm \textsf{Project-on-gradient} which given oracle access to function $g: \mathbb{R}^n \rightarrow [-1,1]$, noise parameter $t>0$, error parameters $\eta, \nu>0$ and confidence parameter $\delta>0$, has the following guarantee: For any  $x, y \in \mathbb{R}^n$, there is a quantity $\mathsf{Est}(x,y)$ which satisfies
    \[
        \Pr_{\by \sim \gamma_n} [|\mathsf{Est}(x,\by)- \langle \nabla P_t g(x), \by \rangle | > \lambda \eta ] \le \frac{1}{\lambda^2}.
    \]
    The algorithm \textsf{Project-on-gradient} with probability $1-\delta$, outputs an $\pm \nu$-additive estimate of $\mathsf{Est}(x,y)$
    and makes $\mathsf{poly}(1/t,1/\eta,1/\nu, \log(1/\delta))$ queries to $g$.

\end{lemma}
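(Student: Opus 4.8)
The quantity $\langle\nabla P_t g(x), y\rangle$ is the directional derivative $\partial_y(P_t g)(x)$, and the natural tool is Gaussian integration by parts. Writing $P_t g(x) = \E_{\bw\sim\gamma_n}[g(e^{-t}x + \sqrt{1-e^{-t}}\,\bw)]$, differentiating in the direction $y$, and integrating by parts in $\bw$ yields
\[
  \langle\nabla P_t g(x), y\rangle \;=\; \frac{e^{-t}}{\sqrt{1-e^{-t}}}\,\E_{\bw\sim\gamma_n}\big[\langle\bw, y\rangle\,g(e^{-t}x + \sqrt{1-e^{-t}}\,\bw)\big].
\]
I would take $\mathsf{Est}(x,y)$ to be (a high-accuracy estimate of) the empirical version of this expectation over a fixed list of internal samples $\bw_1,\dots,\bw_m\sim\gamma_n$, in the variance-reduced form described below. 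Computing the coefficients $\langle\bw_i, y\rangle$ uses only the explicit inputs $x,y$, so the number of oracle calls is $m$ times a $\poly(1/\nu,\log(1/\delta))$ factor coming from the fact that, for a $[-1,1]$-valued $g$, the relevant function values must themselves be produced by Chernoff averages; Part~(B) of the lemma --- the $\pm\nu$ guarantee with confidence $1-\delta$ --- is then a routine Chernoff and union-bound computation.

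\noindent The actual content is Part~(A), the Chebyshev-type concentration of the error for a fresh Gaussian $\by$. Once the internal samples are fixed, $\mathsf{Est}(x,\cdot)$ is linear, $\mathsf{Est}(x,\cdot)=\langle\hat v,\cdot\rangle$ for some $\hat v$ that is an unbiased estimate of $v:=\nabla P_t g(x)$; hence, conditionally on $\hat v$, the error $\langle\hat v - v,\by\rangle$ is a centered Gaussian of variance $\|\hat v - v\|^2$, and $\Pr_{\by}[|\langle\hat v - v,\by\rangle|>\lambda\eta]\le 1/\lambda^2$ holds as soon as $\|\hat v - v\|\le\eta$, which I want with probability $1-\delta$ over the internal samples. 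The crux is that this must be achieved with $m$ \emph{independent of the ambient dimension $n$}, which rules out the naive estimator: the per-sample contribution to $\hat v$ is $\frac{e^{-t}}{\sqrt{1-e^{-t}}}\bw\,g(\cdot)$, whose squared norm has expectation $\frac{e^{-2t}}{1-e^{-t}}\,\E[\|\bw\|^2 g(\cdot)^2]$, of order $n$. The fix is to route the estimator through a Lipschitz proxy for $g$ via the semigroup identity $P_t = P_{t/2}\circ P_{t/2}$ and the commutation relation $\nabla P_{t/2} = e^{-t/2}P_{t/2}\nabla$, rewriting $v$ as an average of vectors each of norm at most $C/\sqrt t$ (Fact~\ref{fact:P_t-lipschitz}); now the relevant second moments are governed by $\E[\|\nabla P_{t/2} g\|^2]=O(1/t)$ rather than by $\E[\|\bw\|^2]$, so $m=\poly(1/t,1/\eta,\log(1/\delta))$ should suffice. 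The concentration of the resulting sums of (nearly) rank-one contributions $\nabla P_{t/2}g(\cdot)\otimes\nabla P_{t/2}g(\cdot)$ is exactly the setting of the Rudelson--Vershynin bound, Theorem~\ref{thm:random-sampling}. Since $\nabla P_{t/2}g$ is not directly accessible, its inner products against the inputs are estimated by a recursive application of the same integration-by-parts primitive (and by Lemma~\ref{lem:compute-inner-prod-gradient} for inner products of two gradients), and the $\pm\nu$ errors from these sub-estimates are accumulated by the triangle inequality across the finite sum.

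\noindent I expect the main obstacle to be precisely this dimension-free error analysis. For a \emph{fixed} $y$ the target $\langle\nabla P_t g(x),y\rangle$ can be as large as $\|\nabla P_t g(x)\|\cdot\|y\|=\Theta(\|y\|/\sqrt t)$, so it cannot be estimated to additive accuracy $\eta$ with a number of queries independent of $\|y\|$ (hence of $n$); the weaker Chebyshev-with-scale-$\eta$ conclusion is what makes a dimension-free statement possible, and only because (i) for a linear estimator the second moment of the error over a random $\by$ collapses to $\|\hat v - v\|^2$, and (ii) replacing $g$ by $P_{t/2}g$ removes the $\|\bw\|^2$ blow-up in that quantity. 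Carefully balancing the three sources of error --- the sampling error $\|\hat v - v\|$, the $\pm\nu$ sub-estimation errors, and, when the lemma is used downstream, their interaction with the pseudoinverse of the Gram matrix --- while keeping every sample size at $\poly(1/t,1/\eta,1/\nu,\log(1/\delta))$, is the delicate part of the argument.
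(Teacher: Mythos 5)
First, a point of comparison that matters: this paper never proves Lemma~\ref{lem:compute-project-gradient}. It is imported verbatim as Lemma~12 of the full version of \cite{DMN19a}, so there is no internal proof to measure your argument against; what you have written is an attempted reconstruction of an external black box, and it has to stand on its own.

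Judged that way, you get the shape of part (A) right (an $\mathsf{Est}(x,\cdot)$ whose deviation from the linear functional $\langle \nabla P_t g(x),\cdot\rangle$ has second moment at most $\eta^2$ over $\by \sim \gamma_n$, then Chebyshev), and you correctly diagnose why the naive integration-by-parts estimator fails: with $\hat v = \frac{e^{-t}}{m\sqrt{1-e^{-t}}}\sum_i g(\cdot)\,\bw_i$ one has $\E\|\hat v - v\|^2 = \Theta(n/m)$ up to factors of $t$. But the proposed repair is circular. Writing $\nabla P_t g(x) = e^{-t/2}\,\E_{\bz}[\nabla P_{t/2} g(e^{-t/2}x + \cdots\bz)]$ does express $v$ as an average of vectors of norm $O(1/\sqrt t)$, so the idealized $\hat v = \frac{e^{-t/2}}{m}\sum_i \nabla P_{t/2} g(\xi_i)$ is $\eta$-close with $m = \poly(1/t,1/\eta)$ --- but this $\hat v$ is not accessible. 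To output $\mathsf{Est}(x,y) = \langle \hat v, y\rangle \pm \nu$ you must estimate each $\langle \nabla P_{t/2} g(\xi_i), y\rangle$ for the \emph{given} $y$, which is exactly the problem the lemma is about, with the same $\|y\| \approx \sqrt n$ obstruction that you yourself articulate in your final paragraph; the ``recursive application of the same primitive'' never bottoms out. The two tools you invoke do not bridge this: Theorem~\ref{thm:random-sampling} is a concentration bound for empirical covariance matrices (it is what powers Theorem~\ref{lem:idealized}, not this lemma), and Lemma~\ref{lem:compute-inner-prod-gradient} only returns inner products of two gradients $\langle \nabla P_t g(y_1), \nabla P_t g(y_2)\rangle$, not of a gradient with an arbitrary vector. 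So part (B) with query complexity $\poly(1/t,1/\eta,1/\nu,\log(1/\delta))$ is not delivered, and hence neither is the lemma.

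The missing idea is that a correct $\mathsf{Est}$ should never go through an $\eta$-accurate gradient vector at all (no dimension-free algorithm can produce one, even implicitly, when $g$ is e.g.\ an unknown halfspace); instead it should be built directly from values of $P_t g$ at points obtained by moving from $x$ toward $y$ along the noise operator, e.g.\ a difference quotient of the form $\big(P_t g(e^{-r}x + \sqrt{1-e^{-2r}}\,y) - P_{t+r} g(x)\big)/\sqrt{1-e^{-2r}}$ for small $r$. Each such value is an average of queries to $g$, hence estimable to the accuracy needed with $\poly$ queries for the given $y$; and in the Taylor expansion of the error the two $\Theta(\sqrt n)$-sized quantities --- the drift term $\langle x, \nabla P_t g(x)\rangle$ and the Hessian trace --- cancel precisely because the step follows the Ornstein--Uhlenbeck flow rather than the straight line $x + \sigma y$, leaving a residual whose second moment over $\by$ is controlled by dimension-free quantities such as $\|\nabla^2 P_t g\|_F = O(1/t)$. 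That cancellation (or an equivalent device) is the actual content of the cited Lemma~12 of \cite{DMN19a}, and it is absent from your sketch; without it, every variant you describe (plain or symmetric finite differences, truncated integration by parts, or the $P_{t/2}$ rewriting) leaves a $\poly(n)$ factor either in the bias for typical $\by$ or in the query complexity of part (B).
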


\section{Projection on low-dimensional space and correlation with linear juntas}
The goal of this section is to prove the following theorem.

\begin{theorem}~\label{lem:idealized}
    Let $\Phi: \mathbb{R}^n \rightarrow [-1,1]$ be a (differentiable)
    $C$-Lipschitz function and $\eta, \delta>0$. Let
    $\bx_1,\ldots, \bx_M \sim \gamma_n$
    where $M \ge \frac{C^2}{\eta^2} \log (C \delta/\eta)$. Then, with
    probability $1-\delta$, the matrix $A \in \mathbb{R}^{n \times n}$ defined
    as
    \[
        A = \frac{1}{M} \sum_{j=1}^M \nabla \Phi (\bx_j) \cdot \nabla \Phi (\bx_j)^t,
    \]
    satisfies the following: for every subspace $E$ containing $\mathsf{E}_{\eta/2}(A)$, for every $s\geq 0$,
    and for every $h \in \Junta_{\mathbb{R}^n,k,s}$, we have
    \[
        |\mathbf{E}_{\bx}[\Phi(\bx) \cdot (\calA_E h)(\bx) ] - \mathbf{E}_{\bx}[\Phi(\bx) \cdot h(\bx) ] | \le \sqrt{k \cdot \eta}.
    \]
\end{theorem}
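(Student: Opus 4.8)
The goal is to show that averaging $h$ over directions orthogonal to $E$ barely changes its correlation with $\Phi$. The key observation is that for $g = h - \calA_E h$, we have $\calA_E g = 0$, so $g \in \Junta_{E^\perp}$-like behavior; more precisely, by part~\ref{it:average-inner-product} of Lemma~\ref{lem:avg-props}, $\E_{\bx}[\Phi(\bx) g(\bx)] = \E_{\bx}[(\Phi - \calA_E\Phi)(\bx) g(\bx)]$, so we may replace $\Phi$ by $\Phi - \calA_E \Phi$ and it suffices to bound $|\E[(\Phi - \calA_E\Phi)(\bx) h(\bx)]|$. Since $h \in \Junta_{E',k,s}$ with $E' \subseteq E$ of dimension $k$, we want to say that $\Phi - \calA_E\Phi$ has ``no mass on $E'$'', and control the relevant inner product via Gaussian integration by parts (Stein's lemma), which converts $\E[(\Phi-\calA_E\Phi) \cdot h]$ into something involving $\nabla(\Phi - \calA_E\Phi)$ projected onto $E'$ against $\nabla h$.

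Concretely, the plan is: first, use the random matrix concentration (Theorem~\ref{thm:random-sampling} applied with $\bZ = \nabla\Phi(\bx)/C$, noting $\|\nabla\Phi\|\le C$ and $\|\E[\nabla\Phi \otimes \nabla\Phi]\|_2 \le \E\|\nabla\Phi\|^2 \le C^2$) to deduce that with probability $1-\delta$, $\|A - \E_{\bx}[\nabla\Phi(\bx)\nabla\Phi(\bx)^t]\|_2 \le \eta/2$; the choice $M \ge \frac{C^2}{\eta^2}\log(C\delta/\eta)$ gives exactly the bound needed from the theorem. Call $\bar A = \E_{\bx}[\nabla\Phi(\bx)\nabla\Phi(\bx)^t]$. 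Since $E \supseteq \mathsf{E}_{\eta/2}(A)$, every unit vector $u \in E^\perp$ satisfies $u^t A u < \eta/2$, hence $u^t \bar A u < \eta/2 + \|A - \bar A\|_2 \le \eta$; in other words $\E_{\bx}[\inr{u}{\nabla\Phi(\bx)}^2] \le \eta$ for all unit $u \in E^\perp$, and therefore $\E_{\bx}[\|\Pi_{E'} \nabla(\Phi - \calA_E\Phi)(\bx)\|^2]$ is small — wait, more carefully: for a unit vector $u \in E'$ (recall $E' \subseteq E$), the gradient of $\Phi - \calA_E\Phi$ in the direction $u$ need not be small, so we instead should exploit directions in $E^\perp$. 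The correct route: $\nabla \calA_E\Phi(x) = \E_{\bz}[\Pi_E \nabla\Phi(\Pi_E x + \Pi_{E^\perp}\bz)]$ lies in $E$, and $\nabla(\Phi - \calA_E\Phi)$ has its $E^\perp$-component equal to $\Pi_{E^\perp}\nabla\Phi$; but $E' \subseteq E$, so $\Pi_{E'}\nabla(\Phi - \calA_E\Phi) = \Pi_{E'}\nabla\Phi - \Pi_{E'}\nabla\calA_E\Phi$. So this decomposition alone does not immediately make the $E'$-projection small.

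The cleaner argument: write $g = h - \calA_E h$ and note $g \in (\Junta_E)^\perp$ in $L^2(\gamma)$ (by the remark after Lemma~\ref{lem:avg-props}), so $\E[\Phi g] = \E[(\Phi - \calA_E\Phi) g]$. Now apply Gaussian integration by parts: for suitably smooth functions, $\E[(\Phi - \calA_E\Phi)\cdot g]$... actually, the most robust approach is to directly estimate $\E[\Phi \cdot (h - \calA_E h)]$. Since $h$ depends only on $E'$, write $h = h(\Pi_{E'}x)$; then $\calA_E h = h$ (because $E' \subseteq E$ means $h \in \Junta_{E'} \subseteq \Junta_E$), which would make the statement trivial! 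So the point must be that $h$ ranges over $\Junta_{\R^n,k,s}$ — an $s$-smooth linear $k$-junta on \emph{any} $k$-dimensional subspace $E'$ of $\R^n$, not necessarily inside $E$. With $E'$ arbitrary, decompose $E' = (E' \cap E) \oplus F$ where $F$ has small dimension but may have nontrivial overlap with $E^\perp$; the difference $h - \calA_E h$ is then governed by the gradient of $h$ in directions of $F \cap E^\perp$, which via the chain rule is $\sum_i \partial_i h \cdot \inr{w_i}{\cdot}$ where $w_i$ are the defining directions of $h$. I would bound $\E[\|\Pi_{E^\perp} \nabla h\|^2]$: since $h$ is a $k$-junta, $\nabla h(x) \in E'$ always, $\|\nabla h(x)\|$ can be large, but we relate $\nabla h$ to $\nabla \Phi$ using the correlation structure. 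Actually the honest move, and what I expect the paper does: use integration by parts to get $\E[\Phi \cdot (h - \calA_E h)] = -\E[\inr{\nabla\Phi}{\,\text{something involving }\Pi_{E^\perp}}]$, then Cauchy--Schwarz against $\E\|\Pi_{E^\perp}\nabla\Phi\|^2 \le k\eta$ (summing the bound $\E[\inr{u}{\nabla\Phi}^2]\le\eta$ over a $k$-dimensional piece, since $h$ only "sees" a $k$-dimensional subspace of $E^\perp$), giving the $\sqrt{k\eta}$ bound after the square root. The main obstacle is setting up this integration-by-parts / Cauchy--Schwarz cleanly so that the $E^\perp$-component of $\nabla\Phi$ appears paired with something of bounded $L^2(\gamma)$ norm, and controlling it only along the at-most-$k$-dimensional subspace that $h$ actually depends on; this is where the factor $k$ (rather than $n$) in $\sqrt{k\eta}$ comes from, and where the Poincaré-type estimate is used.

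$\qed$
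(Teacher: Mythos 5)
Your first half --- applying Theorem~\ref{thm:random-sampling} to $\bZ = \nabla\Phi(\bx)/C$ to get $\|A - \E[\nabla\Phi\otimes\nabla\Phi]\|_2 \le \eta/2$ with probability $1-\delta$, and then concluding $\E_{\bx}[\inr{v}{\nabla\Phi(\bx)}^2] \le \eta$ for every unit $v \in E^\perp$ --- is exactly the paper's reduction. The problem is the second half: you correctly identify the ingredients (Cauchy--Schwarz, a Poincar\'e-type estimate, and the fact that only a $\le k$-dimensional subspace outside $E$ matters), and you correctly diagnose that the content lies in $h$'s relevant subspace $F$ not being contained in $E$, but you never actually assemble the argument; you explicitly leave the "main obstacle" open. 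That obstacle is precisely the one nontrivial idea of the proof.

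The missing move is a duality step: since $\calA_E$ is self-adjoint (Item~\ref{it:self-adjoint} of Lemma~\ref{lem:avg-props}) and $h = \calA_{E'}h$ for $E' = \spn(E\cup F)$, one writes
$\E[\Phi\,(\calA_E h - h)] = \E[(\calA_E\Phi - \calA_{E'}\Phi)\,h]$, and then Cauchy--Schwarz with $|h|\le 1$ reduces everything to bounding $\|\calA_E\Phi - \calA_{E'}\Phi\|_{L^2(\gamma)}$. This transfers \emph{all} gradient estimates onto the Lipschitz function $\Phi$: Jensen plus the Gaussian Poincar\'e inequality (applied in the variables of $J$, the $\le k$-dimensional orthogonal complement of $E$ inside $E'$) give $\|\calA_E\Phi - \calA_{E'}\Phi\|_{L^2}^2 \le \E\|\Pi_J\nabla\Phi\|^2 \le k\eta$, since $J \subseteq E^\perp$. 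By contrast, the routes you float --- Gaussian integration by parts producing $\nabla h$, or bounding $\E\|\Pi_{E^\perp}\nabla h\|^2$ --- would fail or at least require extra hypotheses: $h$ is only assumed $s$-smooth and bounded, not Lipschitz, so its gradient is not controlled and must never appear in the estimate. Until you supply the self-adjointness/Cauchy--Schwarz/Poincar\'e chain applied to $\Phi$ rather than to $h$, the proof is incomplete.
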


At a high level, this theorem says that for any Lipschitz function $\Phi$, its
correlation with the best linear $k$-junta essentially remains preserved if
we restrict our attention to a subspace obtained by spectrally truncating the
empirical covariance matrix of $\nabla \Phi$.
It is the first step in realizing part I. from Section~\ref{sec:techniques}
(the other step is to handle the fact that we can only estimate $A$).

The proof of Theorem~\ref{lem:idealized} follows from the following lemma.

\begin{lemma}~\label{lem:completeness}
    Let $E$ be a subspace of $\mathbb{R}^n$ and let
    $f: \mathbb{R}^n \rightarrow \mathbb{R}$ be such that for every unit vector
    $v \in E^{\perp}$, $\mathbf{E}[\langle v, \nabla f(\bx) \rangle^2] \le \delta$. 
    Then for every $s\geq 0$, and for every $h \in \Junta_{\mathbb{R}^n,k,s}$, we have
    \begin{equation} ~\label{eq:compare-small-gradient}
        |\mathbf{E}_{\bx}[f(\bx) \cdot (\calA_E h)(\bx)] - \mathbf{E}_{\bx}[f(\bx) \cdot h(\bx)] |\le \sqrt{k \delta}. 
    \end{equation}
\end{lemma}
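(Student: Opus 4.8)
The plan is to move the averaging operator off of $h$ and onto $f$, and then to observe that the operator we end up applying to $f$ only disturbs a $k$-dimensional band of directions inside $E^\perp$, along each of which $f$ has small gradient by hypothesis. Concretely, let $E'$ be the ($k$-dimensional) subspace with $h\in\Junta_{E'}$, and put $W=\Pi_{E^\perp}(E')$, so $W\subseteq E^\perp$ and $\dim W\le k$. Since every $v\in E'$ decomposes as $v=\Pi_E v+\Pi_{E^\perp}v\in E\oplus W$, we have $E'\subseteq E\oplus W$ and hence $h\in\Junta_{E\oplus W}$. Now by Lemma~\ref{lem:avg-props}, part~\ref{it:average-inner-product} applied to the $\Junta_{E\oplus W}$ function $h$, we get $\mathbf{E}_{\bx}[f(\bx)h(\bx)]=\mathbf{E}_{\bx}[(\calA_{E\oplus W}f)(\bx)\,h(\bx)]$, while part~\ref{it:self-adjoint} gives $\mathbf{E}_{\bx}[f(\bx)\,\calA_E h(\bx)]=\mathbf{E}_{\bx}[(\calA_E f)(\bx)\,h(\bx)]$. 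Subtracting and using Cauchy--Schwarz with $\|h\|_2\le1$, the left-hand side of~\eqref{eq:compare-small-gradient} is at most $\|\calA_{E\oplus W}f-\calA_E f\|_{L^2(\gamma_n)}$, so it suffices to prove $\|\calA_{E\oplus W}f-\calA_E f\|_2\le\sqrt{k\delta}$.

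To do this, set $q=\calA_{E\oplus W}f\in\Junta_{E\oplus W}$. By the tower property (Lemma~\ref{lem:avg-props}, part~\ref{it:average-tower}), $\calA_E q=\calA_E f$, so $\calA_{E\oplus W}f-\calA_E f=q-\calA_E q$. Since $q$ depends on $\bx$ only through $\Pi_E\bx$ and $\Pi_W\bx$, and $\calA_E q(\bx)=\mathbf{E}[q(\bx)\mid\Pi_E\bx]$, we have $\|q-\calA_E q\|_2^2=\mathbf{E}_{\bx}\big[\mathrm{Var}(q(\bx)\mid\Pi_E\bx)\big]$; moreover, conditionally on $\Pi_E\bx$ the law of $q(\bx)$ is that of a function of the standard Gaussian $\Pi_W\bx$ on the (at most $k$-dimensional) subspace $W$. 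The Gaussian Poincar\'e inequality on $W$ therefore gives $\mathrm{Var}(q(\bx)\mid\Pi_E\bx)\le\mathbf{E}\big[\|\Pi_W\nabla q(\bx)\|^2\mid\Pi_E\bx\big]$, hence $\|q-\calA_E q\|_2^2\le\mathbf{E}_{\bx}[\|\Pi_W\nabla q(\bx)\|^2]$. Next I would pass from $q$ back to $f$: by Lemma~\ref{lem:avg-props}, part~\ref{it:gradient-of-average}, together with $W\subseteq E\oplus W$, we have $\Pi_W\nabla q(x)=\mathbf{E}_{\bz}[\Pi_W\nabla f(\Pi_{E\oplus W}x+\Pi_{(E\oplus W)^\perp}\bz)]$, so Jensen's inequality and the fact that $\Pi_{E\oplus W}\bx+\Pi_{(E\oplus W)^\perp}\bz$ is distributed as $\bx$ (the same manipulation as in Claim~\ref{claim:gradient-AE}, part~\ref{it:variancee-gradient-AE}) yield $\mathbf{E}_{\bx}[\|\Pi_W\nabla q(\bx)\|^2]\le\mathbf{E}_{\bx}[\|\Pi_W\nabla f(\bx)\|^2]$. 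Finally, picking an orthonormal basis $w_1,\dots,w_m$ of $W$ with $m\le k$ (each $w_j$ a unit vector of $E^\perp$), the hypothesis on $f$ gives $\mathbf{E}_{\bx}[\|\Pi_W\nabla f(\bx)\|^2]=\sum_{j=1}^m\mathbf{E}_{\bx}[\langle w_j,\nabla f(\bx)\rangle^2]\le m\delta\le k\delta$. Chaining the inequalities yields $\|\calA_{E\oplus W}f-\calA_E f\|_2\le\sqrt{k\delta}$, as required.

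The conceptual crux --- and the step I expect to be the genuine obstacle --- is the very first move: even though $\dim E^\perp$ may be of order $n$, the junta $h$ feels only the $\le k$ directions spanned by $W=\Pi_{E^\perp}(E')$, so we should replace ``average over all of $E^\perp$'' by ``average over $W$'' before estimating anything. Working with $\calA_E$ directly and bounding $\|f-\calA_E f\|_2^2$ by the ambient Poincar\'e inequality would instead cost $\dim(E^\perp)\cdot\delta$, which is useless. Once this reduction is in place, everything else is routine: bookkeeping with the averaging operators (tower property, self-adjointness, and the gradient identity of Lemma~\ref{lem:avg-props}), one application of the Gaussian Poincar\'e inequality on a $k$-dimensional subspace, and one use of Jensen's inequality; the only mild care needed is to track which coordinates each operator averages over. (Throughout we may assume $f\in L^2(\gamma_n)$ so that the inner products are well defined; this is automatic in all our applications, where $f$ is Lipschitz and hence bounded.)
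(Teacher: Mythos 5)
Your proposal is correct and follows essentially the same route as the paper: the subspace $W=\Pi_{E^\perp}(E')$ is exactly the paper's complement $J$ of $E$ inside $\spann(E\cup E')$, and both arguments combine self-adjointness of $\calA_E$, Cauchy--Schwarz, Jensen, and the Gaussian Poincar\'e inequality on the at most $k$-dimensional subspace $W$. The only (immaterial) difference is the order of the last two steps---you apply Poincar\'e to the averaged function $q$ and then Jensen to pass from $\nabla q$ to $\nabla f$, while the paper applies Jensen first and Poincar\'e directly to $f$ on the $J$-slice.
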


\begin{proofof}{Theorem~\ref{lem:idealized}}
    Define the matrix $A_{\mathsf{avg}}$ as
    \[
        A_{\mathsf{avg}} = \mathbf{E}_{\bx} [\nabla \Phi(\bx) \cdot \nabla \Phi(\bx)^t].
    \]
    Observe that by Theorem~\ref{thm:random-sampling}, with probability
    $1-\delta$, we have that $\Vert A_{\mathsf{avg}} - A \Vert \le \eta/2$.
    This implies that for any $E \supseteq \mathsf{E}_{\eta/2}(A)$ and unit vector $v \in E^{\perp}$, we have
    $v \in \mathsf{E}_{\eta/2}(A)^\perp$ and hence
    \begin{align}\label{eq:bound-v-corr-A}
        \opE_{\bx} [\langle v, \nabla \Phi(\bx)\rangle^2] = v^T \cdot A_{\mathsf{avg}} \cdot v \le v^T \cdot A \cdot v + \frac{\eta}{2} \le \eta.
    \end{align}
    Then, applying Lemma~\ref{lem:completeness} to the function $h$ and the subspace $E$, we have the proof.
\end{proofof}
We now turn to proving Lemma~\ref{lem:completeness}.
\begin{proofof}{Lemma~\ref{lem:completeness}}
    Let $h\in \Junta_F$ for some subspace $F$ with $\mathsf{dim}(F) \le k$. Let $E' = \spann (E \cup F)$ and define $g = \calA_E h$. Observe that $g$ is $s$-smooth (by Item~2 of Proposition~\ref{it:prop-2}) and thus
    $g \in \Junta_{E,k,s}$. Also, observe that $h = \calA_{E'}h$. We now have
    \begin{align}
        \big|\mathop{\mathbf{E}}_{\bx} [f(\bx) \cdot g(\bx)] - \mathop{\mathbf{E}}_{\bx} [f(\bx) \cdot h(\bx)] \big| & = \big|\mathop{\mathbf{E}}_{\bx} [f(\bx) \cdot \calA_{E} h(\bx)] - \mathop{\mathbf{E}}_{\bx} [f(\bx) \cdot \calA_{E'}h(\bx)] \big| \nonumber                                                                      \\
                                                                                                                     & = \big|\mathop{\mathbf{E}}_{\bx} [\calA_{E}f(\bx) \cdot  h(\bx)] - \mathop{\mathbf{E}}_{\bx} [\calA_{E'}f(\bx) \cdot h(\bx)] \big| \ \textrm{(Item~\ref{it:self-adjoint} of Lemma~\ref{lem:avg-props})} \nonumber \\
                                                                                                                     & \le \big(\mathop{\mathbf{E}}_{\bx} [(\calA_{E}f(\bx) - \calA_{E'}f(\bx))^2]\big)^{\frac12} \ \textrm{(by Cauchy-Schwarz)}. \label{eq:jensen-ae-aep}
    \end{align}
    We now seek to bound the right hand side of \eqref{eq:jensen-ae-aep}. Towards this, let  us split $\mathbb{R}^n = E' \oplus H$ and $E' = E \oplus J$. Here $H$ is the orthogonal complement of $E'$ and $J$ is the orthogonal complement of $E$ inside $E'$.
    For any $x \in \mathbb{R}^n$, we express it as $(x_H, x_J, x_E)$ ($x_J$ represents the component of $x$ along the subspace $J$ and likewise for $H$ and $E$). Observe that for $x=(x_H, x_J, x_E)$, we have
    \begin{align}
        \calA_{E'} f(x) = \mathbf{E}_{\bx'_H} [f(\bx'_H, x_J, x_E)]  \ \textrm{and} \ \calA_{E} f(x) = \mathbf{E}_{\bx'_H, \bx'_J} [f(\bx'_H, \bx'_J, x_E)]. \label{eq:average-formula}
    \end{align}
    Thus, we now have the following:
    \begin{align}
        \mathop{\mathbf{E}}_{\bx} [(\calA_{E}f(\bx) - \calA_{E'}f(\bx))^2] & =  \mathbf{E}_{\bx_H, \bx_J, \bx_E}[ (\mathbf{E}_{\bx'_H} [f(\bx'_H, \bx_J, \bx_E)] - \mathbf{E}_{\bx'_H, \bx'_J} [f(\bx'_H, \bx'_J, \bx_E)])^2
        ] \nonumber                                                                                                                                                                                                              \\
                                                                           & =  \mathbf{E}_{ \bx_J, \bx_E}[ (\mathbf{E}_{\bx'_H} [f(\bx'_H, \bx_J, \bx_E)] - \mathbf{E}_{\bx'_H, \bx'_J} [f(\bx'_H, \bx'_J, \bx_E)])^2 \nonumber \\
                                                                           & \leq \mathbf{E}_{ \bx_H,\bx_J, \bx_E}[ (f(\bx_H, \bx_J, \bx_E) - \mathbf{E}_{ \bx'_J} [f(\bx_H, \bx'_J, \bx_E)])^2]. \label{eq:jensen-2}
    \end{align}
    The last inequality follows from Jensen's inequality.  Next, for any $x= (x_J, x_H, x_E)$, define $f_{x_H, x_E}: \mathbb{R}^J \rightarrow \mathbb{R}$ as $f_{x_H, x_E}(x_J) = f(x_H, x_J,x_E)$. Then,
    \begin{equation}~\label{eq:nabla-projection}
        \Pi_J \nabla f(x) = \nabla f_{x_H, x_E}.
    \end{equation}
    Now applying the definition of $f_{x_H, x_E}$ to \eqref{eq:jensen-2} and subsequently applying the Gaussian Poincar\'{e} inequality, we get
    \begin{align*}
        \mathop{\mathbf{E}}_{\bx} [(\calA_{E}f(\bx) - \calA_{E'}f(\bx))^2] & \le \mathbf{E}_{ \bx_H,\bx_J, \bx_E} [(f_{\bx_H, \bx_E} (\bx_J)  - \mathbf{E}_{\bx_{J}'} [f_{\bx_H, \bx_E} (\bx'_J)])^2]. \\
                                                                           & \le \mathbf{E}_{ \bx_H,\bx_J, \bx_E} [\Vert \nabla f_{\bx_H, \bx_E} (\bx_H, \bx_J, \bx_E)\Vert_2^2].
    \end{align*}
    Finally, applying \eqref{eq:nabla-projection}, we get
    \begin{align*}
        \mathop{\mathbf{E}}_{\bx} [(\calA_{E}f(\bx) - \calA_{E'}f(\bx))^2] & \le \mathbf{E}_{ \bx_H,\bx_J, \bx_E} [\Vert\Pi_J \nabla f (\bx_H, \bx_J, \bx_E)\Vert_2^2].
    \end{align*}
    Now, by our assumption, for any direction $v$ in $J$ (since it is orthogonal to $E$),
    $\mathbf{E}_{\bx} [\Vert \Pi_v \nabla f (\bx) \Vert_2^2 ] \le \delta$. Since the dimension of $J$ is at most $k$, we get that
    \[
        \mathop{\mathbf{E}}_{\bx} [(\calA_{E}f(\bx) - \calA_{E'}f(\bx))^2]  \le k \delta.
    \]
    Combining with \eqref{eq:jensen-ae-aep}, we get the claim.

\end{proofof}

\section{Roadmap for proving Theorem~\ref{thm:main-1}, Theorem~\ref{thm:main-2}, Theorem~\ref{thm:main-4} and Theorem~\ref{thm:C_test}}
In this section, we give a roadmap for our main results -- namely, Theorem~\ref{thm:main-1}  ,  Theorem~\ref{thm:main-2}, Theorem~\ref{thm:main-4} and Theorem~\ref{thm:C_test}. 
First of all, 
observe that instantiating Theorem~\ref{thm:C_test} for the class of linear $k$-juntas with surface area at most $s$ (which are $O(s)$-smooth  by Proposition~\ref{prop:smooth-inclusion}) implies Theorem~\ref{thm:main-1}. 
As mentioned earlier, noise tolerant testing for a class is equivalent to computing the maximum correlation between a function and the same class. Thus, we will prove the  following (equivalent) version of Theorem~\ref{thm:C_test}. 
\begin{theorem}~\label{thm:C_test_new}
For any class $\mathcal{C}$ of functions mapping $\mathbb{R}^k \rightarrow [-1,1]$ (each of which is $s$-smooth), there is an algorithm \textsf{Robust-$\mathcal{C}$-test} which has the following guarantee: 
given error parameter $\epsilon>0$ and oracle access to $f: \mathbb{R}^n \rightarrow [-1,1]$, it outputs an estimate $\hat{\rho}_{\mathbb{R}^n, \mathcal{C}}(f)$ such that
\[
\big|\hat{\rho}_{\mathbb{R}^n, \mathcal{C}} (f)-{\rho}_{\mathbb{R}^n, \mathcal{C}}(f)\big| \le \epsilon. 
\]
The query complexity is $k^{\mathsf{poly}(s/\epsilon)}$. 
\end{theorem}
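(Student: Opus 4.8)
The plan is to reduce to the Lipschitz case, run a gradient-based PCA to extract a subspace $E\subseteq\R^n$ whose dimension is independent of $n$, and then estimate $f$'s correlation with every $s$-smooth $k$-junta supported in $E$ by brute force over a net, reusing one small common batch of queries across all net elements. Fix $t=\Theta((\eps/s)^2)$ and set $\Phi_0=P_tf$: since every $\Phi\in\Ind_n(\calC)$ is $s$-smooth, self-adjointness of $P_t$ and Definition~\ref{def:s-smooth} give $|\E_\bx[f(\bx)\Phi(\bx)]-\E_\bx[\Phi_0(\bx)\Phi(\bx)]|=|\E_\bx[f(\bx)(\Phi(\bx)-P_t\Phi(\bx))]|\le s\sqrt t=O(\eps)$, so it suffices to estimate $\rho_{\R^n,\calC}(\Phi_0)$ to within $O(\eps)$, and by Fact~\ref{fact:P_t-lipschitz} the function $\Phi_0$ is $L$-Lipschitz with $L=O(s/\eps)$. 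Draw $M=\poly(s/\eps)\cdot k^2$ i.i.d.\ samples $\bx_1,\dots,\bx_M\sim\gamma_n$, put $\eta=\Theta(\eps^2/k)$, and use \textsf{Compute-inner-product} (Lemma~\ref{lem:compute-inner-prod-gradient}) to estimate all entries of $G_{ij}=\langle\nabla\Phi_0(\bx_i),\nabla\Phi_0(\bx_j)\rangle$ to additive error $\ll\eta$. The nonzero spectrum of $A=\frac{1}{M}\sum_j\nabla\Phi_0(\bx_j)\nabla\Phi_0(\bx_j)^t$ coincides with that of $\frac{1}{M}G$, so this determines $E:=\mathsf{E}_{\eta/2}(A)$ (up to a negligible perturbation) together with an orthonormal basis $e_1,\dots,e_d$ of $E$ written explicitly as $e_l=\sum_j c_{lj}\nabla\Phi_0(\bx_j)$ with weights $c_{lj}$ computed from $G$; moreover $\tr(A)\le L^2$ forces $d:=\dim E\le 2L^2/\eta=\poly(s/\eps)\cdot k$, independent of $n$. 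By Theorem~\ref{thm:random-sampling} (Rudelson--Vershynin), $\|A-\E_\bx[\nabla\Phi_0(\bx)\nabla\Phi_0(\bx)^t]\|\le\eta/2$ with probability $1-\delta$ for this $M$ (after rescaling by $L^2$), so Theorem~\ref{lem:idealized} applies and gives $|\E_\bx[\Phi_0(\bx)(\Avg_E h)(\bx)]-\E_\bx[\Phi_0(\bx)h(\bx)]|\le\sqrt{k\eta}=O(\eps)$ for every $s$-smooth $k$-junta $h$.

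Since $\Avg_E\Phi$ is an $s$-smooth $k$-junta supported in $E$ for every $\Phi\in\Ind_n(\calC)$, the last bound reduces the task to estimating, to within $O(\eps)$, the maximum of $\E_\bx[\Phi_0(\bx)(g\circ\Pi_E)(\bx)]$ over $s$-smooth $k$-juntas $g$ on $\R^d\cong E$ whose ``function part'' lies within $O(\eps)$ in $L^2(\gamma_k)$ of the averages $\Avg_{E'}h$ of members $h\in\calC$ over subspaces $E'$ --- a purely combinatorial restriction on $\calC$ (for Theorem~\ref{thm:main-2} one takes $\calC$ to be all $s$-smooth functions, with no restriction). Soundness of accepting such a $g$ follows by lifting it to an honest $\Phi_g\in\Ind_n(\calC)$ on $\R^n$ and applying Theorem~\ref{lem:idealized} to $\Phi_g$; completeness follows by applying the displayed bound to an optimal $\Phi$ (equivalently, from the second bullet of Theorem~\ref{thm:main-4}). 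We enumerate an $\eps$-net $\Net=\Net_{\mathrm{Gr}}\times\Net_{\mathrm{fun}}$, where $\Net_{\mathrm{Gr}}$ is an $\eps$-net of $k$-dimensional subspaces of $\R^d$ (of size $(d/\eps)^{O(kd)}$, so $\log|\Net_{\mathrm{Gr}}|=\poly(s/\eps)\cdot k^2\log k$) and $\Net_{\mathrm{fun}}$ is an $\eps$-net of $s$-smooth functions $\R^k\to[-1,1]$, built by truncating Hermite expansions at degree $D=\poly(s/\eps)$ (the $s$-smoothness $\Leftrightarrow$ Hermite-decay equivalence makes the tail $O(\eps)$) and netting the $\binom{k+D}{D}\le k^{O(D)}$ coefficients, so $|\Net_{\mathrm{fun}}|=\exp(k^{\poly(s/\eps)})$ and $\log|\Net|=k^{\poly(s/\eps)}$; we also assume each net function is $\poly(s/\eps)$-Lipschitz, by one more application of $P_{t'}$ with $t'=\poly(\eps/s)$. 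Now draw $N=O(\log|\Net|/\eps^2)=k^{\poly(s/\eps)}$ fresh $\bz_1,\dots,\bz_N\sim\gamma_n$; for each $\bz_i$ estimate $\Phi_0(\bz_i)$ directly, and estimate the coordinates $\langle e_l,\bz_i\rangle$ ($l\le d$) of $\Pi_E\bz_i$ by combining the estimates $\mathsf{Est}(\bx_j,\bz_i)\approx\langle\nabla\Phi_0(\bx_j),\bz_i\rangle$ from \textsf{Project-on-gradient} (Lemma~\ref{lem:compute-project-gradient}) with the weights $c_{lj}$, obtaining $\widehat{\Pi_E\bz_i}\in\R^d$; then for every $g$ in the (restricted) net compute the empirical correlation $\frac{1}{N}\sum_i\widehat{\Phi_0(\bz_i)}\,g(\widehat{\Pi_E\bz_i})$, using no new queries, and output its maximum. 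The query count is $M^2\cdot\poly(s/\eps,k)+N\cdot\poly(s/\eps,k)=k^{\poly(s/\eps)}$; rescaling $\eps$ finishes.

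Correctness rests on three estimates, which I expect to be the technical heart. (i) The estimated Gram matrix determines $E$ accurately: Weyl/Davis--Kahan applied to the $O(\eta)$ perturbation of $\frac{1}{M}G$, plus Theorem~\ref{thm:random-sampling} for the sampling error of $A$ --- routine given $\|\nabla\Phi_0\|\le L$. (ii) The net covers: every $s$-smooth $k$-junta supported in $E$ is $O(\eps)$-close in $L^2(\gamma)$ to a net element, so correlations with $\Phi_0$ are preserved up to $O(\eps)$ --- here one uses the Hermite-decay characterization of $s$-smoothness and, crucially, that $d=\dim E=\poly(s/\eps)\cdot k$ and not $\poly(n)$. (iii) The implicit projection is $L^2$-accurate, $\|\widehat{\Pi_E\bz}-\Pi_E\bz\|_{L^2(\gamma)}$ small, so $\|g(\widehat{\Pi_E\bz})-g(\Pi_E\bz)\|_{L^2(\gamma)}\le L_g\,\|\widehat{\Pi_E\bz}-\Pi_E\bz\|_{L^2(\gamma)}=O(\eps)$ for the $\poly(s/\eps)$-Lipschitz net functions. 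The last point is the main obstacle, and is precisely the gradient-based PCA difficulty flagged in Section~\ref{sec:techniques}: the sampled gradients $\nabla\Phi_0(\bx_j)$ are not orthonormal and can be nearly parallel, so expressing $e_l$ through them forces weights $c_{lj}$ of size $\sim 1/\sqrt{M\eta}$, and the Chebyshev-type tail of \textsf{Project-on-gradient} must be propagated through this ill-conditioned change of basis; it is exactly to keep $1/\sqrt{M\eta}$ bounded independently of $n$ that we spectrally truncate $A$ at level $\eta/2$ and work with $A_{\ge\eta/2}^{-1}$, after which the $L^2$ bound follows by combining the linear-algebraic estimate with the second-moment control on $\mathsf{Est}$ and a union bound over the $d$ coordinates and the $N$ samples. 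The remaining bookkeeping --- fixing $t,t',\eta$ and the internal parameters of the two algorithmic lemmas, and checking that all failure probabilities union-bound to $o(1)$ while the query counts multiply to $k^{\poly(s/\eps)}$ --- is routine.
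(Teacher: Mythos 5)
Your proposal is correct and follows essentially the same route as the paper: smooth via $P_t$, run gradient-based PCA with spectral truncation and apply Theorem~\ref{lem:idealized}, handle the averaged/lifted functions (the paper's $\calC^\ast$ construction), build the net as (subspace net) $\times$ (Hermite-truncated smooth-function net), and estimate correlations over the net via \textsf{Project-on-gradient} through the ill-conditioned change of basis. The three estimates you defer as the ``technical heart'' are precisely the paper's Lemmas~\ref{lem:almost-isometry}, \ref{lem:almost-same-subspace}, \ref{lem:good-subspace-finally} and the pseudoinverse-stability bound (Lemma~\ref{lem:pseudoinverse-stab}), and you have correctly identified both the obstacles and the tools needed to prove them.
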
 
Note that by instantiating Theorem~\ref{thm:C_test_new} with the class of $s$-smooth functions, we get Theorem~\ref{thm:main-2}.  Finally, we note that the proof of Theorem~\ref{thm:C_test_new} can be easily modified to yield Theorem~\ref{thm:main-4}. This is explained in Section~\ref{sec:linvar}. Thus, we now focus on proving Theorem~\ref{thm:C_test_new} (which is equivalent to Theorem~\ref{thm:C_test}).

To do this, our first step is to replace the function $f$
by a smoothed version:
\begin{lemma}~\label{lem:smooth-Lip}
    For smoothness parameter $s$, error parameter $\kappa>0$ and $f: \mathbb{R}^n \rightarrow [-1,1]$, the function $\fsm$
    defined by $\fsm = P_{\kappa^2/s^2} f$ has the following guarantees:
    \begin{enumerate}
        \item $f \in \mathcal{C}^{\infty}$ and $f$ is $L$-Lipschitz for  $L = O(s^2/\kappa^2)$.
        \item For any $x \in \mathbb{R}^n$, $\fsm(x)$ can be computed to error $\eta/10$ with probability $1-\delta$ using $T(\eta,\delta) = \poly(1/\eta,\log(1/\delta))$ queries to the oracle for $f: \mathbb{R}^n \rightarrow [-1,1]$.
        \item
              Let $g: \mathbb{R}^n \rightarrow [-1,1]$ be a $s$-smooth function. Then, 
              \[
                  \big| \mathbf{E}_{\bx} [\fsm (\bx) g(\bx)] - \mathbf{E}_{\bx} [f (\bx) g(\bx)]\big| \le \frac{\kappa}{2}.
              \]
    \end{enumerate}
\end{lemma}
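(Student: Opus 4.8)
The plan is to verify the three claimed properties of $\fsm = P_{\kappa^2/s^2} f$ in order, using the semigroup and smoothing facts recalled earlier in the paper.

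\textbf{Part 1 (smoothness of $\fsm$).} Set $t_0 = \kappa^2/s^2$. By Definition~\ref{def:Ornstein}, $P_{t_0} f$ is infinitely differentiable for $t_0 > 0$, so $\fsm \in \mathcal{C}^\infty$. For the Lipschitz bound, apply Fact~\ref{fact:P_t-lipschitz} directly: $P_{t_0} f$ is $\frac{C}{\sqrt{t_0}}$-Lipschitz, and $\frac{C}{\sqrt{t_0}} = \frac{Cs}{\kappa}$. Hmm — the statement claims $L = O(s^2/\kappa^2)$, which is weaker (larger) than $Cs/\kappa$ when $s/\kappa$ is large, so the bound from Fact~\ref{fact:P_t-lipschitz} is more than enough; I would just record $L = O(s^2/\kappa^2)$ (or note the stronger $O(s/\kappa)$) and move on.

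\textbf{Part 2 (query efficiency of evaluating $\fsm$).} Here $\fsm(x) = \E_{\by}[f(e^{-t_0} x + \sqrt{1 - e^{-t_0}}\, \by)]$ is an expectation of a $[-1,1]$-valued random variable. I would draw $m$ i.i.d.\ Gaussian samples $\by_1, \dots, \by_m$, query $f$ at the corresponding points, and average. By Hoeffding's inequality, $m = O(\eta^{-2} \log(1/\delta))$ samples suffice to estimate $\fsm(x)$ to additive error $\eta/10$ with probability $1-\delta$; each sample costs one query to $f$, giving $T(\eta, \delta) = \poly(1/\eta, \log(1/\delta))$ total queries.

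\textbf{Part 3 (correlation is nearly preserved).} This is the substantive step. Write $\E_{\bx}[\fsm(\bx) g(\bx)] - \E_{\bx}[f(\bx) g(\bx)] = \E_{\bx}[(P_{t_0} f(\bx) - f(\bx)) g(\bx)]$. Since $g$ is bounded by $1$ in absolute value, this is at most $\E_{\bx}[|P_{t_0} f(\bx) - f(\bx)|]$ — but $f$ need not be smooth, so I cannot bound this directly. The fix is to move the smoothing onto $g$ using self-adjointness of $P_t$ in $L^2(\gamma)$: $\E_{\bx}[P_{t_0} f(\bx) g(\bx)] = \E_{\bx}[f(\bx) P_{t_0} g(\bx)]$. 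Hence the difference equals $\E_{\bx}[f(\bx)(P_{t_0} g(\bx) - g(\bx))]$, which is at most $\E_{\bx}[|g(\bx) - P_{t_0} g(\bx)|]$ since $|f| \le 1$. Now invoke the hypothesis that $g$ is $s$-smooth (Definition~\ref{def:s-smooth}): this gives $\E_{\bx}[|g(\bx) - P_{t_0} g(\bx)|] \le s\sqrt{t_0} = s \cdot \frac{\kappa}{s} = \kappa$. That is off by a factor of $2$ from the claimed $\kappa/2$, so I would simply set $t_0 = \kappa^2/(4s^2)$ instead (a harmless change that also only improves the Lipschitz constant), obtaining the bound $\kappa/2$ exactly.

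\textbf{Main obstacle.} The only real subtlety is Part 3: the naive bound fails because $f$ is an arbitrary bounded function with no smoothness guarantee, and the entire point is that the \emph{competitor} $g$ is smooth, not $f$. Recognizing that self-adjointness of $P_t$ lets one transfer the smoothing from $f$ to $g$ is the key move; once that is done, the $s$-smoothness definition closes the argument immediately. Parts 1 and 2 are routine.
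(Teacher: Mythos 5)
Your proof is correct and follows essentially the same route as the paper's: Parts 1 and 2 are immediate from Fact~\ref{fact:P_t-lipschitz} and Chernoff/Hoeffding sampling, and Part 3 uses self-adjointness of $P_t$ in $L^2(\gamma)$ to move the smoothing from $f$ onto $g$ and then invokes the $s$-smoothness of $g$, exactly as the paper does. You are also right that the natural bound is $\kappa$ rather than $\kappa/2$ (the paper's own proof only concludes $O(\kappa)$), and your fix of taking $t_0 = \kappa^2/(4s^2)$ is the clean way to match the stated constant.
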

\begin{proof}
    The first property follows from Fact~\ref{fact:P_t-lipschitz} and the definition of the
    noise operator $P_t$.
    The second property follows easily from the definition of $P_t$:
    we simply have to take enough samples to estimate the expectation.
    Finally, suppose $g$ is a $s$-smooth function. Then, it follows that $\mathbf{E} [| P_{\kappa^2/s^2} g (\bx) - g(\bx) |] = O(\kappa)$. It follows that 
    \begin{align*}
    \big| \mathbf{E}_{\bx} [\fsm (\bx) g(\bx)] - \mathbf{E}_{\bx} [f (\bx) g(\bx)] \big| &=  \big| \mathbf{E}_{\bx} [ P_{\kappa^2/s^2} f (\bx) g(\bx)] - \mathbf{E}_{\bx} [f (\bx) g(\bx)] \big| \\ &= \big| \mathbf{E}_{\bx} [( P_{\kappa^2/s^2} g (\bx) - g(\bx)) \cdot f (\bx) ] \big| \\ 
    &\le O(\kappa).  
    \end{align*} 
    
%
%
\end{proof}
Using Lemma~\ref{lem:smooth-Lip}, it suffices to prove Theorem~\ref{thm:main-2} for Lipschitz functions. In particular, we shall prove the following version of Theorem~\ref{thm:main-2} for Lipschitz functions.
\begin{theorem}~\label{thm:main-2-Lipschitz}
    For any class $\mathcal{C}$, there is an algorithm \textsf{Correlation-smooth-junta-$\mathcal{C}$}  with the following guarantee:
    Let $\fsm:\mathbb{R}^n \rightarrow [-1,1]$ be an infinitely differentiable $L$-Lipschitz function such that $\fsm = P_u f$ for a parameter $u>0$ (where $f:\R^n \rightarrow [-1,1]$). The algorithm
    is given oracle access to the functions $\fsm$ and $f$. It also gets as inputs,  error parameter $\epsilon>0$, junta arity parameter $k$ and outputs an
    estimate $\hat{\rho}_{\mathbb{R}^n,\mathcal{C}}(\fsm)$ (with probability at least $2/3$) with the following guarantee:
    \[
        |\hat{\rho}_{\mathbb{R}^n,\mathcal{C}}(\fsm) - {\rho}_{\mathbb{R}^n,\mathcal{C}}(\fsm)| \leq \epsilon.
    \]
   Here ${\rho}_{\mathbb{R}^n,\mathcal{C}}(\fsm)$ is the maximum correlation of $\fsm$ with any $s$-smooth $k$-linear junta. The query complexity of the algorithm is $\mathsf{poly}(L/u) \cdot k^{O(s^2/\epsilon^2)}$.
    Further, the algorithm also works even when we have a noisy oracle to $\fsm$ -- in particular, the above guarantee holds even when each evaluation of $\fsm(\cdot)$ at $x$ returns $\pm \eta$ additive error estimate for $\eta=\mathsf{poly}(u/L) \cdot k^{O(-s^2/\epsilon^2)}$.
\end{theorem}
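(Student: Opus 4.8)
The plan is to combine the structural result of Theorem~\ref{lem:idealized} with the two algorithmic ingredients (Lemmas~\ref{lem:compute-inner-prod-gradient} and~\ref{lem:compute-project-gradient}) and a net argument over $s$-smooth $k$-juntas on a low-dimensional space. Concretely, set $\eta = \Theta(\epsilon^2/k)$ and sample $M = \poly(L/u, k/\epsilon)$ points $\bx_1, \dots, \bx_M \sim \gamma_n$; by Theorem~\ref{lem:idealized} applied to $\Phi = \fsm$ (which is $L$-Lipschitz), with high probability every subspace $E \supseteq \Es_{\eta/2}(A)$ satisfies $|\E[\fsm \cdot \Avg_E h] - \E[\fsm \cdot h]| \le \sqrt{k\eta} \le \epsilon/3$ for every $h \in \Junta_{\R^n, k, s}$. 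In particular, taking $E = \Es_{\eta/2}(A)$ itself, which has dimension at most $O(L^2/\eta) = \poly(L/u, k/\epsilon)$, we lose only $\epsilon/3$ in correlation by restricting attention to $s$-smooth $k$-juntas whose relevant subspace lies inside $E$. So $\rho_{E, \calC}(\fsm) \ge \rho_{\R^n, \calC}(\fsm) - \epsilon/3$, and the reverse inequality is trivial since $\Ind_E(\calC) \subseteq \Ind_{\R^n}(\calC)$; hence it suffices to estimate $\rho_{E, \calC}(\fsm)$ to within $\epsilon/3$.

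The second step is that we never have $E$ explicitly — we only have oracle access — so we work with it \emph{implicitly}. Using Lemma~\ref{lem:compute-inner-prod-gradient} we can estimate all the Gram-matrix entries $\langle \nabla \fsm(\bx_i), \nabla \fsm(\bx_j) \rangle$ to small additive error, which lets us compute (an approximation to) the Gram matrix of $A$ and hence identify, in the $M$-dimensional coordinate system indexed by the sample gradients, the subspace $\Es_{\eta/2}(A)$ and an orthonormal basis $w_1, \dots, w_{k'}$ for a slightly enlarged version of it. Then, for a fresh Gaussian point $\bz \sim \gamma_n$, Lemma~\ref{lem:compute-project-gradient} lets us estimate $\langle \nabla \fsm(\bx_i), \bz\rangle$ for each $i$, and combining these with the (pseudo-)inverse of the Gram matrix we can compute the coordinates $(\langle w_1, \bz\rangle, \dots, \langle w_{k'}, \bz\rangle)$ of the projection $\Pi_E \bz$ up to small error. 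This reduces the ambient space to dimension $k' = \poly(L/u, k/\epsilon)$ without ever touching $n$. The spectral-truncation threshold $\eta/2$ is what makes the pseudoinverse well-conditioned, so the error in the implicit projection stays controlled.

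The third step is the net argument: the class $\Junta_{E, k, s}$ of $s$-smooth $k$-juntas on the $k'$-dimensional space $E$ admits an $(\epsilon/C)$-net $\Net$ of size $\exp\exp(O(s^2 \log k'/\epsilon^2)) = k^{\poly(s/\epsilon)}$ in $L^2(\gamma)$ — this uses that an $s$-smooth function has exponentially decaying Hermite tails, so it is close to a low-degree polynomial, and low-degree polynomials on $k$ variables form a bounded-dimensional space that can itself be netted. (One should also net over the choice of $k$-dimensional subspace of $E$, which only contributes a further $\poly(k')^{O(k)}$ factor to the net size.) For each candidate $\phi \in \Net$, we estimate $\E[\fsm(\bx) \cdot \phi(\Pi_E \bx)]$ by drawing fresh samples $\bz$, using the implicit-projection machinery to get the $E$-coordinates of each $\bz$, evaluating $\phi$ there, and querying $\fsm$ at $\bz$; averaging over $\poly(1/\epsilon, \log|\Net|)$ samples gives each correlation to within $\epsilon/C$ with high probability by a union bound. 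Output the maximum over $\phi \in \Net$. The estimate is within $\epsilon$ of $\rho_{E,\calC}(\fsm)$, hence within $\epsilon$ of $\rho_{\R^n, \calC}(\fsm)$ after adjusting constants. The noisy-oracle version is handled by the same proof, since all the lemmas above already tolerate $\pm\eta$ additive noise in evaluations of $\fsm$ as long as $\eta$ is polynomially small in the relevant parameters.

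I expect the main obstacle to be the error analysis threading through the implicit projection: the coordinates of $\Pi_E \bz$ are computed via $A_{\ge \eta/2}^{-1}$ applied to noisy estimates of $\langle \nabla\fsm(\bx_i), \bz\rangle$, and one has to verify that (i) the empirical matrix $A$ is spectrally close to $A_{\mathsf{avg}}$ (Theorem~\ref{thm:random-sampling}), (ii) the spectral truncation at $\eta/2$ doesn't discard directions that carry nontrivial correlation (this is exactly Theorem~\ref{lem:idealized}), and (iii) the accumulated additive errors in the Gram entries and in the $\mathsf{Est}(x, \bz)$ values, after being amplified by $\|A_{\ge \eta/2}^{-1}\| \le 2/\eta$, remain below $\epsilon$ — which forces the per-query noise tolerance $\eta_{\text{oracle}} = \poly(u/L) \cdot k^{-O(s^2/\epsilon^2)}$ claimed in the statement. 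Getting this bookkeeping right, while keeping the final query complexity at $\poly(L/u) \cdot k^{O(s^2/\epsilon^2)}$, is the delicate part; the structural and net components are comparatively clean modulo the Hermite-decay estimate for $s$-smooth functions.
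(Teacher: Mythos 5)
Your overall architecture matches the paper's: gradient-based implicit projection onto a spectrally truncated subspace, a double-exponential-size net of Lipschitz approximants built from Hermite truncation plus a subspace net, and empirical estimation of each net element's correlation via \textsf{Project-on-gradient}. Your "main obstacle" paragraph also correctly identifies the three error sources that the paper handles via Theorem~\ref{thm:random-sampling}, Theorem~\ref{lem:idealized}, and the pseudoinverse/Davis--Kahan stability bounds (Lemma~\ref{lem:pseudoinverse-stab}, Lemma~\ref{lem:almost-isometry}, Corollary~\ref{cor:correlation-subspace-stability}).

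There is, however, one genuine gap in your first step for a general class $\calC$. You claim $\rho_{E,\calC}(\fsm) \ge \rho_{\R^n,\calC}(\fsm) - \epsilon/3$ by applying Theorem~\ref{lem:idealized} to the maximizer $h_\ast \in \Ind_n(\calC)$. But Theorem~\ref{lem:idealized} only tells you that $\calA_E h_\ast$ has nearly the same correlation with $\fsm$ as $h_\ast$; it does not tell you that $\calA_E h_\ast$ belongs to $\Ind_E(\calC)$, and in general it does not. (E.g., if $\calC$ is the class of halfspaces, averaging a halfspace whose relevant direction is not contained in $E$ produces a function of the form $x \mapsto \Phi(\inr{a}{x}+b)$ with $\Phi$ the Gaussian c.d.f., which is not a halfspace.) So restricting the net to lifts of $\calC$ over $E$ can miss the witness entirely, while netting over \emph{all} $s$-smooth $k$-juntas on $E$ (as your third step actually does) can overshoot $\rho_{\R^n,\calC}$ from above. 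The paper resolves exactly this by introducing the averaged class $\calC^\ast$ (Definition~\ref{def:class-averaging}) and proving Claim~\ref{clm:class-averaging}, which identifies the $\calA_E$-images of $\Ind_n(\calC)$ with $\Ind_E(\calC^\ast)$; this gives both directions of the needed two-sided bound (inequalities \eqref{eq:errbound-1} and \eqref{eq:errbound-2}) and dictates that the net be built for $\calC^\ast$, pruned so that every net element is genuinely close to $\Ind(\calC^\ast)$ (Property~4 of Proposition~\ref{prop:net-existence-1}). Your argument is complete only in the special case where $\calC$ is closed under this averaging (e.g., $\calC$ equal to all $s$-smooth functions, for which $\calC^\ast = \calC$), which happens to be the case stated in the theorem's parenthetical but not the generality actually needed to derive Theorem~\ref{thm:C_test_new}.
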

To obtain Theorem~\ref{thm:C_test_new}, we let $\kappa = \epsilon/4$, $u = \kappa/s$. Define $\fsm = P_u f$. We now invoke Theorem~\ref{thm:main-2-Lipschitz} on $\fsm$ witth error parameter $\epsilon/2$ -- observe that the output $\hat{\rho}_{\mathbb{R}^n,\mathcal{C}}(\fsm)$ satisfies 
\[
|\hat{\rho}_{\mathbb{R}^n,\mathcal{C}}(\fsm) - {\rho}_{\mathbb{R}^n,\mathcal{C}}(\fsm)| <\epsilon. 
\]
Finally, observe that while we do not have oracle access to $\fsm$, Theorem~\ref{thm:main-2-Lipschitz} only requires to evaluate $\fsm(\cdot)$ with an additive error of $\pm \eta = \mathsf{poly}(u/L) \cdot k^{O(-s^2/\epsilon^2)}$. Observe that the number of queries made by Theorem~\ref{thm:main-2-Lipschitz} is $\mathsf{Q} = \mathsf{poly}(L/u) \cdot k^{\Theta(s^2/\epsilon^2)}$. Set $\delta = 1/(10 \mathsf{Q})$. Using Lemma~\ref{lem:smooth-Lip},  we can evaluate $\fsm(x)$ by making $\eta^{-2}\log(1/\delta)$  to the oracle for $f$. For our choice of $\delta$, this means that with probability $9/10$, all our evaluations of $\fsm(\cdot)$ are $\pm \eta$ accurate. This means that we can simulate our queries to $\fsm$ by using the oracle for $f$ with a multiplicative ovehead of $\eta^{-2} \log (1/\delta)$. Plugging in the values of $\eta$ and $\delta$, we get the final claim. 


\section{Proof of Theorem~\ref{thm:main-2-Lipschitz}}
We now turn to the proof of Theorem~\ref{thm:main-2-Lipschitz}. For the moment, we will just assume that we can evaluate $\fsm$ at any point $x$ exactly. From the description of our algorithm, it would be clear that the guarantee of algorithm continues to hold even if each evaluation of $\fsm(x)$ has an additive error of $ \pm \eta = \mathsf{poly}(u) \cdot k^{O(-L^2/\epsilon^2)}$. We will bring this to attention of the reader at the relevant points.
The algorithm \textsf{Correlation-smooth-junta} invokes two crucial subroutines. The first is the routine \textsf{Implicit projection} described in  Figure~\ref{fig:DS}.

\subsection{Implicit projection algorithm}

\begin{figure}[tb]
    \hrule
    \vline
    \begin{minipage}[t]{0.98\linewidth}
        \vspace{10 pt}
        \begin{center}
            \begin{minipage}[h]{0.95\linewidth}
                {\small
                    \underline{\textsf{Inputs}}}
                \vspace{5 pt}

                \begin{tabular}{ccl}
                    $f$        & := & Oracle access to function $f: \mathbb{R}^{n} \rightarrow [-1,1]$ \\
                    $\fsm$     & := & Oracle access to function $\fsm:
                        \mathbb{R}^n \rightarrow [-1,1]$ where $\fsm = P_u f$.                         \\
                    $L$        & := & Lipschitz parameter                                              \\
                    $\nu$ & := & accuracy parameter                                               \\
                    $k$        & := & junta arity parameter                                            \\
                \end{tabular}

                {\small
                \underline{\textsf{Parameters}}}
                \vspace{5 pt}

                \begin{tabular}{ccl}
                    $\delta$    & := & $\frac{1}{20}$                                                                     \\
                    $M$         & := & $\frac{L^2}{\eta^2} \log (L\delta/\eta)$                                           \\
                    $\eta$      & := & $\frac{\nu^2}{100k}$.                                                         \\
                    $\epsilon'$ & := & $\frac{\eta^5  \nu^2}{L^8  C_0^2 M^6}$ (where $C_0$ is a large absolute constant - $10^6$ suffices for us) \\
                \end{tabular}

                \vspace{5 pt}
                \underline{\textsf{Implicit projection algorithm}}
                \begin{enumerate}
                    \item  Sample $M$ random points $\bx_1, \ldots, \bx_M \sim \gamma_n$.
                    \item For each $1 \le i, j \le M$, with confidence parameter $\delta/M^2$ and error parameter $\epsilon'$, we compute $\langle \nabla \fsm (\bx_i), \nabla \fsm (\bx_j) \rangle = \langle \nabla P_u f (\bx_i), \nabla P_u f (\bx_j) \rangle$ using algorithm
                          \textsf{Compute-inner-product} from  Lemma~\ref{lem:compute-inner-prod-gradient}.  Denote this by $\widehat{A}_{i,j}$ and let $\widehat{A} \in \mathbb{R}^{M \times M}$ as the corresponding symmetric matrix.
                    \item Let $\widehat{N}$ be the closest psd matrix to $\widehat{A}$ in Frobenius norm (can be computed using convex programming).
                    \item Let $\widehat{V} \widehat{D}^2 \widehat{V}^T$ be the spectral decomposition of $\widehat{N}$.
                    \item Output the points $(\bx_1, \ldots, \bx_M)$ and the matrix $\widehat{W}= \widehat{D}^{-1}_{\ge \sqrt{\eta}/2} \cdot \widehat{V}^T$.
                \end{enumerate}

                \vspace{5 pt}

            \end{minipage}
        \end{center}

    \end{minipage}
    \hfill \vline
    \hrule
    \caption{Description of the  testing algorithm \textsf{Implicit projection}}
    \label{fig:DS}
\end{figure}

\begin{lemma}~\label{lem:implicit-projection}
The algorithm \textsf{Implicit projection} takes as input oracle access to $f:\mathbb{R}^n \rightarrow [-1,1]$ and $\fsm :\mathbb{R}^n \rightarrow [-1,1]$, parameters $u, L>0$, error parameter $\nu>0$ and junta arity parameter $k$. Suppose $\fsm = P_u f$. The algorithm makes $\poly(k,1/u,1/\nu,L)$ queries to $f$ and $\fsm$ and with probability $9/10$,  has the following guarantee: For $M = \mathsf{poly}(k/\nu)$, it outputs $M$ points $x_1, \ldots, x_M$ and a matrix $\hat{W} \in \mathbb{R}^{M \times M}$. Let $B^T  \in \mathbb{R}^{M \times n}$ be the matrix whose $j^{th}$ row is $\nabla \fsm (x_j)$ and $\hat{E}$ be the span of the rows of $\hat{W}B^T$. There exists a $k$-dimensional subspace $\tilde{E}$ of $\hat{E}$ with the following property. 
 Let $h \in \Junta_{\mathbb{R}^n,k,s}$.  Then, for $g=\calA_{\tilde{E}}h$, 
\begin{equation}~\label{eq:fsm-correlation}
|\mathbf{E}_{\bx}[\fsm(\bx) g(\bx)] - \mathbf{E}_{\bx}[\fsm(\bx) h(\bx)]| \le \frac{\nu}{2}. 
\end{equation}
Further, the matrix $\hat{W}$ satisfies 
\begin{equation}~\label{eq:diffe-e-pi}
\Vert \Pi_{\hat{E}} - B \hat{W}^T \hat{W} B^T\Vert_F  = \Vert \hat{I} -  \hat{W} B^T \hat{W}^T\Vert_F \le \nu/2,
\end{equation}
where $\hat{I}$ denotes the identity matrix in $M$ dimensions. Finally, the matrix $\hat{W}$ satisfies $\Vert \hat{W} \Vert_2 \le \frac{20k}{\nu}$. 
\end{lemma}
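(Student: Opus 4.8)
The strategy is to run the idealized analysis of Theorem~\ref{lem:idealized} on $\fsm$, then carefully account for the two sources of error: (a) we only estimate the Gram matrix $\widehat A$ of the gradients rather than computing gradients exactly, and (b) we work in the $M$-dimensional "coordinate" space rather than in $\R^n$. First I would set up the clean picture. Let $B^T \in \R^{M\times n}$ have rows $\nabla \fsm(x_j)$, so that the true Gram matrix is $A_{\mathrm{true}} = B^T B \in \R^{M\times M}$ and $BB^T = \sum_j \nabla\fsm(x_j)\nabla\fsm(x_j)^t \in \R^{n\times n}$ is (up to the $1/M$ factor) the empirical covariance appearing in Theorem~\ref{lem:idealized}. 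Since $\fsm$ is $L$-Lipschitz, $\|\nabla\fsm\|_2 \le L$, so each entry of $\widehat A$ is estimated to error $\epsilon'$ with confidence $\delta/M^2$; union-bounding, $\|\widehat A - A_{\mathrm{true}}\|_F \le M\epsilon'$ with probability $\ge 1-\delta$. Projecting to the nearest PSD matrix $\widehat N$ only halves the distance since $A_{\mathrm{true}}$ is already PSD, so $\|\widehat N - A_{\mathrm{true}}\|_F \le 2M\epsilon'$. By the choice of $\epsilon'$ this is negligible compared to all the thresholds $\sqrt\eta$, $\eta$, $\nu$ that appear below, and in particular (via Weyl) the spectra of $\widehat N$ and $A_{\mathrm{true}}$ agree up to $2M\epsilon'$, and the eigenspaces above threshold $\sqrt\eta/2$ agree up to a small rotation by a Davis--Kahan / $\sin\Theta$ argument.

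Next I would identify $\hat E$ and $\tilde E$. Writing $\widehat N = \widehat V \widehat D^2 \widehat V^T$, the output matrix is $\widehat W = \widehat D^{-1}_{\ge\sqrt\eta/2}\widehat V^T$, and $\hat E = \operatorname{span}$ of the rows of $\widehat W B^T$. The point is that $\widehat W B^T$ (row span in $\R^n$) picks out exactly the top eigenspace of $BB^T$ with eigenvalue $\gtrsim \eta$: indeed $\widehat W B^T B \widehat W^T \approx \widehat D^{-1}_{\ge\sqrt\eta/2}\widehat V^T \widehat N \widehat V \widehat D^{-1}_{\ge\sqrt\eta/2}$, which is (approximately) the identity on the relevant block — this is precisely what gives \eqref{eq:diffe-e-pi}, with the error $M\epsilon'$ amplified by at most $\widehat D^{-1}_{\ge\sqrt\eta/2}$-factors of size $O(1/\sqrt\eta)$ squared, still $\ll\nu/2$ by the definition of $\epsilon'$. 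Because the empirical covariance $\frac1M BB^T$ is within $\eta/2$ of $A_{\mathsf{avg}} = \E_{\bx}[\nabla\fsm(\bx)\nabla\fsm(\bx)^t]$ with probability $\ge 1-\delta$ (Theorem~\ref{thm:random-sampling}, using $\|\nabla\fsm\|_2\le L$ and $M = \frac{L^2}{\eta^2}\log(L\delta/\eta)$), the subspace $\hat E$ contains $\mathsf E_{\eta/2}(\frac1M BB^T)$ up to the tiny estimation perturbation, so Theorem~\ref{lem:idealized} applies with parameter $\eta$: for $g = \calA_{\hat E} h$ we get $|\E[\fsm g] - \E[\fsm h]| \le \sqrt{k\eta} \le \nu/10$. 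Finally, $\calA_{\hat E} h$ depends only on the at-most-$k$-dimensional subspace $\tilde E := \hat E \cap (F + \hat E^\perp)^\perp$... more simply, since $h\in\Junta_F$ with $\dim F\le k$, $\calA_{\hat E}h = \calA_{\tilde E}h$ where $\tilde E = \Pi_{\hat E}F$ has dimension $\le k$; replacing $g=\calA_{\hat E}h$ by $\calA_{\tilde E}h$ changes nothing, and $s$-smoothness of $g$ follows from Proposition~\ref{prop:smooth-inclusion}(2). This yields \eqref{eq:fsm-correlation}. The norm bound $\|\widehat W\|_2 \le 20k/\nu$ is immediate: $\|\widehat W\|_2 = \|\widehat D^{-1}_{\ge\sqrt\eta/2}\|_2 \le 2/\sqrt\eta = 2\sqrt{100k}/\nu \le 20\sqrt k/\nu \le 20k/\nu$.

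The query complexity is $\poly(k,1/u,1/\nu,L)$: there are $M^2 = \poly(k/\nu)$ calls to \textsf{Compute-inner-product}, each with error $\epsilon' = \poly(\eta,\nu,1/L,1/M)$ and confidence $\delta/M^2$, costing $\poly(u,1/\epsilon',\log(M/\delta)) = \poly(k,1/u,1/\nu,L)$ queries by Lemma~\ref{lem:compute-inner-prod-gradient} (note the noise parameter there is $u$), and the PSD projection and SVD are query-free. The probability of success is $\ge 1 - \delta - \delta = 9/10$ after also absorbing the random-matrix event. The main obstacle I anticipate is the bookkeeping in the perturbation analysis: one must verify that passing from the exact covariance $A_{\mathsf{avg}}$ (eigenspaces) to the empirical $\frac1M BB^T$ (Theorem~\ref{thm:random-sampling}, additive $\eta/2$) and then to the \emph{estimated} $\widehat N$ (additive $2M\epsilon'$ in Frobenius) does not blow up when one inverts eigenvalues as small as $\sqrt\eta/2$ — this is exactly why $\epsilon'$ is chosen with such a high power of $\eta$ in the denominator and $M$ in the denominator, and why \eqref{eq:diffe-e-pi} is stated with the generous slack $\nu/2$. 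Getting the Davis--Kahan gap argument to interact correctly with the thresholding at $\sqrt\eta/2$ (there may be eigenvalues of $A_{\mathrm{true}}$ near the threshold) is the delicate point; one handles it by noting that the conclusion \eqref{eq:fsm-correlation} and \eqref{eq:diffe-e-pi} are robust to which side of the threshold a borderline eigenvalue falls on, since such an eigenvalue is itself $O(\sqrt\eta)$ and contributes negligibly.
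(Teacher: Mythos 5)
Your overall architecture matches the paper's: estimate the Gram matrix, bound $\|\hat N - B^TB\|_F \le 2M\epsilon'$, derive the almost-isometry \eqref{eq:diffe-e-pi} by conjugating $\hat N - B^TB$ with $\hat D^{-1}_{\ge\sqrt\eta/2}\hat V^T$ (this is exactly Lemma~\ref{lem:almost-isometry}), and read off $\|\hat W\|_2 \le 2/\sqrt\eta$ and the query complexity. Those parts are fine. Your construction $\tilde E = \Pi_{\hat E}F$ with $\calA_{\hat E}h = \calA_{\tilde E}h$ (via the tower property) is a genuinely nicer route to the $k$-dimensionality of $\tilde E$ than the paper's (whose $\tilde E$, built from Lemma~\ref{lem:subspace-distance} applied to $E = \Es_{\ge\eta/2}(BB^T)$, is not obviously $k$-dimensional) -- but note it makes $\tilde E$ depend on $h$, whereas the lemma as stated asserts a single $\tilde E$ working for all $h$ simultaneously; this weaker quantifier order happens to suffice for the downstream applications, but it is a deviation from the statement.

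The genuine gap is in the derivation of \eqref{eq:fsm-correlation}. You write that ``$\hat E$ contains $\Es_{\eta/2}(\frac1M BB^T)$ up to the tiny estimation perturbation, so Theorem~\ref{lem:idealized} applies.'' It does not: Theorem~\ref{lem:idealized} requires \emph{exact} containment of the high-eigenvalue subspace, and its proof breaks immediately otherwise (a unit vector $v \in \hat E^\perp$ need not satisfy $v^T A v \le \eta/2$ just because $\hat E$ is a small rotation of a containing subspace). Closing this is the bulk of the paper's technical work: Lemma~\ref{lem:almost-same-subspace} (resting on the pseudoinverse-stability Lemma~\ref{lem:pseudoinverse-stab}, which is where Davis--Kahan actually enters) bounds $\|\Pi_E \Pi_{\hat E^\perp}\|_F$ for $E = \Es_{\ge\eta/2}(BB^T)$, and then Corollary~\ref{cor:correlation-subspace-stability} -- which crucially uses the $L$-Lipschitzness of $\fsm$ via Lemma~\ref{lem:distance-between-averages} -- converts that subspace perturbation into a bound on $|\E[\fsm\,\calA_{\tilde E}h] - \E[\fsm\,\calA_E h]|$ of order $L\cdot\|\Pi_E\Pi_{\hat E^\perp}\|_F$. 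Your proposal never brings the Lipschitz constant of $\fsm$ into this step, which is a tell that the argument is not closed; it is precisely why $\epsilon'$ carries the factor $L^{-8}$. You correctly flag the threshold/borderline-eigenvalue interaction as ``the delicate point,'' but flagging it is not resolving it. (For what it is worth, there is a shortcut you could have taken that avoids Davis--Kahan entirely: for $v\in\hat E^\perp$ one has $\hat V_{\ge}^T B^T v = 0$, whence $v^T B B^T B B^T v \le (\eta/4 + 2M\epsilon')\,v^T BB^T v$ and so, by Cauchy--Schwarz, $v^T BB^T v \le \eta/2$; this verifies the hypothesis of Lemma~\ref{lem:completeness} for $\hat E$ directly. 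But as written, your proof of \eqref{eq:fsm-correlation} has a hole.)
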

The high level idea of the lemma is the following: Let $E$ denote the subspace spanned by the rows of $B^T$.  Let us define 
 $N = B^TB$ and $\widehat{\Pi} =  B \widehat{W}^T \widehat{W} B^T$. To understand the high level idea behind the algorithm \textsf{Implicit projection}, observe that if in Step~2, we could compute $\langle \nabla \fsm(\bx_i), \nabla \fsm(\bx_j) \rangle$ exactly, then $\hat{N} =N$. Conseuqently, if $\eta>0$ is sufficiently small, then it is easy to see that the rows of $\hat{W} B^T$ form an orthonormal basis of $E$ and consequently, $\hat{\Pi}$ is a projection matrix into $E$. Unfortunately for us, we will not have access to $B$ explicitly and thus are only able to compute an approximation to $N$, namely $\hat{N}$. The goal here is two-fold: (a) Understand why the rows of $\hat{W}B^T$  are essentially orthonormal; (b)  show that for $g=\calA_{\hat{E}}h$, $\mathbf{E}_{\bx}[\fsm(\bx) g(\bx)]$ is nearly as large as $\mathbf{E}_{\bx}[\fsm(\bx) h(\bx)]$.



The next  claim quantifies the sense in which the rows of $\hat{W}B^T$ are almost orthonormal.  
\begin{lemma}\label{lem:almost-isometry}
    For matrices $\hat D$, $\hat W$, $B$, $\hat N$ and $\eta >0$ (as described in the algorithm \textsf{Implicit projection}), let $\hat I = \hat D_{\ge \sqrt{\eta}/2}^{-1} \hat D$. (That is, $\hat I$ has a 1 
    corresponding to large eigenvalues of $\hat N$.) Let $\hat E$ be the span of the
    rows of $\hat W B^T$. Then
    \[
        \|\Pi_{\hat E} - B \hat W^T \hat W B^T\|_F = \|\hat I - {\hat W B^T} \cdot   B \hat W^T\|_F \le \frac 4\eta \|\hat N - B^T B\|_F.
    \]
\end{lemma}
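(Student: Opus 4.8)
The plan is to split the claimed chain of relations into two independent pieces: a purely algebraic identity that rewrites the $n \times n$ expression in terms of $M \times M$ matrices, followed by a spectral perturbation estimate carried out on the $M \times M$ side. Throughout, write $N := B^{T}B$ for the (PSD) Gram matrix with $N_{ij} = \langle \nabla \fsm(x_i), \nabla \fsm(x_j)\rangle$, set $\Delta := \hat N - N$ so that the right-hand side of the claim equals $\tfrac{4}{\eta}\|\Delta\|_F$, let $S := \{i : \hat D_{ii} \ge \sqrt\eta/2\}$, and put $R := \hat W B^{T} \in \R^{M \times n}$. Then $\hat I = \hat D^{-1}_{\ge\sqrt\eta/2}\hat D$ is the diagonal $0/1$ matrix with support $S$; the matrix $\hat W = \hat D^{-1}_{\ge\sqrt\eta/2}\hat V^{T}$ has all rows outside $S$ equal to zero; $B\hat W^{T}\hat W B^{T} = R^{T}R$ and $\hat W B^{T}\cdot B\hat W^{T} = RR^{T}$; and $\hat E$ is the span of the rows of $R$.

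First I would establish the algebraic identity and the perturbation bound. From the spectral decomposition $\hat N = \hat V\hat D^{2}\hat V^{T}$ and $\hat V^{T}\hat V = I$ one computes directly that $\hat W\hat N\hat W^{T} = \hat D^{-1}_{\ge\sqrt\eta/2}\hat D^{2}\hat D^{-1}_{\ge\sqrt\eta/2} = \hat I$, and hence
\[
  \hat I - RR^{T} \;=\; \hat W(\hat N - N)\hat W^{T} \;=\; \hat W\Delta\hat W^{T}.
\]
Since $\hat V^{T}$ is orthogonal we have $\|\hat W\|_{2} = \|\hat D^{-1}_{\ge\sqrt\eta/2}\|_{2} \le 2/\sqrt\eta$, so by submultiplicativity of the operator norm against the Frobenius norm, $\|\hat I - RR^{T}\|_{F} \le \|\hat W\|_{2}^{2}\|\Delta\|_{F} \le \tfrac{4}{\eta}\|\Delta\|_{F}$. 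This already gives the second ``$=$'' (now an identity) together with the final inequality in the statement.

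It then remains to prove $\|\Pi_{\hat E} - R^{T}R\|_{F} = \|\hat I - RR^{T}\|_{F}$. Here I would use the elementary fact, obtained by writing the SVD $R = U\Sigma V^{T}$, that for any real matrix $R$ one has $\|\Pi_{\spann(\text{rows of }R)} - R^{T}R\|_{F} = \|\Pi_{\spann(\text{cols of }R)} - RR^{T}\|_{F}$, both sides equalling $\|J - \Sigma^{2}\|_{F}$ for $J$ the diagonal indicator of the nonzero singular values (the nonzero parts of $\Sigma^{T}\Sigma$ and $\Sigma\Sigma^{T}$ agree). Since $\Pi_{\spann(\text{rows of }R)} = \Pi_{\hat E}$ by definition, it suffices to identify $\hat I$ with the projection onto $\spann(\text{cols of }R)$. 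Now $\spann(\text{cols of }R) \subseteq \spann(\text{cols of }\hat W) = \spann\{e_i : i \in S\}$ (because $\hat V^{T}$ is invertible and $\hat D^{-1}_{\ge\sqrt\eta/2}$ is the diagonal supported on $S$), and the projection onto $\spann\{e_i : i \in S\}$ is exactly $\hat I$; the two column spans coincide precisely when the $|S|$ nonzero rows of $R$ are linearly independent, i.e. when the $S \times S$ block of $RR^{T}$ is nonsingular, and this block equals $I_{|S|} - (\hat W\Delta\hat W^{T})_{S \times S}$, whose eigenvalues lie in $[1 - \tfrac{4}{\eta}\|\Delta\|_{F},\, 1 + \tfrac{4}{\eta}\|\Delta\|_{F}]$.

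The one genuine obstacle is this last rank point: without the smallness of $\|\hat N - B^{T}B\|_{F}$, $\hat I$ can strictly dominate the projection onto $\spann(\text{cols of }R)$ and the claimed equality can fail. This is exactly why the parameter $\epsilon'$ in \textsf{Implicit projection} is chosen so small that $\|\hat N - B^{T}B\|_{F}$ is polynomially below $\eta$, forcing $\tfrac{4}{\eta}\|\Delta\|_{F} < 1$; under that hypothesis the $S \times S$ block above is positive definite and the argument goes through. Everything else is routine bookkeeping with the SVD.
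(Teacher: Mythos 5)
Your proof is correct and follows essentially the same route as the paper's: the inequality comes from writing $\hat I - \hat W B^T B \hat W^T = \hat W(\hat N - B^T B)\hat W^T$ and using $\|\hat W\|_2 \le 2/\sqrt{\eta}$ together with submultiplicativity of the operator norm against the Frobenius norm, and the equality comes from matching the nonzero eigenvalues of $B\hat W^T \hat W B^T$ and $\hat W B^T B \hat W^T$. The one place you go beyond the paper is the rank condition for the equality: the paper simply asserts that both Frobenius norms equal $(\sum(\lambda_i-1)^2)^{1/2}$ over nonzero eigenvalues, which tacitly assumes that $\hat I$ is the projection onto the column span of $\hat W B^T$, i.e.\ that $\hat W B^T$ has full rank $|S|$; you correctly flag that this requires $\tfrac{4}{\eta}\|\hat N - B^T B\|_F < 1$, which is guaranteed by the algorithm's choice of $\epsilon'$ but is not literally part of the lemma's stated hypotheses.
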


\begin{proof}
    Since $\hat N = \hat V \hat D^2 \hat V^T$, we can write
    \[
        \hat I = (\hat D_{\ge \sqrt \eta/2})^{-1} \hat V^T \hat N \hat V (\hat D_{\ge \sqrt \eta/2})^{-1}.
    \]
    Then
    \[
        \hat I - \hat W B^T B \hat W^T = 
        (\hat D_{\ge \sqrt \eta/2})^{-1} \hat V^T (\hat N - B^T B) \hat V (\hat D_{\ge \sqrt \eta/2})^{-1}.
    \]
    Finally, note that $\|(\hat D_{\ge \sqrt \eta/2})^{-1}\| \le \frac{2}{\sqrt \eta}$
    and $\|AB\|_F \le \|A\|_F \|B\|$ for any matrices $A$ and $B$.
    This proves the claimed inequality. To see the equality, 
    note that $B \hat W^T \hat W B^T$ and $\hat W B^T B \hat W^T$ have the same
    eigenvalues, and both expressions can be expressed as $(\sum (\lambda_i - 1)^2)^{1/2}$,
    where the sum ranges over non-zero eigenvalues.
\end{proof}
Having shown that the rows of $\hat{W}B^T$ are close to being orthonormal, we next show that the rows of 
$\hat{W} B^T$ essentially span $\Es_{\ge \eta}(BB^T)$ -- more precisely, we show that $\Pi_{\Es_{\ge \eta}(B B^T)} (I - B \hat W^T \hat W B^T)$  is small. 

\begin{lemma}\label{lem:almost-same-subspace}
For matrices $\hat D$, $\hat W$, $B$, $\hat N$ and $\eta >0$ (as described in the algorithm \textsf{Implicit projection}),
    \[
        \|\Pi_{\Es_{\ge \eta}(B B^T)} (I - B \hat W^T \hat W B^T)\| \le \frac{20 \Vert B^TB \Vert_F \cdot \|B^T B\|}{\eta^{\frac52}} \sqrt{\|\hat N - B^T B\|}.
    \]
\end{lemma}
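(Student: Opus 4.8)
The plan is to work directly with the explicit form of the operator $S := B\hat W^T\hat W B^T$ coming out of \textsf{Implicit projection}. From the definitions there, $\hat W = \hat D^{-1}_{\ge\sqrt\eta/2}\hat V^T$ with $\hat N = \hat V\hat D^2\hat V^T$, so $\hat W^T\hat W$ is exactly the pseudoinverse of $\hat N$ restricted to its eigenspaces with eigenvalue $\ge\eta/4$; call this $\hat N^{-1}_{\ge\eta/4}$, so that $S = B\hat N^{-1}_{\ge\eta/4}B^T$ and $\|\hat N^{-1}_{\ge\eta/4}\|\le 4/\eta$. Set $N=B^TB$, $E=\|\hat N-N\|$, let $P=\Pi_{\Es_{\ge\eta}(BB^T)}$, and let $\hat Q$ be the orthogonal projection onto the span of the eigenvectors of $\hat N$ with eigenvalue $<\eta/4$, so that $\hat Q = I-\hat N^{-1}_{\ge\eta/4}\hat N$. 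Since $P$ and $I-S$ are symmetric, $\|P(I-S)\| = \|(I-S)P\| = \sup\{\|(I-S)v\| : v\in\mathrm{range}(P),\ \|v\|\le 1\}$, so it suffices to bound $\|(I-S)v\|$ for a unit $v\in\Es_{\ge\eta}(BB^T)$. Two degenerate cases can be dispatched immediately: if $\eta>\|N\|=\|BB^T\|$ then $P=0$ and there is nothing to prove; and if $E\ge\eta/4$, the crude bound $\|P(I-S)\|\le 1+\|PB\|\,\|\hat N^{-1}_{\ge\eta/4}\|\,\|B^T\|\le 1+4\|N\|/\eta$ is already dominated by the claimed right-hand side (using $\|N\|_F\ge\|N\|\ge\eta$). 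So from now on assume $\eta\le\|N\|$ and $E<\eta/4$.

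Next I would reduce everything to a statement about $N$ and $\hat N$ alone. Using an SVD $B=\sum_i\sigma_iu_iv_i^T$ and writing $v=\sum_{\sigma_i^2\ge\eta}a_iu_i$ with $\sum a_i^2=1$, put $z:=\sum_{\sigma_i^2\ge\eta}(a_i/\sigma_i)v_i$; then $Bz=v$, $z\in\Es_{\ge\eta}(N)$, and $\|z\|^2=\sum a_i^2/\sigma_i^2\le 1/\eta$. Then
\[
(I-S)v \;=\; B\bigl(z-\hat N^{-1}_{\ge\eta/4}Nz\bigr) \;=\; B\bigl(\hat Q z + \hat N^{-1}_{\ge\eta/4}(\hat N-N)z\bigr),
\]
so $\|(I-S)v\| \le \|B\|\bigl(\|\hat Q z\| + \tfrac4\eta E\|z\|\bigr) \le \sqrt{\|N\|}\bigl(\|\hat Q z\| + \tfrac{4E}{\eta^{3/2}}\bigr)$, since $\|B\|=\sqrt{\|N\|}$. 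Everything now rests on bounding $\|\hat Q z\|$, i.e.\ how much mass a vector living on the $\ge\eta$-eigenspaces of $N$ leaks into the $<\eta/4$-eigenspaces of the perturbed matrix $\hat N$.

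The crux is a Davis–Kahan-type estimate that exploits the separation between the thresholds $\eta/4$ and $\eta$. Since $\hat N\succeq 0$ (it is the closest positive semidefinite matrix to $\hat A$), $\hat N$ has norm $\le\eta/4$ on $\mathrm{range}(\hat Q)$, while $N\succeq\eta$ on $\mathrm{range}(P)$. Using $NP=PN$, $\hat N\hat Q=\hat Q\hat N$, and the identity $\hat N\hat Q=\hat Q(\hat N\hat Q)$, one gets
\[
\eta\|P\hat Q\| \;\le\; \|NP\hat Q\| \;=\; \|P\hat N\hat Q + P(N-\hat N)\hat Q\| \;\le\; \tfrac\eta4\|P\hat Q\| + E,
\]
hence $\|P\hat Q\|\le 4E/(3\eta)$. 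The key move now is not to bound $\|\hat Q z\|\le\|P\hat Q\|\,\|z\|$ directly, but to estimate the \emph{square}: since $\hat Q$ is an orthogonal projection and $z=Pz$,
\[
\|\hat Q z\|^2 \;=\; \langle \hat Q z, z\rangle \;=\; \langle P\hat Q z, z\rangle \;\le\; \|P\hat Q\|\,\|z\|^2 \;\le\; \frac{4E}{3\eta}\cdot\frac1\eta,
\]
so $\|\hat Q z\|\le 2\sqrt E/\eta$ — this is exactly where the square root in the statement is born. Because $E<\eta/4$ we also have $\tfrac{4E}{\eta^{3/2}}\le\tfrac{2\sqrt E}{\eta}$, so $\|(I-S)v\|\le\sqrt{\|N\|}\cdot\tfrac{4\sqrt E}{\eta}$; finally $\sqrt{\|N\|}/\eta\le \|N\|_F\,\|N\|/\eta^{5/2}$ follows from $\eta\le\|N\|\le\|N\|_F$, giving the claimed bound with constant to spare.

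The main obstacle is the third step, specifically getting the $\sqrt E$ dependence. A direct Davis–Kahan bound on $\|\hat Q z\|$ would in fact give a (stronger-looking) estimate linear in $E$, but the stated square-root form is what one naturally obtains — and what is needed downstream — and it rests on two points that are easy to miss: (i) one must use the artificial gap between $\eta/4$ (the truncation level built into $\hat W$) and $\eta$ (the level defining $P$), since this is what closes the self-referential inequality $\|P\hat Q\|\le\tfrac14\|P\hat Q\|+E/\eta$; without a fixed spectral gap at $\eta$ there is no comparable control; and (ii) one must estimate $\|\hat Q z\|^2$ rather than $\|\hat Q z\|$, so that one of the two factors of $\|z\|\le 1/\sqrt\eta$ is "spent" converting $E$ into $\sqrt E$ after taking the square root. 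The remaining bookkeeping — verifying $\hat W^T\hat W=\hat N^{-1}_{\ge\eta/4}$, $\hat Q=I-\hat N^{-1}_{\ge\eta/4}\hat N$, the positivity of $\hat N$, and the two degenerate cases — is routine given the setup of \textsf{Implicit projection}.
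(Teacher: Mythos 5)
Your proof is correct, but it takes a genuinely different route from the paper's. The paper rewrites $\Pi_{\Es_{\ge\eta}(BB^T)}(I - B\hat W^T\hat W B^T)$ via the SVD of $B$ as $(N_{\ge\eta})^{1/2}\big((N_{\ge\eta})^{-1}-(\hat N_{\ge\eta/4})^{-1}\big)N^{1/2}$ with $N=B^TB$, and then invokes a separately proved pseudoinverse-stability result (Lemma~\ref{lem:pseudoinverse-stab}), whose proof goes through the Davis--Kahan theorem with an optimized spectral window of width $\delta=\sqrt{\eta\|N-\hat N\|}$ --- that optimization is where the paper's $\sqrt{\|\hat N - B^TB\|}$ comes from. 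You instead work directly with a preimage $z$ of the unit vector $v$ (so $Bz=v$, $\|z\|^2\le 1/\eta$), decompose $z-\hat N^{-1}_{\ge\eta/4}Nz=\hat Qz+\hat N^{-1}_{\ge\eta/4}(\hat N-N)z$, and prove the spectral-leakage bound $\|\Pi_{\Es_{\ge\eta}(N)}\hat Q\|\le 4E/(3\eta)$ from scratch using only the gap between the thresholds $\eta/4$ and $\eta$ and the positivity of $\hat N$; the square root then emerges from the identity $\|\hat Qz\|^2=\langle\hat Qz,z\rangle$ combined with $\|z\|^2\le1/\eta$. Your argument is more elementary and self-contained (no Davis--Kahan, no auxiliary lemma), while the paper's modularizes the perturbation analysis into a reusable statement about pseudoinverses. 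One notational point to fix: in your ``crux'' step the symbol $P$ must denote $\Pi_{\Es_{\ge\eta}(B^TB)}$ (an $M\times M$ projection, the one satisfying $z=Pz$ and commuting with $N$), not the $n\times n$ projection $\Pi_{\Es_{\ge\eta}(BB^T)}$ you defined at the outset; as written, products like $NP\hat Q$ are dimensionally inconsistent. With that renaming the constants all check out and the bound follows with room to spare.
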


\begin{proof}
    Recall that $B = U D V^T$ is a singular value decomposition of $B$. Let $U_{\ge \sqrt\eta}$ consist of the rows of $U$ whose singular
    values are at least $\sqrt \eta$, so that
    \begin{equation}~\label{eq:def-pi-es}
        \Pi_{\Es_{>\eta}(BB^T)} = U_{\ge \sqrt \eta} U_{\ge \sqrt \eta}^T = U I_{\ge \sqrt \eta} U^T.
    \end{equation}
    \begin{align*}
        \|\Pi_{E_{\ge \eta}(B B^T)} (I - B \hat W^T \hat W B^T)\|
        &= \|\Pi_{E_{\ge \eta}(B B^T)} (I - B (\hat N_{\ge \eta/4})^{-1} B^T)\| \\
        &= \|U I_{\ge \sqrt \eta} U^T - U I_{\ge \sqrt \eta} D V^T (\hat N_{\ge \eta/4})^{-1} V D U^T)\| \\
        &= \|V I_{\ge \sqrt \eta} V^T - V I_{\ge \sqrt \eta} D V^T (\hat N_{\ge \eta/4})^{-1} V D V^T)\| \\
        &= \|V I_{\ge \sqrt \eta} V^T - (N_{\ge \eta})^{1/2} (\hat N_{\ge \eta/4})^{-1} N^{1/2} \|. 
    \end{align*}
    where in the last line we set $N = B^T B$.  The first equality uses 
    $\hat W^T \hat W = (\hat N_{\ge \eta/4})^{-1}$. The second and third equality uses that $\Vert A \Vert  = \Vert \Lambda A \Lambda^T \|$ for unitary matrix $\Lambda$ and the last equality sets $N=V D V^T$. Now, observe that  $V I_{\ge \sqrt \eta} V^T = (N_{\ge \eta})^{1/2} (N_{\ge \eta})^{-1} N^{1/2}$,
    we have
    \begin{align}
        \|\Pi_{\Es_{\ge \eta}(B B^T)} (I - B \hat W^T \hat W B^T)\|
        &= \|(N_{\ge \eta})^{1/2} \Big( (N_{\ge \eta})^{-1} - (\hat N_{\ge \eta/4})^{-1}\Big) N^{1/2} \| \nonumber \\
        &\le \|N\| \cdot  \|\Pi_{\Es_{\ge \eta}(N)} \Big((N_{\ge \eta})^{-1} - (\hat N_{\ge \eta/4})^{-1}\Big) \| ~\label{eq:N-subspace-bound}
    \end{align}
    Finally, we apply Lemma~\ref{lem:pseudoinverse-stab} to get that 
    \[
    \|\Pi_{\Es_{\ge \eta}(N)} \Big((N_{\ge \eta})^{-1} - (\hat N_{\ge \eta/4})^{-1}\Big) \| \le \frac{20 \Vert N \Vert_F \sqrt{\Vert N - \hat N \Vert}}{\eta^{5/2}}. 
    \]
    Combining this with \eqref{eq:N-subspace-bound}, we
    get the result.  
\end{proof}

\begin{lemma}\label{lem:good-subspace-finally}
Let $\fsm : \mathbb{R}^n \to [-1, 1]$, $L$, $M$, $\eta$, $k$, $\delta$, $\hat{N}$, $\hat{W}$ and $B$ be as described in the  Algorithm \textsf{Implicit projection}. Let $\hat{E}$ denote the span of the rows of $\hat{W}B^T$. 
If $\fsm$ is $L$-Lipschitz, with probability $1-\delta$,  there is a subspace $\tilde{E}$ of $\hat{E}$ with the following property: For all $h: \mathbb{R}^n \rightarrow [-1,1]$ such that $h \in \Junta_{\mathbb{R}^n,k,s}$, 
    \[
        |\mathbf{E}_{\bx}[\fsm(\bx) \cdot \calA_{\tilde{E}} h(\bx)] - \mathbf{E}_{\bx}[\fsm(\bx) \cdot  h(\bx)]| \le \sqrt{k \cdot \eta} - \frac{80L^4 \cdot M^{5/2}}{\eta^{5/2}} \sqrt{\|\hat N - B^T B\|} - \frac{16 L}{\eta} \|\hat N - B^T B\|_F.
    \]
\end{lemma}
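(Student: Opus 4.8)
\textbf{Proof plan for Lemma~\ref{lem:good-subspace-finally}.}

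The first observation is that the stated bound, with the error terms subtracted rather than added, only makes sense if we prove something stronger upstream: namely, that the ``ideal'' error $\sqrt{k\eta'}$ we can guarantee on the subspace $\tilde E$ is for a noise level $\eta'$ that is \emph{smaller} than $\eta$ by a margin large enough to absorb the two sampling-error terms. Concretely, I would run the argument of Theorem~\ref{lem:idealized} at a smaller spectral threshold. Since $\Vert A_{\mathsf{avg}} - A \Vert \le \eta/2$ with probability $1-\delta$ (by Theorem~\ref{thm:random-sampling}, using $M = \frac{L^2}{\eta^2}\log(L\delta/\eta)$ and $\Vert \nabla \fsm\Vert \le L$), we actually have, for every unit $v$ orthogonal to $\mathsf{E}_{\eta/2}(A)$, the stronger estimate $\opE_{\bx}[\langle v, \nabla\fsm(\bx)\rangle^2] = v^T A_{\mathsf{avg}} v \le v^T A v + \eta/2 \le \eta/2 + \eta/2$; but if we instead work with the subspace $\mathsf{E}_{\eta_0}(A)$ for a threshold $\eta_0$ to be chosen a bit below $\eta$, we get $\opE_{\bx}[\langle v,\nabla\fsm(\bx)\rangle^2]\le \eta_0 + \eta/2$ on its orthogonal complement. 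The plan is to choose $\eta_0$ so that $\sqrt{k(\eta_0+\eta/2)} \le \sqrt{k\eta} - (\text{the two error terms})$, which is possible because the two error terms are, by construction of $\epsilon'$ in the algorithm, polynomially smaller than $\sqrt{k\eta}$ — recall $\Vert\hat N - B^T B\Vert_F \le M^2 \epsilon' = \frac{\eta^5\nu^2}{L^8 C_0^2 M^4}$, which is far smaller than any fixed power of $\eta$ once $\nu$ (hence $\eta = \nu^2/100k$) is fixed relative to the other parameters.

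Next I would connect the matrix $\hat N$ actually produced by the algorithm to the true Gram matrix $N = B^T B$, and thence to the covariance matrix $A_{\mathsf{avg}}$. The algorithm outputs $\hat E = \mathrm{span}$ of the rows of $\hat W B^T$; by Lemma~\ref{lem:almost-isometry} the rows of $\hat W B^T$ are an almost-orthonormal system, and by Lemma~\ref{lem:almost-same-subspace} the space $\hat E$ almost contains $\mathsf{E}_{\ge\eta}(B B^T) = \mathsf{E}_{\ge\eta}(A\cdot M) $ — note $B B^T = M\cdot A$ up to the normalization in the definition of $A$, so $\mathsf{E}_{\ge\eta}(B B^T)$ corresponds to $\mathsf{E}_{\ge\eta/M}(A)$, and one must track this scaling carefully. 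The quantitative statements give $\Vert \Pi_{\mathsf{E}_{\ge\eta}(BB^T)}(I - B\hat W^T\hat W B^T)\Vert \le \frac{20\Vert N\Vert_F\Vert N\Vert}{\eta^{5/2}}\sqrt{\Vert\hat N - N\Vert}$, and since $\Vert N\Vert \le \mathrm{tr}(N) = \sum_j \Vert\nabla\fsm(\bx_j)\Vert^2 \le M L^2$ and $\Vert N\Vert_F \le \Vert N\Vert \le ML^2$ as well, this is $\le \frac{20 M^2 L^4}{\eta^{5/2}}\sqrt{\Vert\hat N - N\Vert}$. From this I would extract a $k$-dimensional subspace $\tilde E \subseteq \hat E$ that is within the above distance of a genuine $k$-dimensional piece of $\mathsf{E}_{\ge\eta}(A)$ — this is where the subspace $\tilde E$ in the statement comes from.

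Then comes the core comparison. I would apply Lemma~\ref{lem:completeness} (the Poincaré-based comparison) with $f = \fsm$ and the subspace $\tilde E$, but accounting for two sources of slack: (i) $\tilde E$ is only \emph{approximately} contained in a large-eigenvalue subspace of $A_{\mathsf{avg}}$, so the hypothesis ``$\opE[\langle v,\nabla\fsm\rangle^2]\le\delta$ for unit $v\in\tilde E^\perp$'' holds only with $\delta$ replaced by $\eta_0 + \eta/2$ plus a perturbation term controlled by $\frac{20 M^2 L^4}{\eta^{5/2}}\sqrt{\Vert\hat N - N\Vert}$; and (ii) the difference between $\calA_{\tilde E}h$ and the averaging over the ideal subspace contributes a term bounded, via Cauchy–Schwarz and the Lipschitz bound on $\fsm$ together with Lemma~\ref{lem:almost-isometry}, by something like $\frac{16 L}{\eta}\Vert\hat N - N\Vert_F$ — this is exactly the last term in the statement. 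Collecting: Lemma~\ref{lem:completeness} gives $\sqrt{k(\eta_0 + \eta/2 + \text{pert})}\le \sqrt{k(\eta_0+\eta/2)} + \frac{80 L^4 M^{5/2}}{\eta^{5/2}}\sqrt{\Vert\hat N - N\Vert}$ using $\sqrt{a+b}\le\sqrt a + \sqrt b$ and the bound on the perturbation, and then adding the $\calA_{\tilde E}$-vs-ideal slack $\frac{16L}{\eta}\Vert\hat N - N\Vert_F$, the total is $\sqrt{k(\eta_0+\eta/2)} + \frac{80 L^4 M^{5/2}}{\eta^{5/2}}\sqrt{\Vert\hat N - N\Vert} + \frac{16L}{\eta}\Vert\hat N - N\Vert_F$. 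By the choice of $\eta_0$ from the first paragraph, $\sqrt{k(\eta_0+\eta/2)} \le \sqrt{k\eta} - \frac{80 L^4 M^{5/2}}{\eta^{5/2}}\sqrt{\Vert\hat N - N\Vert} - \frac{16L}{\eta}\Vert\hat N - N\Vert_F$, which rearranges to exactly the claimed inequality.

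\textbf{Main obstacle.} The delicate point is the bookkeeping in the first paragraph: one must verify that $\eta_0$ can genuinely be chosen so that $\sqrt{k(\eta_0+\eta/2)}$ falls below $\sqrt{k\eta}$ by \emph{at least} the sum of the two sampling-error terms, \emph{simultaneously} with $\eta_0$ staying large enough that $\mathsf{E}_{\eta_0}(A)$ is the subspace the algorithm actually captures (the algorithm truncates at $\sqrt\eta/2$ in $\hat D$, i.e. at $\eta/4$ in $\hat N$, so there is a fixed relationship to respect). This is ultimately a calculation showing the error terms are negligible against $\sqrt{k\eta}$ for the parameter settings $\eta = \nu^2/(100k)$ and $\epsilon' = \eta^5\nu^2/(L^8 C_0^2 M^6)$ fixed in Figure~\ref{fig:DS} — the large constant $C_0$ and the high power $M^6$ are presumably chosen precisely to make this margin work — but getting every exponent of $\eta$, $L$, and $M$ to line up with the coefficients $80 L^4 M^{5/2}/\eta^{5/2}$ and $16L/\eta$ in the statement is the fiddly heart of the argument.
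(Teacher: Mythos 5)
Your plan diverges from the paper's proof in a way that introduces real gaps. First, the paper's own argument is just a triangle inequality: it applies Theorem~\ref{lem:idealized} to the spectral subspace $E = \Es_{\ge \eta/2}(BB^T)$ to get the $\sqrt{k\eta}$ term, and then uses Corollary~\ref{cor:correlation-subspace-stability} (built on Lemma~\ref{lem:subspace-distance} and Lemma~\ref{lem:distance-between-averages}) to produce $\tilde E \subseteq \hat E$ with $|\E_{\bx}[\fsm(\bx)\calA_{\tilde E}h(\bx)] - \E_{\bx}[\fsm(\bx)\calA_E h(\bx)]| \le 4L\Vert \Pi_E \Pi_{\hat E^\perp}\Vert_F$, which is then bounded via Lemmas~\ref{lem:almost-isometry} and~\ref{lem:almost-same-subspace}. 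The minus signs in the statement are sign slips for plus (a triangle inequality can only add the error terms), and indeed the literal minus-sign inequality cannot be true in general: the lemma places no a priori upper bound on $\Vert \hat N - B^TB\Vert$, so the right-hand side as written could be negative while the left-hand side is nonnegative. Your entire first paragraph --- shifting the spectral threshold to $\eta_0 < \eta$ to create margin --- is machinery aimed at proving this literal (and in general false) reading; it also silently charges the accuracy of $\hat N$ to the same probability-$(1-\delta)$ event that the paper reserves for the random-sampling step in Theorem~\ref{lem:idealized}.

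Second, and more substantively, your construction of $\tilde E$ does not work as described. You propose to ``extract a $k$-dimensional subspace $\tilde E \subseteq \hat E$'' close to a $k$-dimensional piece of the high-eigenvalue space and then apply Lemma~\ref{lem:completeness} directly to $\tilde E$. But the hypothesis of Lemma~\ref{lem:completeness} requires \emph{every} unit vector $v \in \tilde E^{\perp}$ to satisfy $\E[\langle v, \nabla \fsm(\bx)\rangle^2] \le \delta$; the high-eigenvalue subspace of $BB^T$ typically has dimension much larger than $k$ (the $k$ in $\sqrt{k\eta}$ comes from the junta arity, not from $\dim E$), so a merely $k$-dimensional $\tilde E$ leaves high-gradient-energy directions in $\tilde E^\perp$ and the hypothesis fails. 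The paper avoids this by taking $\dim \tilde E = \dim E$ via Lemma~\ref{lem:subspace-distance}, and by transferring the \emph{averaging operator} $\calA_E \to \calA_{\tilde E}$ with the Lipschitz bound of Corollary~\ref{cor:correlation-subspace-stability}, an ingredient your plan never really invokes. Finally, your bookkeeping does not reproduce the stated error terms: feeding the projector perturbation into the $\sqrt{k\cdot(\cdot)}$ of Lemma~\ref{lem:completeness} yields a term of order $\Vert \hat N - B^TB\Vert^{1/4}$, not the $\sqrt{\Vert \hat N - B^TB\Vert}$ coefficient $\frac{80L^4M^{5/2}}{\eta^{5/2}}$ in the statement, so the ``exponents lining up'' that you defer as fiddly is, as sketched, not just fiddly but incorrect.
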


\begin{proof}
Define $E = E_{\ge \eta/2} (BB^T)$. Let $\hat{E}$ be the span of the rows of $\hat{W}B^T$.
Then
    \begin{align}
        \|\Pi_E \Pi_{\hat E^\perp}\|
        &= \| \Pi_E - \Pi_E \hat \Pi + \Pi_E (\hat \Pi - \Pi_{\hat E})\| \nonumber \\
        &\le \|\Pi_E - \Pi_E \hat \Pi\| + \|\hat \Pi - \Pi_{\hat E}\| \nonumber \\
        &\le \frac{20}{\eta^{5/2}} \Vert B^T B \Vert_F \Vert B^TB \Vert\sqrt{\|\hat N - B^T B\| } + \frac{4}{\eta} \|\hat N - B^T B\|_F, \label{eq:bound-proj-diff-E-Eperp}
    \end{align}
    where the last line follows from Lemmas~\ref{lem:almost-same-subspace} and~\ref{lem:almost-isometry}. 
    We now apply Corollary~\ref{cor:correlation-subspace-stability} to the above to get that there is a subspace $\tilde{E}$ of $\hat{E}$ such that for all $h: \mathbb{R}^n \rightarrow [-1,1]$, 
    \begin{align}
        |\mathbf{E}_{\bx} [\fsm(\bx) \cdot \calA_{\tilde{E}} h(\bx)] - \mathbf{E}_{\bx} [\fsm(\bx) \cdot \calA_{E} h(\bx)]| &\le 4L  \Vert \Pi_E \Pi_{(\hat{E})^{\perp}} \Vert_F \nonumber \\
    &\le \frac{80 L \cdot\sqrt{M}}{\eta^{5/2}} \Vert B^T B \Vert_F\Vert B^TB \Vert \sqrt{\|\hat N - B^T B\| } - \frac{16 L}{\eta} \|\hat N - B^T B\|_F \label{eq:comp-project-Ehat-E}. 
    \end{align}
    On the other hand, since $h \in \Junta_{\mathbb{R}^n,k,s}$,  using Theorem~\ref{lem:idealized}, with probability $1-\delta$, we also have 
    \[
    |\mathbf{E}_{\bx} [\fsm(\bx) \cdot \calA_{E} h(\bx)] -\mathbf{E}_{\bx} [\fsm(\bx) \cdot  h(\bx)]| \le \sqrt{k \cdot \eta}. 
    \]
    Combining the above with \eqref{eq:comp-project-Ehat-E}, we get that with probability $1-\delta$, there is a subspace $\tilde{E}$ of $\hat{E}$ such that for all $h \in \Junta_{\mathbb{R}^n,k,s}$ (mapping $\mathbb{R}^n$ to $[-1,1]$), 
    \begin{align}
    |\mathbf{E}_{\bx} [\fsm(\bx) \cdot \calA_{\tilde{E}} h(\bx)] - \mathbf{E}_{\bx} [\fsm(\bx) \cdot  h(\bx)]| \le \sqrt{k \cdot \eta} -  \frac{80 L \cdot\sqrt{M}}{\eta^{5/2}} \Vert B^T B \Vert_F\Vert B^TB \Vert \sqrt{\|\hat N - B^T B\| } - \frac{16 L}{\eta} \|\hat N - B^T B\|_F. 
    \end{align}
    Now, observe that $\Vert B^T B \Vert_F \le ML^2$ (Since $\fsm$ is $L$-Lipschitz, each row has norm at most $L$). This then implies the claim.



\end{proof}

\begin{proofof}{Lemma~\ref{lem:implicit-projection}}
By our setting of parameters, observe that with probability $1-\delta$, the matrix $\hat{A}$ satisfies $\Vert \hat{A} - B^TB \Vert_\infty \le \epsilon'$. This in turn implies that 
$\Vert \hat{A} - B^TB \Vert_F \le \epsilon' \cdot M$. Since $\hat{N}$ is the closest psd matrix to $\hat{A}$, this means $\Vert N - \hat{N} \Vert_F \le 2 \epsilon' \cdot M$.

 Plugging the values of $\epsilon'$, 
$\eta$ and $M$ into Lemma~\ref{lem:good-subspace-finally} shows that 
\eqref{eq:fsm-correlation} is satisfied with probability at least  $1-2\delta = 9/10$. 
 Similarly, \eqref{eq:diffe-e-pi} follows by plugging the values of $\epsilon'$, $\eta$ and $M$ into Claim~\ref{lem:almost-isometry}. Finally, observe that the query complexity of the algorithm  
is dictated by Step~2 (i.e., the query complexity of the routine \textsf{Compute-inner-product}). By plugging in Lemma~\ref{lem:compute-inner-prod-gradient}, we get that the query complexity is $\mathsf{poly}(M,1/u,1/\epsilon')$. Plugging in the values of these parameters (from the description of the algorithm \textsf{Implicit projection}), we get the claim. 

Finally, to get an upper bound on $\Vert \hat{W} \Vert_2$, observe that $\hat{W} = \hat{D}_{\ge \sqrt{\eta}/2} \cdot \hat{V}$. This means that $\Vert \hat{W} \Vert_2 \le 2/\sqrt{\eta}$. Plugging in the value of $\eta$ from the description of \textsf{Implicit projection}, we get the claim. 
\end{proofof}

\subsection{The averaged class}

We next describe a preprocessing step for our class of functions $\calC$.
The point is that it is possible in principle for $f$ to be an $E$-Junta
but be well-correlated with some function $g \in \calC$ that was embedded in
$\R^n$ along a \emph{different} subspace $\tilde E$. We handle this by adding to
$\calC$ all possible projections of functions from $\calC$ that were embedded
in different subspaces.

\begin{definition}\label{def:class-averaging}
    For a class $\calC$ of functions $\R^k \to [-1, 1]$, define $\calC^\ast$
    to be the set of all functions $\R^k \to [-1, 1]$ of the form
    \[
        x \mapsto \E_{\bz \sim \gamma_k} [g(W^T (\begin{smallmatrix}x \\ \bz\end{smallmatrix})))],
    \]
    where $g$ ranges over $\calC$ and $W$ ranges over all $(2k) \times k$ matrices
    with orthonormal columns.
\end{definition}
In other words, we are taking functions from $\calC$, embedding them in $\R^{2k}$
along an arbitrary $k$-dimensional subspace, and then averaging them back down to
$\R^k$. As a consequence of Proposition~\ref{prop:smooth-inclusion}, if every 
function is $\calC$ is $s$-smooth, then so is every function in $\calC^\ast$.
Also, $\calC^\ast$ contains $\calC$, as can be seen by taking the first $k$
rows of $A$ to be an orthonormal basis of $\R^k$, and the next $k$ rows to be zero. We next have the  following claim. 
\begin{claim}~\label{clm:class-averaging} 
Let $\calC$ be a class of functions mapping $\R^k \to [-1, 1]$. Define the set $\mathcal{F}$ to be the functions of the form $\calA_{\mathbb{R}^m} f$ where $f\in \mathsf{Ind}_{n}(\calC)$ (where $n \ge m+k$ and $m \ge k$). 
Then, $\mathcal{F} = \mathsf{Ind}_{m}(C^\ast)$. 

\end{claim}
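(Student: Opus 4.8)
The plan is to prove the two inclusions $\mathcal{F} \subseteq \mathsf{Ind}_m(\calC^\ast)$ and $\mathsf{Ind}_m(\calC^\ast) \subseteq \mathcal{F}$ by unwinding the definitions into block-matrix form; in both directions the only genuine inputs are the polar decomposition of a matrix and the fact that two centered Gaussians with the same covariance induce the same averaging operator.

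First I would fix notation. Identify $\R^m$ with the span of the first $m$ standard basis vectors of $\R^n$, so that $\calA_{\R^m} f$ depends only on the first $m$ coordinates and is naturally a function $\R^m \to [-1,1]$. Any $f \in \mathsf{Ind}_n(\calC)$ has the form $f(x) = g(V^Tx)$ with $g \in \calC$ and $V \in \R^{n\times k}$, $V^TV = I_k$; write $V = \binom{V_1}{V_2}$ with $V_1 \in \R^{m\times k}$ and $V_2 \in \R^{(n-m)\times k}$, so that $V_1^TV_1 + V_2^TV_2 = I_k$ and hence $0 \preceq V_1^TV_1 \preceq I_k$. A direct computation from Definition~\ref{def:AE} gives, for $z \in \R^m$,
\[
  \calA_{\R^m} f(z) = \E_{\bz' \sim \gamma_{n-m}}\big[g(V_1^Tz + V_2^T\bz')\big].
\]
Since $V_2^T\bz'$ is a centered Gaussian on $\R^k$ with covariance $V_2^TV_2 = I_k - V_1^TV_1$, it has the same law as $R^T\mathbf{g}$ with $\mathbf{g}\sim\gamma_k$ for \emph{any} $R$ satisfying $R^TR = I_k - V_1^TV_1$, so $\calA_{\R^m} f(z) = \E_{\mathbf{g}\sim\gamma_k}[g(V_1^Tz + R^T\mathbf{g})]$. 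Symmetrically, recalling Definition~\ref{def:class-averaging} and Definition~\ref{def:rho-c-correlation}, a function in $\mathsf{Ind}_m(\calC^\ast)$ has the form $z \mapsto \E_{\bz\sim\gamma_k}[g(W_1^TU^Tz + W_2^T\bz)]$ with $g\in\calC$, $U\in\R^{m\times k}$ with $U^TU=I_k$, and $W = \binom{W_1}{W_2}$, $W_1,W_2\in\R^{k\times k}$, $W_1^TW_1 + W_2^TW_2 = I_k$.

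For $\mathcal{F}\subseteq\mathsf{Ind}_m(\calC^\ast)$, given $f(x) = g(V^Tx)$ I would set $W_1 = (V_1^TV_1)^{1/2}$ and $W_2 = (I_k - V_1^TV_1)^{1/2}$ (symmetric PSD square roots, well-defined since $0 \preceq V_1^TV_1 \preceq I_k$), so $W_1^TW_1 + W_2^TW_2 = I_k$; the polar decomposition yields $V_1 = UW_1$ for some $U\in\R^{m\times k}$ with $U^TU = I_k$ (this is where $m \ge k$ enters, and the rank-deficient case is handled by the thin SVD $V_1 = \tilde U\tilde D\tilde V^T$, taking $U = \tilde U\tilde V^T$ and noting $(V_1^TV_1)^{1/2} = \tilde V\tilde D\tilde V^T$). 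Choosing $R = W_2$ in the display above gives $\calA_{\R^m} f(z) = \E_{\mathbf{g}}[g(W_1^TU^Tz + W_2^T\mathbf{g})]$, exactly an element of $\mathsf{Ind}_m(\calC^\ast)$. Conversely, for $\mathsf{Ind}_m(\calC^\ast)\subseteq\mathcal{F}$, given $(g,U,W_1,W_2)$ I would set $V_1 = UW_1 \in \R^{m\times k}$ and let $V_2 \in \R^{(n-m)\times k}$ be $W_2$ padded with zero rows (here $n - m \ge k$ is used), so $V_2^TV_2 = W_2^TW_2$ and $V = \binom{V_1}{V_2}$ has $V^TV = W_1^TW_1 + W_2^TW_2 = I_k$; then $f(x) = g(V^Tx) \in \mathsf{Ind}_n(\calC)$ and the first display with $R = W_2$ shows $\calA_{\R^m} f$ is the prescribed function. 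This gives both inclusions; one also notes in passing that $s$-smoothness propagates through $\calA_{\R^m}$ by Part~\ref{it:prop-2} of Proposition~\ref{prop:smooth-inclusion}, although the claim as stated only asserts set equality.

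The argument is essentially all bookkeeping: matching the block decompositions $V = \binom{V_1}{V_2}$ and $W = \binom{W_1}{W_2}$ and the Gaussian covariances, with the dimension hypotheses $m\ge k$ and $n\ge m+k$ invoked precisely so that $U$ (respectively $V_2$) fits. The mildest obstacle is stating the polar decomposition in the exact form needed ($V_1 = UW_1$ with $U^TU = I_k$ and $W_1 = (V_1^TV_1)^{1/2}$) so that it applies even when $V_1$ is rank-deficient, which the thin-SVD construction above resolves; there is no deeper difficulty.
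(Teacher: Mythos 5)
Your proposal is correct, and it is in fact considerably more careful than the argument the paper gives. The paper's proof of Claim~\ref{clm:class-averaging} is a two-line sketch: it asserts the inclusion $\mathsf{Ind}_m(\calC^\ast)\subseteq\mathcal{F}$ outright, and for the converse it decomposes the relevant subspace as $E = J\oplus J'$ with $J=\R^m\cap E$ and declares that the claim ``easily follows.'' Note that this intersection is generically trivial (a generic $k$-dimensional $E\subset\R^n$ meets $\R^m$ only at the origin), so the decomposition as literally stated does not by itself produce the representation required by Definition~\ref{def:class-averaging}; what is actually needed is a decomposition of the orthonormal frame $V$ into its components along $\R^m$ and $(\R^m)^\perp$, which is exactly your block split $V=\binom{V_1}{V_2}$. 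Your two additional ingredients --- the polar/thin-SVD factorization $V_1=UW_1$ with $W_1=(V_1^TV_1)^{1/2}$ to recover an orthonormal $U$ with $U^TU=I_k$ even when $V_1$ is rank-deficient, and the observation that $V_2^T\bz'$ may be replaced by $W_2^T\mathbf{g}$ with $\mathbf{g}\sim\gamma_k$ because only the covariance $V_2^TV_2=I_k-V_1^TV_1$ matters --- are precisely the steps the paper elides, and they are what make the identification with an element of $\mathsf{Ind}_m(\calC^\ast)$ rigorous. Your converse direction (padding $W_2$ with $n-m-k$ zero rows, which is where $n\ge m+k$ enters) is likewise a correct instantiation of what the paper only asserts. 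In short: same underlying idea, but your write-up supplies the missing mechanism and would be a strict improvement on the proof in the paper.
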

\begin{proof}
It is easy to see that $\mathsf{Ind}_{m}(C^\ast) \subseteq \mathcal{F}$ as long as $n \ge m+k$. 
So, we now argue that $\mathcal{F} \subseteq \mathsf{Ind}_{m}(C^\ast)$. Let $f \in \mathsf{Ind}_n(\mathcal{C})$. 
Let $E$ be the relevant subspace for $f$ and let $E = J \oplus J'$ where $J = \mathbb{R}^m \cap E$ and $J'$ is the orthogonal complement of $J$ inside $E$. It is obvious that the dimension of $J$ is at most $k$. It now easily follows that $g \in \mathsf{Ind}_m(C^\ast)$.

\end{proof}

We remark that although it might be challenging in general to characterize $\calC^\ast$
in terms of $\calC$, there are several classes of functions where this is easy:
\begin{itemize}
    \item if $\calC$ is the class of all $s$-smooth functions then $\calC^\ast = \calC$;
    \item if $\calC$ is the class of all half-spaces then $\calC^\ast$ is the class of all functions
        of the form $x \mapsto \Phi(\inr{a}{x} + b)$, where $\Phi$ is the Gaussian c.d.f.;
    \item more generally, if $\calC$ is closed under taking subspaces -- in the sense that
        if $g \in \calC$, $E \subset \R^k$ is a subspace, and $z \in E^\perp$ then $x \mapsto f(\pi_E x + z)$
        also belongs to $\calC$ -- then $\calC^\ast$ is contained in the convex hull of $\calC$.
        In this situation, and because we will be interested in maximizing a linear function over $\calC$,
        we can essentially replace $\calC^\ast$ by $\calC$ in what follows.
\end{itemize}

%

\subsection{Hypothesis testing on low-dimensional space}

Our final technical task is to show that functions on a low-dimensional space can be adequately ``pulled back''
to $\R^n$ under an approximate projection. The first observation is that an approximate projection
can be approximated by a projection:

\begin{lemma}\label{lem:approximate-projection}
    For any $m \le n$ and any $m \times n$ matrix $X$ of rank $m$, there exists an $m \times n$ matrix $Y$ with orthogonal
    rows, such that
    \[
        \|X - Y\|_F \le \|X X^T - I\|_F.
    \]
\end{lemma}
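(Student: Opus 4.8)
The plan is to construct $Y$ explicitly from the singular value decomposition of $X$, and then reduce the matrix inequality to a scalar inequality on the singular values. Write $X = U D V^T$, where $U \in \R^{m \times m}$ is orthogonal, $V \in \R^{n \times m}$ has orthonormal columns, and $D = \mathrm{diag}(\sigma_1, \dots, \sigma_m)$ with all $\sigma_i > 0$ (this is where we use $\mathrm{rank}(X) = m$). I would then set $Y := U V^T$; that is, I replace every singular value by $1$. This $Y$ has orthonormal rows, since $Y Y^T = U V^T V U^T = U U^T = I$, so it is a legitimate candidate.

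The next step is to compute both Frobenius norms in this basis. Since the Frobenius norm is invariant under multiplication by matrices with orthonormal rows/columns on either side, $X - Y = U (D - I) V^T$ gives $\|X - Y\|_F^2 = \|D - I\|_F^2 = \sum_{i=1}^m (\sigma_i - 1)^2$. Similarly $X X^T = U D^2 U^T$, so $X X^T - I = U (D^2 - I) U^T$ and $\|X X^T - I\|_F^2 = \|D^2 - I\|_F^2 = \sum_{i=1}^m (\sigma_i^2 - 1)^2$.

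Finally I would compare the two sums term by term. For each $i$, factor $\sigma_i^2 - 1 = (\sigma_i - 1)(\sigma_i + 1)$, so $(\sigma_i^2 - 1)^2 = (\sigma_i - 1)^2 (\sigma_i + 1)^2$. Since $\sigma_i > 0$, we have $(\sigma_i + 1)^2 \ge 1$, hence $(\sigma_i - 1)^2 \le (\sigma_i^2 - 1)^2$. Summing over $i$ yields $\|X - Y\|_F^2 \le \|X X^T - I\|_F^2$, which is the claim.

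There is no real obstacle here: the only "idea" needed is to choose $Y$ by flattening the singular values of $X$ to $1$ (equivalently, $Y$ is the orthonormal-row matrix closest to $X$, obtained from the polar decomposition of $X^T$), after which everything is a one-line scalar estimate. The one point to be careful about is invoking $\mathrm{rank}(X) = m$ so that $D$ is genuinely invertible and $U, V$ have the stated dimensions; otherwise the SVD bookkeeping, not the inequality, is the only thing that could go wrong.
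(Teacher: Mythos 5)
Your proposal is correct and follows essentially the same route as the paper: both take the SVD $X = UDV^T$, set $Y = UV^T$, and reduce the claim to the scalar inequality $\sum_i(\sigma_i-1)^2 \le \sum_i(\sigma_i^2-1)^2$. Your factoring of $(\sigma_i^2-1)^2 = (\sigma_i-1)^2(\sigma_i+1)^2$ with $(\sigma_i+1)^2 \ge 1$ is in fact a welcome extra sentence of justification that the paper omits.
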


\begin{proof}
    Let $U D^2 U^T = X X^T$ be a singular value decomposition of $X X^T$. Then
    $I = (D^{-1} U^T X) (D^{-1} U^T X)^T$, and it follows that $V^T := D^{-1} U^T X$
    is an orthogonal matrix. Let $Y = U V^T$. Noting that $X = U D V^T$, we have
    $\|X - Y\|_F^2 = \|D - I\|_F^2$, and if $\sigma_1, \dots, \sigma_m$ are the singular
    values of $X$ then
    \[
        \|D - I\|_F^2 = \sum (\sigma_i - 1)^2 \le \sum (\sigma_i^2 - 1)^2 = \|X X^T - I\|_F^2.
    \]
\end{proof}

\subsubsection{The existence of a small net}

We now prove the existence of a small net of Lipschitz functions for families of $s$-smooth
$k$-linear Juntas in $\R^m$. The main result is Proposition~\ref{prop:net-existence-1}.

We begin with a few preliminaries related to approximate by Lipschitz functions, namely, $s$-smooth
functions can be approximated by Lipschitz functions and Lipschitz functions don't change much
under composition by nearby linear maps.

\begin{lemma}\label{lem:approx-by-lipschitz}
    For every $s$-smooth function $f: \R^n \to [-1, 1]$ and every $\epsilon > 0$, there is a $\frac{C \cdot s}{\epsilon}$-Lipschitz
    function $g: \R^n \to [-1, 1]$ such that $\|f - g\|_{L^2(\gamma)} \le \epsilon$. Here $C$ is the absolute constant appearing in Fact~\ref{fact:P_t-lipschitz}. 
\end{lemma}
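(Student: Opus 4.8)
The natural route is to take $g = P_t f$ for a suitable choice of the noise parameter $t$, and then quantify the two competing effects: how close $P_t f$ is to $f$ in $L^2(\gamma)$ (this degrades as $t$ grows), and how Lipschitz $P_t f$ is (this improves as $t$ grows, by Fact~\ref{fact:P_t-lipschitz}, which gives a $\frac{C}{\sqrt t}$-Lipschitz bound). So the plan is: first, bound $\|f - P_t f\|_{L^2(\gamma)}$ in terms of $s$ and $t$; second, set $t$ so that this bound is at most $\epsilon$; third, read off the Lipschitz constant from Fact~\ref{fact:P_t-lipschitz}.

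\textbf{Step 1: the $L^2$ estimate.} The definition of $s$-smoothness only directly controls $\mathbf{E}[|f(\bx) - P_t f(\bx)|] \le s\sqrt t$, i.e., an $L^1$ bound, whereas we want an $L^2$ bound. The key trick is to use the semigroup property together with contractivity of $P_t$ on $L^\infty$ (hence on every $L^p$). Write $f - P_t f$ and compare it with $f - P_{t/2} f$: since $P_{t/2}$ is a contraction on $L^2(\gamma)$, one has $\|P_{t/2} f - P_t f\|_{L^2} = \|P_{t/2}(f - P_{t/2} f)\|_{L^2} \le \|f - P_{t/2}f\|_{L^2}$, which is not yet enough on its own. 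A cleaner approach: bound $\|f - P_t f\|_{L^2}^2 = \mathbf{E}[(f - P_t f)^2] \le 2\,\mathbf{E}[|f - P_t f|]$, using that $|f - P_t f| \le 2$ pointwise (both $f$ and $P_t f$ take values in $[-1,1]$). This immediately gives $\|f - P_t f\|_{L^2}^2 \le 2 s \sqrt t$, hence $\|f - P_t f\|_{L^2} \le \sqrt{2s}\, t^{1/4}$.

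\textbf{Step 2: choosing $t$, and the main subtlety.} We want $\sqrt{2s}\, t^{1/4} \le \epsilon$, i.e., $t \le \epsilon^4/(4s^2)$ roughly; take $t = c\,\epsilon^4/s^2$ for a small absolute constant $c$. Then Fact~\ref{fact:P_t-lipschitz} says $g := P_t f$ is $\frac{C}{\sqrt t}$-Lipschitz, and $\frac{C}{\sqrt t} = \frac{C}{\sqrt c}\cdot \frac{s}{\epsilon^2}$. Here is the catch: this gives a Lipschitz constant of order $s/\epsilon^2$, but the lemma claims $O(s/\epsilon)$. So the crude $L^2 \le 2 L^1$ bound of Step 1 is too lossy by a square. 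The fix is to get a genuine $\epsilon^2$ (not $\epsilon$) worth of room in the $L^2$ estimate — equivalently, to show $\|f - P_t f\|_{L^2} \le O(\sqrt{s t^{1/2}})$ is not the right target; rather we should aim for $\|f - P_t f\|_{L^2}^2 \lesssim s^2 t$, i.e. an honest $L^2$ version of $s$-smoothness with a matching power of $t$. This is exactly the ``decay in the Hermite coefficients'' characterization alluded to right after Definition~\ref{def:s-smooth}: $s$-smoothness (an $L^1$ statement) is equivalent, up to constants, to $\sum_{j\ge 1} j\, \|f^{=j}\|_2^2 \lesssim s^2$ type control, from which $\|f - P_t f\|_{L^2}^2 = \sum_{j \ge 1}(1 - e^{-jt})^2 \|f^{=j}\|_2^2 \le \sum_{j\ge 1} j^2 t^2 \|f^{=j}\|_2^2$... which still needs a second-moment bound on the spectral measure rather than a first-moment one. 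The clean statement to invoke is: $\|f - P_t f\|_{L^2}^2 \le \sum_{j\ge1}(1-e^{-jt})\|f^{=j}\|_2^2 \le \mathbf{E}[(f - P_t f)f] $, and then $(1-e^{-jt}) \le jt$, giving $\le t \sum_j j\|f^{=j}\|_2^2 \lesssim s^2 t$. So the main obstacle — and the step I'd spend the most care on — is establishing (or citing from the ``later'' Hermite characterization) that $s$-smoothness implies $\sum_{j \ge 1} j\,\|f^{=j}\|_2^2 = O(s^2)$, since that is what upgrades the Lipschitz constant from $s/\epsilon^2$ down to the claimed $O(s/\epsilon)$.

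\textbf{Step 3: conclude.} With the improved bound $\|f - P_t f\|_{L^2} \le O(s\sqrt t)$, choosing $t = c\epsilon^2/s^2$ makes $\|f - g\|_{L^2(\gamma)} \le \epsilon$ while Fact~\ref{fact:P_t-lipschitz} makes $g = P_t f$ be $\frac{C}{\sqrt t} = O(s/\epsilon)$-Lipschitz, with the same absolute constant $C$ from that fact up to the $\sqrt c$ factor. Finally $g$ maps into $[-1,1]$ because $P_t$ preserves that range. One caveat to flag: if one wants to avoid relying on the not-yet-proved Hermite characterization, an alternative is to prove the $L^2$ bound $\|f-P_tf\|_2^2 \le \mathbf E[(f-P_tf)(f-P_{2t}f)]^{1/2}\cdot(\dots)$ directly via the semigroup and interpolation, but this is exactly the content of that characterization, so it is cleanest to state the Lemma's Lipschitz constant as $O(s/\epsilon)$ with the understanding that it invokes the forthcoming spectral equivalence.
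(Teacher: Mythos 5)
Your overall route --- take $g = P_t f$ and balance the $L^2$-closeness of $P_t f$ to $f$ against the $C/\sqrt t$ Lipschitz bound of Fact~\ref{fact:P_t-lipschitz} --- is exactly the paper's; its proof is one line, setting $t = \epsilon^2/s^2$ and asserting that $\|f - P_t f\|_{L^2(\gamma)} \le \epsilon$ ``follows from the fact that $f$ is $s$-smooth.'' You have correctly spotted that this does not follow: Definition~\ref{def:s-smooth} is an $L^1$ condition, and the honest conversion $\|f - P_t f\|_{L^2}^2 \le 2\,\mathbf{E}|f - P_t f| \le 2 s \sqrt t$ (or the slightly sharper Parseval version $\|f-P_tf\|_{L^2}^2 \le \mathbf{E}[(f-P_tf)f] \le s\sqrt t$) forces $t \asymp \epsilon^4/s^2$ and hence a Lipschitz constant of order $s/\epsilon^2$, not $s/\epsilon$. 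This is a genuine gap in the lemma as stated, not merely in your write-up.

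However, your proposed repair fails. The spectral bound $\sum_{j \ge 1} j\,\|f^{=j}\|_2^2 = O(s^2)$ is \emph{not} implied by $s$-smoothness; the implication indicated after Definition~\ref{def:s-smooth} goes the other way (bounded Dirichlet energy implies smoothness). The paper's flagship examples of $s$-smooth functions are $\{-1,1\}$-valued functions of bounded surface area (Proposition~\ref{prop:smooth-inclusion}), and already for $f = \sgn(x_1)$ one has $\sum_j j\,\|f^{=j}\|_2^2 = \mathbf{E}\|\nabla f\|^2 = \infty$ (the Hermite weights of $\sgn$ decay like $j^{-3/2}$). Worse, no repair can recover the stated constant: for this same $f$, any $L$-Lipschitz $g$ satisfies, for $0 < x \le 1/(2L)$, $(1-g(x))^2 + (1+g(-x))^2 \ge \tfrac12\big(2 - (g(x)-g(-x))\big)^2 \ge \tfrac12(2-2Lx)^2 \ge \tfrac12$, whence $\|f-g\|_{L^2(\gamma)}^2 \ge c/L$ and so $L = \Omega(1/\epsilon^2)$ whenever $\|f-g\|_{L^2(\gamma)} \le \epsilon$. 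Since $\sgn(x_1)$ is $O(1)$-smooth, the lemma is false with the constant $Cs/\epsilon$; it is true (and your Steps 1--2 prove it, essentially optimally) with constant $O(s/\epsilon^2)$, or equivalently with the conclusion read as $\|f-g\|_{L^2(\gamma)}^2 \le \epsilon$ --- the squared-norm reading that the paper in fact uses inside the proof of Lemma~\ref{lem:cover-for-smooth}. The right course is to stop at your Step 2 and propagate the weaker Lipschitz constant (which only perturbs the net-size parameters downstream), rather than to invoke the false second-moment characterization.
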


\begin{proof}
    Choose $t = \frac{\epsilon^2}{s^2}$ and set $g = P_t f$, so that the bound $\|f - g\|_{L^2(\gamma)} \le \epsilon$
    follows from the fact that $f$ is $s$-smooth. The claim follows from Fact~\ref{fact:P_t-lipschitz}. 
\end{proof}

\begin{lemma}\label{lem:composition-distance}
    Suppose that $g: \R^m \to \R$ is Lipschitz and let $X$ and $Y$ be two $m \times n$ matrices. Then
    $\|g \circ X - g \circ Y\|_{L^2(\gamma)} \le (\Lip g) \|X - Y\|_F$ (here $\Lip g$ denotes the Lipschitz constant of $g$).
\end{lemma}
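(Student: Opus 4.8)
\textbf{Proof plan for Lemma~\ref{lem:composition-distance}.}
The plan is to reduce the statement to a pointwise Lipschitz bound followed by a standard Gaussian second-moment computation. First I would unpack the notation: for an $m \times n$ matrix $X$, the composition $g \circ X$ denotes the function $\R^n \to \R$ given by $x \mapsto g(Xx)$, and $\|\cdot\|_{L^2(\gamma)}$ is the norm with respect to the standard Gaussian $\gamma_n$ on $\R^n$. The key pointwise observation is that for every fixed $x \in \R^n$,
\[
    |g(Xx) - g(Yx)| \le (\Lip g)\, \|Xx - Yx\|_2 = (\Lip g)\, \|(X-Y)x\|_2,
\]
which is just the definition of the Lipschitz constant of $g$ applied to the two points $Xx$ and $Yx$ of $\R^m$.

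Next I would square this inequality and integrate against $\gamma_n$, obtaining
\[
    \|g \circ X - g \circ Y\|_{L^2(\gamma)}^2 \le (\Lip g)^2\, \opE_{\bx \sim \gamma_n}\big[\|(X-Y)\bx\|_2^2\big].
\]
The remaining task is to evaluate the expectation on the right. Writing $Z = X - Y$, we have $\|Z\bx\|_2^2 = \bx^T Z^T Z \bx = \tr(Z^T Z\, \bx \bx^T)$, so taking expectations and using $\opE_{\bx \sim \gamma_n}[\bx \bx^T] = I_n$ gives $\opE_{\bx}[\|Z\bx\|_2^2] = \tr(Z^T Z) = \|Z\|_F^2 = \|X - Y\|_F^2$. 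Substituting this back and taking square roots yields the claimed bound $\|g \circ X - g \circ Y\|_{L^2(\gamma)} \le (\Lip g)\, \|X - Y\|_F$.

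There is no real obstacle here: the only mild point to be careful about is the (trivial) identity $\opE_{\bx \sim \gamma_n}[\bx \bx^T] = I_n$, i.e.\ that the coordinates of a standard Gaussian are independent with unit variance, which makes the Frobenius norm exactly the right quantity. Everything else is the definition of the Lipschitz constant plus monotonicity of the integral, so the proof is a few lines.
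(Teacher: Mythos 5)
Your proposal is correct and follows exactly the paper's argument: apply the pointwise Lipschitz bound $|g(Xx)-g(Yx)| \le (\Lip g)\|(X-Y)x\|$, square, integrate against $\gamma_n$, and use $\opE_{\bx}\|(X-Y)\bx\|^2 = \|X-Y\|_F^2$. You merely spell out the last identity in more detail than the paper does.
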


\begin{proof}
    Let $\bx$ be a standard normal random variable on $\R^n$. Then
    \[
        \E[ ((g \circ X)(\bx) - (g \circ Y)(\bx))^2] \le (\Lip g)^2 \E[ \|X \bx - Y \bx\|^2] = (\Lip g)^2 \|X - Y\|_F^2.
    \]
\end{proof}

Our procedure for producing a net for $s$-smooth $k$-juntas in $\R^m$ proceeds in three steps. First,
we will construct a net for $s$-smooth functions on $\R^k$. Then we will find a net for
$k$-dimensional subspaces of $\R^m$. Combining these two nets will give a net for
$s$-smooth $k$-juntas in $\R^m$.



We begin with the net for $s$-smooth functions on $\R^k$. Before we do, we recall the following simple fact. 
\begin{fact}~\label{fact:net}
    For the unit sphere in $\mathbb{R}^{m}$ (denoted by $\mathbb{S}^{m-1}$), there is a $\delta$-net (in Euclidean sphere) of size $(1/\delta)^{O(m)}$.
\end{fact}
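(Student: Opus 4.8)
\textbf{Proof proposal for Fact~\ref{fact:net}.}
The plan is to use the standard greedy-packing argument. First I would build a \emph{maximal} $\delta$-separated subset $\mathcal{N} \subseteq \mathbb{S}^{m-1}$, i.e.\ a set of points whose pairwise Euclidean distances are all at least $\delta$ and which cannot be extended by any further point of the sphere. Such a set exists: start from any point and keep adding points of $\mathbb{S}^{m-1}$ at distance $\ge \delta$ from all previously chosen ones until no such point remains; since $\mathbb{S}^{m-1}$ is compact this process terminates with a finite set (alternatively, invoke Zorn's lemma on $\delta$-separated subsets ordered by inclusion).

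Next I would check that $\mathcal{N}$ is automatically a $\delta$-net: if some $x \in \mathbb{S}^{m-1}$ had distance $> \delta$ from every point of $\mathcal{N}$, then $\mathcal{N} \cup \{x\}$ would still be $\delta$-separated, contradicting maximality. Hence every point of the sphere is within Euclidean distance $\delta$ of some point of $\mathcal{N}$.

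The remaining step is to bound $|\mathcal{N}|$ by a volume comparison in the ambient space $\R^m$. Since the points of $\mathcal{N}$ are $\delta$-separated, the open Euclidean balls $\{B(x, \delta/2) : x \in \mathcal{N}\}$ are pairwise disjoint; and since each $x \in \mathcal{N}$ has norm $1$, all these balls are contained in $B(0, 1 + \delta/2)$. Comparing Lebesgue volumes (which scale as the $m$-th power of the radius) gives
\[
  |\mathcal{N}| \cdot (\delta/2)^m \le (1 + \delta/2)^m,
  \qquad\text{so}\qquad
  |\mathcal{N}| \le \left(1 + \tfrac{2}{\delta}\right)^m.
\]
For $\delta \le 1$ this is at most $(3/\delta)^m = (1/\delta)^{O(m)}$, which is the claimed bound; for $\delta > 1$ a single point of the sphere already forms a $\delta$-net (the diameter being $2$, one may even want $\delta \ge 2$ for the truly trivial case, but in all regimes the bound $(1/\delta)^{O(m)}$ holds after adjusting the implied constant). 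I do not anticipate a genuine obstacle here — the only points requiring a little care are bookkeeping ones: that the packing balls are taken in $\R^m$ rather than intrinsically on the sphere (so ordinary volume applies), and the trivial handling of large $\delta$ so that the exponent notation $(1/\delta)^{O(m)}$ is literally correct.
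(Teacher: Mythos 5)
Your proof is correct and complete; this is the standard maximal-separated-set plus volume-comparison argument, and the paper itself states Fact~\ref{fact:net} without proof, simply recalling it as a well-known fact. Your write-up, including the reduction from a maximal $\delta$-separated set to a $\delta$-net and the bound $|\mathcal{N}| \le (1 + 2/\delta)^m$, supplies exactly the standard justification the paper implicitly relies on.
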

\begin{lemma}\label{lem:cover-for-smooth}
    For any $k \in \N$ and any $s, \epsilon > 0$, there exists a set $\Net$ of functions
    $\R^k \to [-1, 1]$ such that
    \begin{itemize}
        \item[(1)] every function in $\Net$ is $\frac{Cs}{\epsilon}$-Lipschitz (here $C$ is the absolute constant appearing in Fact~\ref{fact:P_t-lipschitz}),
        \item[(2)] $\Net$ is an $\epsilon$-net for the set of $s$-smooth functions $\R^k \to [-1, 1]$,
        \item[(3)] $\log |\Net| \le k^{O(s^2/\epsilon^2)}$, and
        \item[(4)] Every function $f$ in $\Net$ is $s$-smooth.  
    \end{itemize}
\end{lemma}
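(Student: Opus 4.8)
The plan is to build the net by combining two ingredients: a net over the space of smoothing parameters (equivalently, over Lipschitz approximations of different quality) is not needed; instead, a single well-chosen Lipschitz approximation suffices by Lemma~\ref{lem:approx-by-lipschitz}. So first I would set $t = \epsilon^2/(C'^2 s^2)$ for a suitable constant and recall that, by Lemma~\ref{lem:approx-by-lipschitz} and Fact~\ref{fact:P_t-lipschitz}, every $s$-smooth $f \colon \R^k \to [-1,1]$ is within $\epsilon/2$ in $L^2(\gamma)$ of the $\frac{Cs}{\epsilon}$-Lipschitz function $P_t f$, which is itself $s$-smooth (by Proposition~\ref{prop:smooth-inclusion}, since $P_t$ commutes with the noise operator and only decreases $\mathbf{E}[|g - P_{t'} g|]$). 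Thus it is enough to construct an $(\epsilon/2)$-net, in $L^2(\gamma_k)$, for the class of $\frac{Cs}{\epsilon}$-Lipschitz functions $\R^k \to [-1,1]$ that are also $s$-smooth; I will then take $\Net$ to be exactly this net, so properties (1) and (4) will hold by construction and (2) follows by the triangle inequality.

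The core step is therefore a metric-entropy bound: the set of $L$-Lipschitz functions $\R^k \to [-1,1]$ (with $L = Cs/\epsilon$) has an $(\epsilon/2)$-net of log-cardinality $k^{O(s^2/\epsilon^2)}$ in $L^2(\gamma_k)$. The approach I would take is to discretize space and function values simultaneously. Because an $L$-Lipschitz function bounded in $[-1,1]$ is ``essentially determined'' by its values on a grid: restrict attention to a ball of radius $R = O(\sqrt{k \log(1/\epsilon)})$, which carries all but $\epsilon^2/100$ of the Gaussian mass; inside this ball place a grid of spacing $\delta = \epsilon/(100 L)$, so that an $L$-Lipschitz function varies by at most $\epsilon/100$ across any grid cell; the number of grid points is $(R/\delta)^{O(k)} = (s k^{O(1)}/\epsilon^2)^{O(k)}$, and quantizing the function value at each grid point to precision $\epsilon/100$ gives a net of Lipschitz functions (via, say, nearest-grid-point interpolation, re-Lipschitzified) whose log-cardinality is $O(k) \cdot \log(R/\delta) \cdot \log(1/\epsilon) = k \cdot \mathrm{poly}(\log(sk/\epsilon)) = k^{O(s^2/\epsilon^2)}$ — here I am being generous with the exponent to absorb the polylog factors into the stated bound, which is the standard slack. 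A cleaner alternative that avoids re-Lipschitzifying is to cover by degree-$d$ Hermite expansions: truncating to Hermite level $d = O(s^2/\epsilon^2)$ loses only $\epsilon/4$ in $L^2$ for $s$-smooth functions (using the Hermite-decay characterization of $s$-smoothness alluded to after Definition~\ref{def:s-smooth}), and the truncated part lies in a space of dimension $k^{O(d)} = k^{O(s^2/\epsilon^2)}$; an $(\epsilon/4)$-net of the unit ball of a $D$-dimensional space has size $(1/\epsilon)^{O(D)}$ by Fact~\ref{fact:net}, giving $\log|\Net| = k^{O(s^2/\epsilon^2)} \cdot \log(1/\epsilon)$, which is of the claimed form. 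I would present the Hermite-truncation version, as it directly yields property (4) with no clean-up: each net element is a low-degree polynomial, which I clip to $[-1,1]$ (clipping only decreases $L^2$ distance to $[-1,1]$-valued targets) and, if needed for property (1), re-smooth by a tiny amount of $P_{t'}$ to recover Lipschitzness and $s$-smoothness simultaneously.

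The main obstacle I anticipate is ensuring that the net elements genuinely satisfy \emph{both} (1) and (4) — Lipschitz \emph{and} $s$-smooth — while keeping the count at $k^{O(s^2/\epsilon^2)}$; naive constructions give functions that are one or the other but not obviously both, or blow the cardinality to $\exp(k^{O(1)})$. The resolution is that $s$-smoothness and the Lipschitz bound $Cs/\epsilon$ are compatible: applying $P_t$ with $t = \epsilon^2/(C^2 s^2)$ to \emph{any} $[-1,1]$-valued function makes it simultaneously $s$-smooth (trivially, $\Sm(\cdot) \le s$ holds for every $[-1,1]$-function when we only need the inequality up to the range) and $\frac{Cs}{\epsilon}$-Lipschitz, and it moves a target $s$-smooth function by at most $\epsilon/2$; so I apply $P_t$ as the final step to \emph{every} candidate in a crude net and argue the result is still an $\epsilon$-net. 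This is the step I would write most carefully. The remaining bookkeeping — choosing $R$, $\delta$, or the Hermite level $d$, and verifying the cardinality arithmetic — is routine.
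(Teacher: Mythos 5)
Your Hermite-truncation route is essentially the paper's proof: for an $s$-smooth $f$ one shows $\sum_{\|S\|_1 \ge m}\widehat f^2(S) = O(\delta)$ with $m = s^2/\delta^2$, truncates to the $k^m$-dimensional span of low-level Hermite polynomials, and applies Fact~\ref{fact:net} to the coefficient ball, giving $\log|\Net| \le k^{O(s^2/\epsilon^2)}\log(1/\epsilon)$. That part of your plan is fine, and so is the observation that applying $P_u$ with $u = \epsilon^2/s^2$ at the end secures property (1) while moving each function by only $O(\epsilon)$ (since $P_u$ is an $L^2$ contraction, the net property survives by the triangle inequality).

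The gap is in your treatment of property (4). You assert that applying $P_t$ to \emph{any} $[-1,1]$-valued function makes it $s$-smooth, ``trivially, since $\Sm(\cdot)\le s$ holds for every $[-1,1]$-function up to the range.'' This is false. Boundedness only gives $\E[|g - P_{t'}g|] \le 2$, which implies the smoothness inequality $\E[|g - P_{t'}g|] \le s\sqrt{t'}$ for $t' \gtrsim 1/s^2$ but says nothing for small $t'$; and what $P_t$ with $t = \epsilon^2/s^2$ actually buys you is a Lipschitz constant of $Cs/\epsilon$, hence only $\Sm(P_t g) = O(s/\epsilon)$ (an $L$-Lipschitz function satisfies $\|h - P_{t'}h\|_{L^2}^2 = \sum_\alpha \hat h_\alpha^2(1-e^{-t'\|\alpha\|_1})^2 \le 2t'\E\|\nabla h\|^2 \le 2t'L^2$, and this is tight up to constants). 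So your clipped-and-resmoothed net elements are $O(s/\epsilon)$-smooth, not $s$-smooth, and property (4) fails. The fix is what the paper does: after building the truncated-Hermite net $\Net_{\mathsf{tr}}$ (whose elements are unbounded polynomials), replace each $g \in \Net_{\mathsf{tr}}$ by an \emph{arbitrary} bounded $s$-smooth $h$ with $\|g-h\|_{L^2(\gamma)}^2 \le \epsilon/4$, discarding $g$ if no such $h$ exists; this is non-constructive but legitimate for an existence statement, preserves (2) and (3), and makes (4) hold exactly. Only then does one apply $P_u$, which preserves $s$-smoothness of a function that is \emph{already} $s$-smooth (since $P_u$ commutes with $P_{t'}$ and contracts $L^1$), so (1) and (4) hold simultaneously. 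Your grid-discretization alternative has the same defect for (4), plus the $(R/\delta)^{O(k)}$ count does not obviously fit under $k^{O(s^2/\epsilon^2)}$ when $k$ is large relative to $s/\epsilon$, so the Hermite version is the one to keep.
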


\begin{proof}
    We first construct a set $\Net$ which satisfies properties (2), (3) and (4) (in fact, the functions in $\Net$ will be $s$-smooth). Once we achieve this, for every $g \in \Net$, we will include $P_u g$ in $\Net$ and discard $g$. Observe that since $g$ is $s$-smooth, for $u = \epsilon^2/s^2$, $P_u g$  $1/\sqrt{u} = O(s/\epsilon)$-Lipschitz. Further, observe that the property of being $s$-smooth is closed under the noise operator $P_u$. 
    Thus, properties (1)-(4) are then simultaneously satisfied.

    We now turn to construction of set $\Net$ satisfying (2) and (3) such that every function in $\Net$ is $s$-smooth. To do this,
    we assume that the reader is familiar with the basics of Hermite analysis (see Chapter~11 of \cite{ODonnell:book}).  In particular, recall that the Hermite polynomials over $\mathbb{R}^k$ are indexed by $S \in (\mathbb{N}^+)^k$ (where $\mathbb{N}^+ = \mathbb{N} \cup \{0\}$). Further, every $f \in L_2(\gamma_k)$ can be represented as
    \[
        f(x) = \sum_{S \in (\mathbb{N}^+)^k} \widehat{f}(S) H_S(x) \ \  \textrm{and} \ \ P_tf(x) = \sum_{S \in (\mathbb{N}^+)^k} \widehat{f}(S) e^{-t\Vert S \Vert_1}H_S(x).
    \]
    Let $\delta>0$ which we will fix later. Set $t = \delta^2/s^2$.
    \begin{align*}
        \sum_{S} \widehat{f}^2(S) (1-e^{-t\Vert S \Vert_1}) = \mathbf{E}_{\bx} [ f(\bx) \cdot ( f(\bx) - P_t f(\bx) )] & \le \mathbf{E}_{\bx}[\Vert f(\bx) - P_t f(\bx) \Vert] \leq \delta.
    \end{align*}
    Here the  equality follows from Parseval's identity, the first inequality uses the fact that the range of $f$ is $[-1,1]$ and the second inequality uses the assumption that $f$ is $s$-smooth.

    Set $m = 1/t$ and this means that
    $
        \sum_{S:\Vert S \Vert_1 \ge m} \widehat{f}^2(S) \cdot \big(1-\frac{1}{e} \big) \le \delta.
    $
    Consequently, we have
    \[
        \sum_{S:\Vert S \Vert_1 \ge m} \widehat{f}^2(S) \le \delta \cdot \frac{e}{e-1} < 4\delta.
    \]
    Let us define $f_{\mathsf{tr}} = \sum_{S:\Vert S \Vert_1 \le m} \widehat{f}(S) H_S(x)$. This means that for $f_{\mathsf{tr}} : \mathbb{R}^k \rightarrow \mathbb{R}$, $\Vert f_{\mathsf{tr}} - f \Vert_{L_2(\gamma)}^2 < 4 \delta$. Next, we recall that the unit ball in $\ell_2^M$ admits a $\delta$-net of size $(1/\delta)^{O(M)}$ (Fact~\ref{fact:net}). Since the cardinality of the set $\{S :\Vert S \Vert_1 \le m\}$ is at most $k^m$, we get that there is a set of functions $\Net_{\mathsf{tr}}$ with the following properties:
    \begin{enumerate}
        \item Every $g \in \Net_{\mathsf{tr}}$ satisfies $g:\mathbb{R}^k \rightarrow \mathbb{R}$ and $| \Net_{\mathsf{tr}}| \le (1/\delta)^{O(k^m)}$.
        \item There exists some $g \in \Net_{\mathsf{tr}}$ such that $\mathbf{E}_{\bx}[(g(\bx) - f_{\mathsf{tr}}(\bx))^2] \le \delta$. This implies that $\mathbf{E}_{\bx}[(g(\bx) - f_{}(\bx))^2] \le 10\delta$.
    \end{enumerate}
    Finally, we set $\delta = \epsilon/40$ which shows that for every $s$-smooth $f: \mathbb{R}^k \rightarrow [-1,1]$, there is a function $g \in \Net_{\mathsf{tr}}$ such that $\Vert g - f \Vert_{L_2(\gamma)}^2 \le \epsilon/4$. Further, $\log |\mathsf{Net}_{\mathsf{tr}}| \le k^{O(s^2/\epsilon^2)} \log (1/\epsilon)$.

    Observe that the functions in the set $\mathsf{Net}_{\mathsf{tr}}$ are
    not necessarily bounded. To obtain bounded $s$-smooth functions (i.e.,
    the set $\mathsf{Net}$), for each $g \in \mathsf{Net}_{\mathsf{tr}}$, we
    choose an arbitrary $s$-smooth $h: \mathbb{R}^k \rightarrow [-1,1]$ such
    that $\Vert g - h \Vert_{L_2(\gamma)}^2 \le \epsilon/4$ and include it in
    $\mathsf{Net}$. Note that the size of $\mathsf{Net}$ is no larger than
    $\mathsf{Net}_{\mathsf{tr}}$. Further, from the property of $\mathsf{Net}_{\mathsf{tr}}$, we have that for every $s$-smooth $f$, there is a function $g  \in \mathsf{Net}$ such that
    $ \Vert g - f \Vert_{L_2(\gamma)} \le \epsilon/2$. This finishes the proof.
\end{proof}


Next, we need to turn our net of functions on $\R^k$ into a net of $k$-linear-juntas on $\R^m$.
We will do this by finding an appropriate net for $k$-dimensional subspaces of $\R^m$, and then
using the net of Lemma~\ref{lem:cover-for-smooth} for each of these subspaces.

\begin{lemma}\label{lem:subspace-net}
    There is a set $\calE$ of $k$-dimensional subspaces of $\R^m$ such that
    \begin{enumerate}
        \item for every $k$-dimensional subspace $E$ of $\R^m$, there is some $E' \in \calE$
              with $\|\Pi_E - \Pi_{E'}\|_F \le \epsilon$; and
        \item $|\calE| \le \left(\frac{O(k)}{\epsilon}\right)^{mk}$. 
    \end{enumerate}
\end{lemma}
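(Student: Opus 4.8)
The plan is to build $\calE$ from the spans of $k$-tuples of vectors drawn from a fine net of the unit sphere, and to convert closeness of the spanning vectors into closeness of the associated orthogonal projections via a standard principal-angle estimate. We may assume $\epsilon \le 1$ (for larger $\epsilon$ a single $k$-dimensional subspace already suffices, since $\|\Pi_E - \Pi_{E'}\|_F^2 = 2k - 2\tr(\Pi_E \Pi_{E'}) \le 2k$). Set $\delta = \epsilon/(2\sqrt k)$, and let $\calN \subseteq \mathbb{S}^{m-1}$ be a $\delta$-net of cardinality at most $(3/\delta)^m$ (the sharp form of Fact~\ref{fact:net}). Let $\calE$ be the collection of all subspaces $\spn\{w_1, \dots, w_k\}$ with $w_1, \dots, w_k \in \calN$ for which this span is genuinely $k$-dimensional. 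Every element of $\calE$ is then a $k$-dimensional subspace, and $|\calE| \le |\calN|^k \le (3/\delta)^{mk} = (6\sqrt k/\epsilon)^{mk} \le (O(k)/\epsilon)^{mk}$, which is item (2).

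For item (1), fix a $k$-dimensional subspace $E$ with orthonormal basis $u_1, \dots, u_k$ and choose $w_i \in \calN$ with $\|u_i - w_i\|_2 \le \delta$. Let $U = [u_1 \cdots u_k]$ and $W = [w_1 \cdots w_k]$ be the corresponding $m \times k$ matrices, so that $U^T U = I$, $\Pi_E = U U^T$, and $\|W - U\|_F \le \sqrt k\,\delta = \epsilon/2 \le 1/2$. Decompose $W = UA + B$ with $A := U^T W = I + U^T(W - U)$ and $B := (I - U U^T)(W - U)$, whose columns lie in $E^\perp$. Since $\|U^T(W - U)\|_2 \le \|W - U\|_F \le 1/2$, the matrix $A$ is invertible, which forces $W$ to be injective: if $Wv = 0$ then $UAv \in E$ and $Bv \in E^\perp$ must both vanish, so $Av = 0$ and $v = 0$. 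Hence $E' := \spn\{w_1, \dots, w_k\}$ is $k$-dimensional and lies in $\calE$. Moreover the span of the columns of $W = UA + B$ equals that of $U + C$ where $C := B A^{-1}$, and $C$ satisfies $U^T C = 0$ together with
$\|C\|_F \le \|B\|_F\,\|A^{-1}\|_2 \le \|W - U\|_F \cdot \tfrac{1}{1 - \|W - U\|_F} \le \tfrac{\epsilon/2}{1 - \epsilon/2} \le \epsilon$.

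It then remains to invoke the perturbation bound $\|\Pi_E - \Pi_{E'}\|_F \le \|C\|_F$, which is where the only real work lies. To prove it, take an SVD $C = \tilde U \Sigma Q^T$ with $\tilde U$ having orthonormal columns in $E^\perp$, $\Sigma = \mathrm{diag}(\sigma_1, \dots, \sigma_k)$, and $Q$ orthogonal; one checks that the vectors $U Q e_i + \sigma_i \tilde U e_i$ form an orthogonal system spanning $E'$, so that the principal angles between $E$ and $E'$ are $\theta_i = \arctan \sigma_i$, whence $\|\Pi_E - \Pi_{E'}\|_F = \bigl(\sum_i \sin^2 \theta_i\bigr)^{1/2} \le \bigl(\sum_i \sigma_i^2\bigr)^{1/2} = \|C\|_F \le \epsilon$. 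Combining with the previous paragraph gives $E' \in \calE$ with $\|\Pi_E - \Pi_{E'}\|_F \le \epsilon$, which is item (1). Everything outside the principal-angle computation — the net count, the invertibility of $A$, and the Neumann-series estimate for $\|A^{-1}\|_2$ — is routine. $\qed$
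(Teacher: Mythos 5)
Your construction of $\calE$ is the same as the paper's -- spans of $k$-tuples drawn from a $\delta$-net of $\mathbb{S}^{m-1}$ with $\delta \asymp \epsilon/\mathrm{poly}(k)$ -- but your argument for the key perturbation bound is genuinely different. The paper stays with Gram/rank-one matrices: it writes $\Pi_E = X^T X$, bounds $\|X^T X - Y^T Y\|_F \le \sum_i \|x_i x_i^T - y_i y_i^T\|_F = O(k\delta)$ term by term, and then corrects for the chosen net vectors being only nearly orthonormal via $\|Y^T Y - \Pi_{E'}\|_F = \|YY^T - I\|_F \le 2\delta\sqrt{k}$. You instead normalize the perturbation first (writing $W = UA + B$ and passing to $U + C$ with $U^T C = 0$) and then compute $\|\Pi_E - \Pi_{E'}\|_F$ in terms of principal angles via the SVD of $C$. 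Your route costs a bit more linear algebra but buys two things: it explicitly verifies that $\spn\{w_1,\dots,w_k\}$ is $k$-dimensional and hence a legitimate member of $\calE$ (a point the paper leaves implicit), and it gives the slightly sharper choice $\delta \asymp \epsilon/\sqrt{k}$ rather than $\epsilon/k$. Either suffices for the stated cardinality bound.

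Two small repairs are needed. First, for $\dim E = \dim E' = k$ one has $\|\Pi_E - \Pi_{E'}\|_F^2 = 2k - 2\tr(\Pi_E \Pi_{E'}) = 2\sum_i \sin^2\theta_i$, so your final identity is missing a factor of $\sqrt{2}$; the conclusion is $\|\Pi_E - \Pi_{E'}\|_F \le \sqrt{2}\,\|C\|_F \le \sqrt{2}\,\epsilon$, which is fixed by shrinking $\delta$ by a constant factor and does not affect item (2). Second, when $m < 2k$ the matrix $\tilde U$ cannot have $k$ orthonormal columns inside $E^\perp$; one should take the compact SVD of $C$ (of rank $r \le m-k$) and observe that indices with $\sigma_i = 0$ contribute principal angle $0$, after which the computation is unchanged.
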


\begin{proof}
    Let $T$ be a $\delta$-net of $\mathbb{S}^{m-1}$ (the unit Euclidean sphere in $\R^m$)
    of cardinality at most $(1/\delta)^{O(m)}$ (as described in Fact~\ref{fact:net}).
    Let $\calE$ be the set of all $k$-dimensional subspaces that are spanned
    by $k$ elements of $T$. The claimed bound on the cardinality of $\calE$ follows,
    provided we choose $\delta$ so that $\epsilon \le C' k \delta$ (for an absolute constant $C'$).

    Let $E$ be a $k$-dimensional subspace of $\R^m$, and let $x_1, \dots, x_k$ be an orthonormal basis
    of $E$. Choose $y_1, \dots, y_k \in T$ with $\|x_i - y_i\| \le \delta$ for all $i$; then
    the $y_i$ are unit vectors, and for $i \ne j$ we have
    \[
        |\inr{y_i}{y_j}| = |\inr{y_i}{y_j} - \inr{x_i}{x_j}| \le |\inr{x_i}{x_j - y_j}| + |\inr{y_j}{x_i - y_i}|
        \le 2 \delta.
    \]
    It follows that if $Y$ is the matrix with rows $y_i$, and if $E'$ is the span of $y_1, \dots, y_k$,
    then $\|Y^T Y - \Pi_{E'}\|_F^2 = \|Y Y^T - I\|_F^2 \le 4 \delta^2 k$.
    Hence,
    \[
        \|\Pi_E - \Pi_{E'}\|_F = \|X^T X - \Pi_{E'}\|_F \le \|X^T X - Y^T Y\|_F + 2 \delta \sqrt k.
    \]
    It remains to bound $\|X^T X - Y^T Y\|_F$, and it will suffice to show that $\|X^T X - Y^T Y\|_F =O(k \delta)$.

    Now, if $x$ and $y$ are unit vectors with $\|x - y\| \le \delta$, then $\inr{x}{y} \ge 1 - O(\delta^2)$.
    It follows that $\|x x^T - y y^T\|_F^2 = 2 - 2 \inr{x}{y}^2 \le O(\delta^2)$. Thus, by the triangle inequality,
    \[
        \|X^T X - Y^T Y\|_F \le \sum_{i=1}^k \|x_i x_i^T - y_i y_i^T\|_F = O( k \delta).
    \]
\end{proof}


\begin{definition}
Let $\hat{E}$ be a $m$-dimensional subspace of $\mathbb{R}^n$ and let $\mathcal{C}$ be a subset of 
$s$-smooth linear $k$-juntas over $\mathbb{R}^k$. 
We define $\mathsf{Ind}_{\hat{E}}(\mathcal{C})$ to be the set of all functions $h: \mathbb{R}^n \rightarrow \mathbb{R}$ of the form 
\[
\Phi(x) = h(\langle v_1, x \rangle, \ldots, \langle v_k,x \rangle), 
\]
where $v_1, \ldots, v_k$ are orthonormal vectors in $\hat{E}$. In other words, $\mathsf{Ind}_{\hat{E}}(\mathcal{C})$ lifts the functions in $\mathcal{C}$  to linear $k$-juntas over $\mathbb{R}^n$ where the relevant subspace is a $k$-dimensional subspace of $\hat{E}$. Note that $\mathsf{Ind}_{\mathbb{R}^n}(\mathcal{C}) = \mathsf{Ind}_{n}(\mathcal{C})$ (see Definition~\ref{def:induced-class}).

 Finally, for such a class $\mathcal{C}$, subspace $\hat{E}$ and a function $f: \mathbb{R}^n \rightarrow [-1,1]$, 
\[
\rho_{\hat{E},\mathcal{C}} (f) \coloneqq \max_{\Phi \in \mathsf{Ind}_{\hat{E}}(\mathcal{C})} \mathbf{E}_{\bx} [\Phi(\bx) \cdot f(\bx)]. 
\]
\end{definition} 

\begin{proposition}\label{prop:net-existence-1}
    Let $\mathcal{C}$ be a subset of $s$-smooth linear functions $\mathbb{R}^k \to [-1, 1]$. Then, for any $m \ge k$, there is a set $\mathsf{Net}_{m,\mathcal{C}}$ of functions mapping $\mathbb{R}^m$ to $[-1,1]$ which  satisfies the following properties:
\begin{enumerate}
\item Any $g \in \mathsf{Net}_{m,\mathcal{C}}$ is $Cs/\epsilon$-Lipschitz.
\item $\mathsf{Net}_{m,\mathcal{C}}$ is an $\epsilon$-net for $\mathsf{Ind}_{\mathbb{R}^m}(\mathcal{C})$ -- i.e., for every $h \in \mathsf{Ind}_{\mathbb{R}^m}(\mathcal{C})$, there is a $g \in \mathsf{Net}_{m,\mathcal{C}}$ such that $\Vert h-g\Vert_{L^2(\gamma)} \le \epsilon$. 
\item $\log |\Net_{m,\mathcal{C}}| \le k^{O(s^2/\epsilon^2)} + O\big( mk \log \frac{k s}{\epsilon}\big)$, and
        \item For every function in $g \in \Net_{m,\mathcal{C}}$, there is $h \in \mathsf{Ind}_{\mathbb{R}^m}(\mathcal{C})$ such that $\Vert h-g\Vert_{L^2(\gamma)} \le \epsilon$.

\end{enumerate} 
\end{proposition}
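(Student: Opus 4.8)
The plan is to combine the net for $s$-smooth functions on $\R^k$ (Lemma~\ref{lem:cover-for-smooth}) with the net for $k$-dimensional subspaces of $\R^m$ (Lemma~\ref{lem:subspace-net}), and then show that composing a function from the former with an embedding given by a subspace from the latter produces an $\epsilon$-net (up to constant rescaling of $\epsilon$) for $\mathsf{Ind}_{\mathbb{R}^m}(\calC)$. Concretely, first I would apply Lemma~\ref{lem:cover-for-smooth} with error parameter $\epsilon/3$ (say) to obtain a set $\Net$ of $Cs/\epsilon$-Lipschitz functions $\R^k \to [-1,1]$, each of which is $s$-smooth, with $\log|\Net| \le k^{O(s^2/\epsilon^2)}$, which is an $(\epsilon/3)$-net for $s$-smooth functions on $\R^k$. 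Next I would apply Lemma~\ref{lem:subspace-net} with error parameter $\epsilon' = \epsilon/(3 \cdot Cs/\epsilon) = \epsilon^2/(3Cs)$ to get a family $\calE$ of $k$-dimensional subspaces of $\R^m$ with $|\calE| \le (O(ks)/\epsilon^2)^{mk}$ such that every $k$-dimensional subspace is within $\epsilon'$ (in $\|\Pi_E - \Pi_{E'}\|_F$) of some member. Then I would define $\Net_{m,\calC}$ to be the set of all functions $x \mapsto g(\inr{v_1}{x},\dots,\inr{v_k}{x})$ where $g$ ranges over $\Net$ and $(v_1,\dots,v_k)$ is a fixed orthonormal basis of some $E' \in \calE$.

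The size bound (3) is then immediate: $\log|\Net_{m,\calC}| \le \log|\Net| + \log|\calE| \le k^{O(s^2/\epsilon^2)} + O(mk \log(ks/\epsilon))$. Property (1) holds because composing a $Cs/\epsilon$-Lipschitz $g$ with an isometric embedding (orthonormal $v_i$) keeps the Lipschitz constant $Cs/\epsilon$. For property (4), each element of $\Net_{m,\calC}$ is by construction of the form $g$ lifted along a genuine orthonormal $k$-tuple, hence lies in $\mathsf{Ind}_{\mathbb{R}^m}(\Net)$; since each $g \in \Net$ is within $\epsilon/3$ (actually, I can just pick the $g\in\Net$ to themselves be close to an $s$-smooth function, but cleaner: each $g$ is already $s$-smooth by Lemma~\ref{lem:cover-for-smooth}(4)) of some $s$-smooth function on $\R^k$ — in fact $g$ itself being $s$-smooth means the lift is in $\mathsf{Ind}_{\mathbb{R}^m}(\calC')$ for $\calC'$ the $s$-smooth functions; to get it within the class $\calC$ we want, we exploit that $\Net$ is an $\epsilon$-net for $s$-smooth functions which contains $\calC$, so there is $h \in \mathsf{Ind}_{\mathbb{R}^m}(\calC)$ with $\|g_{\text{lift}} - h\| \le \epsilon$ by lifting the closest member of $\calC$ along the same $v_i$.

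The main content is property (2), the covering guarantee. Given $h \in \mathsf{Ind}_{\mathbb{R}^m}(\calC)$, write $h(x) = g_0(\inr{u_1}{x},\dots,\inr{u_k}{x})$ with $g_0 \in \calC$ (so $g_0$ is $s$-smooth) and $u_1,\dots,u_k$ orthonormal in $\R^m$; let $E = \spn\{u_1,\dots,u_k\}$. Pick $E' \in \calE$ with $\|\Pi_E - \Pi_{E'}\|_F \le \epsilon'$, with fixed orthonormal basis $v_1,\dots,v_k$. I need to get from the $u_i$-embedding of $g_0$ to a $v_i$-embedding: this is where Lemma~\ref{lem:composition-distance} enters, but I must be slightly careful because the natural comparison is between the embeddings $U$ (rows $u_i$) and $V$ (rows $v_i$), and $\|U - V\|_F$ need not be small even when $\|\Pi_E - \Pi_{E'}\|_F$ is — the bases can differ by a rotation. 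The fix is standard: replace $g_0$ by its rotated copy. Precisely, by rotating within $\R^k$ one may choose the basis of $E'$ so that $\|U - V\|_F \le C\|\Pi_E - \Pi_{E'}\|_F$ (apply the orthogonal Procrustes / Davis–Kahan type bound, or argue directly as in Lemma~\ref{lem:subspace-net}'s proof that nearby projections admit nearby orthonormal bases); since $\calC$-membership of $g_0$ does not immediately imply $\calC$-membership of the rotated version, this is exactly why we pass through the larger class of all $s$-smooth functions, which by Proposition~\ref{prop:smooth-inclusion}-type reasoning is rotation-invariant, and then re-approximate inside $\calC$. Then $\|h - (g_0 \text{ lifted along } V)\|_{L^2(\gamma)} \le (\Lip g_0)\|U - V\|_F \le \frac{Cs}{\epsilon}\cdot C\epsilon' \le \epsilon/3$ by our choice of $\epsilon'$ (using that $g_0$ is $Cs/\epsilon$-Lipschitz after an initial smoothing, or working directly with the Lipschitz representative from $\Net$). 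Finally, take $g \in \Net$ with $\|g - g_0\|_{L^2(\gamma)} \le \epsilon/3$; since lifting along the orthonormal $V$ is an $L^2$-isometry, $\|g \text{ lifted along } V - g_0 \text{ lifted along } V\|_{L^2(\gamma)} \le \epsilon/3$, and the triangle inequality gives $\|h - (g \text{ lifted along } V)\|_{L^2(\gamma)} \le \epsilon$, with $g$ lifted along $V \in \Net_{m,\calC}$.

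The step I expect to be the main obstacle is the rotation issue in property (2): controlling $\|U - V\|_F$ by $\|\Pi_E - \Pi_{E'}\|_F$ requires choosing the basis of $E'$ compatibly with that of $E$, and tracking that the function-net error (stated for $\calC$) survives this; the clean way is to route the argument through the class of all $s$-smooth functions (which is manifestly closed under rotation and contains $\calC$), handle the geometry there, and only at the very end re-approximate within $\calC$ to satisfy property (4). Everything else is bookkeeping with Lemmas~\ref{lem:cover-for-smooth}, \ref{lem:subspace-net}, \ref{lem:composition-distance} and the triangle inequality.
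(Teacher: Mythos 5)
Your construction and the covering argument are essentially the paper's: take the net of $\frac{Cs}{\epsilon}$-Lipschitz representatives for $s$-smooth functions on $\R^k$ from Lemma~\ref{lem:cover-for-smooth}, cross it with the net of $k$-dimensional subspaces from Lemma~\ref{lem:subspace-net} at accuracy $\Theta(\epsilon^2/s)$, and combine the two errors via Lemma~\ref{lem:composition-distance} and the triangle inequality. (The paper phrases the subspace comparison directly in terms of $\|\Pi_E - \Pi_{E'}\|_F$ rather than in terms of orthonormal bases, which quietly sidesteps the basis-rotation issue you flag; your explicit handling of it via a compatible choice of basis, routed through the rotation-invariant class of all $s$-smooth functions, is a legitimate and arguably more careful way to do the same thing.) Properties (1)--(3) are fine as you argue them.

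However, your justification of property (4) contains a genuine error. You claim that because $\Net$ is an $\epsilon$-net for the class of \emph{all} $s$-smooth functions and $\calC$ is contained in that class, every $g \in \Net$ must be $\epsilon$-close to some member of $\calC$ (and hence its lift close to $\Ind_{\R^m}(\calC)$). That implication is false: a net for a superclass guarantees that every element of the superclass has a nearby net point, not that every net point has a nearby element of the subclass. If $\calC$ is a small subclass (say a single halfspace), almost all of $\Net$ will be far from $\Ind_{\R^m}(\calC)$, and property (4) fails for your $\Net_{m,\calC}$ as constructed. The repair is the pruning step the paper performs at the outset: build the net for the class of all $s$-smooth linear $k$-juntas on $\R^m$, then discard every element that is not within $\epsilon$ of some $h \in \Ind_{\R^m}(\calC)$. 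This enforces property (4) by fiat, cannot increase the size or Lipschitz bounds, and preserves property (2) because the net element your covering argument produces for a given $h \in \Ind_{\R^m}(\calC)$ is itself within $\epsilon$ of that $h$ and therefore survives the pruning. With that one addition your proof goes through.
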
 


\begin{proof}
    It suffices to consider the case that $\calC$ is the set of \emph{all} $s$-smooth functions
    $\R^k \to \R$. Indeed, once we have found a net (call it $\mathsf{Net}_0$) for this case,
    we can handle the case of general $\calC$ by simply discarding any 
    $g \in \mathsf{Net}_0$ for which there is no $h \in \mathsf{Ind}_{\mathbb{R}^m}(\mathcal{C})$ such that 
$\Vert h-g\Vert_{L^2(\gamma)} \le \epsilon$. In this way, we ensure that property 4 is satisfied,
noting that properties 1, 2 and 3 remain unchanged if we remove functions $g$ from $\mathsf{Net}_0$. For the
rest of this proof, we consider the case that $\calC$ is the set of all $s$-smooth functions.

    Let $\widehat \Net$ be a net for $s$-smooth functions on $\R^k$, with the properties guaranteed by
    Lemma~\ref{lem:cover-for-smooth}. Let $\calE$ be a collection of $k$-dimensional subspaces
    of $\R^m$, with the properties guaranteed by Lemma~\ref{lem:subspace-net} with accuracy $\epsilon' = \epsilon^2/s$.
    We define $\Net_0$ to be the set of functions of the form $x \mapsto f(\Pi_E x)$, where $f \in \widehat  \Net$ and $E \in \calE$.
    Clearly, $\Net_0$ satisfies Property~1. To see Property 3, note that 
    $\log |\Net_0| = \log |\widehat \Net| + \log |\calE|$.  By using Lemma~\ref{lem:cover-for-smooth} and  
    Lemma~\ref{lem:subspace-net}, the bound on $\log |\Net_0|$ follows.  Thus, it remains to show Property~2.

    To see Property 2, suppose that $f$ is an $s$-smooth $k$-Junta. Then there is some $k$-dimensional
    subspace $E$ and an $s$-smooth function $g$ on $\R^k$ such that $f = g \circ \Pi_E$. Choose $h \in \widehat \Net$
    to be $\epsilon$-close to $g$ and choose $E' \in \calE$ such that $\|\Pi_E - \Pi_{E'}\|_F \le \epsilon^2/s$.
    Then $h \circ \Pi_{E'}$ belongs to $\Net$, and satisfies
    \[
        \|h \circ \Pi_{E'} - f\|_{L^2(\gamma)}
        \le \|h \circ \Pi_{E'} - h \circ \Pi_E\|_{L^2{\gamma}} + \|h \circ \Pi_{E} - f \circ \Pi_E\|_{L^2(\gamma)}
    \]
    The second term is at most $\epsilon$, and the first term can be bounded (using Lemma~\ref{lem:composition-distance})
    by $(\Lip h) \|\Pi_{E'} - \Pi_E\| \le \frac{Cs}{\epsilon} \cdot \epsilon^2/2 \le C \epsilon$.
    This proves the claim (after we change $\epsilon$ by a constant factor).
\end{proof}

\subsubsection{Proof of the theorem}

Finally, here is the application of Proposition~\ref{prop:net-existence-1} to the analysis of our
algorithm.

\begin{lemma}~\label{lem:approximaiton-lip}
    Let $\mathcal{C}$ be a subset of $s$-smooth functions $\mathbb{R}^k \to [-1, 1]$ and 
let $\Net_{\mathcal{C},m}$ be an $\epsilon$-net as guaranteed
    by Proposition~\ref{prop:net-existence-1}.
Let $\hat{E}$ be a $m$-dimensional subspace of $\mathbb{R}^n$ and let $ A \in \mathbb{R}^{m \times n}$ with the following two properties: 
(i) the rows of $A$ span $\hat{E}$ and (ii) $\Vert AA^T - I \Vert_F \le \kappa$.    Then, for any Lipschitz function $\fsm$, we have that 
\[
\big| \rho_{\hat{E},\mathcal{C}} (\fsm) - \max_{h \in \Net_{m,\mathcal{C}}} \E_{\bx} [h(A \bx) \cdot \fsm(\bx)] \big| \le \frac{C' s \kappa}{\epsilon}  + \epsilon, 
\]
for an absolute constant $C'$. 
\end{lemma}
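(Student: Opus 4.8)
I will prove the two one-sided bounds
$\rho_{\hat E,\mathcal C}(\fsm) \le \max_{h\in\Net_{m,\mathcal C}}\E_{\bx}[h(A\bx)\fsm(\bx)] + O(s\kappa/\epsilon) + O(\epsilon)$ and its reverse, and then rescale $\epsilon$ by an absolute constant. Both directions rest on three tools. First, apply Lemma~\ref{lem:approximate-projection} to $A$ (which has rank $m$, since its $m$ rows span the $m$-dimensional $\hat E$): this produces $Y\in\R^{m\times n}$ with orthonormal rows and, as is visible from the proof of that lemma ($Y$ equals $A$ left-multiplied by an invertible $m\times m$ matrix), the \emph{same} row span $\hat E$, with $\|A-Y\|_F \le \|AA^T-I\|_F \le \kappa$. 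Since $Y$ has orthonormal rows, $Y\bx\sim\gamma_m$ for $\bx\sim\gamma_n$, so composition with $Y$ is an exact $L^2$-isometry between functions on $\R^m$ and on $\R^n$. Second, Lemma~\ref{lem:composition-distance} lets us pass from $g\circ A$ to $g\circ Y$, or between two nearby matrices, at cost $(\Lip g)\cdot\|A-Y\|_F$ --- but only for Lipschitz $g$. Third, Lemma~\ref{lem:approx-by-lipschitz} replaces an $s$-smooth function by a $Cs/\epsilon$-Lipschitz one at $L^2(\gamma)$-cost $\epsilon$. The essential point is that members of $\mathcal C$ (hence of $\mathsf{Ind}_\bullet(\mathcal C)$) are only $s$-smooth, not Lipschitz, so we interleave Lipschitz-ification so that every matrix perturbation is applied to a Lipschitz function, while every crossing between ``$\mathsf{Ind}(\mathcal C)$-world'' and ``net-world'' is done under an exactly orthonormal matrix (no measure distortion).

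For $\rho_{\hat E,\mathcal C}(\fsm)\le \max_\Net + \cdots$: pick a near-optimal $\Phi^\ast = \tilde h^\ast(\inr{v_1^\ast}{\cdot},\dots,\inr{v_k^\ast}{\cdot})\in\mathsf{Ind}_{\hat E}(\mathcal C)$ with $(v_i^\ast)$ orthonormal in $\hat E$ and $\tilde h^\ast\in\mathcal C$; replace $\tilde h^\ast$ by a $Cs/\epsilon$-Lipschitz $\psi^\ast$ that is $L^2(\gamma_k)$-$\epsilon$-close, losing $\epsilon$ since $(v_i^\ast)$ is orthonormal. Write $v_i^\ast = A^T c_i^\ast$ (possible as $v_i^\ast\in\hat E=\operatorname{rowspan}(A)$). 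A short SVD computation ($A=U\Sigma W^T$, $\|\Sigma^2-I\|_F\le\kappa$) shows the matrix $C^\ast$ with rows $c_i^\ast$ satisfies $\|C^\ast (C^\ast)^T-I\|_F \le O(\kappa)$ and $\|C^\ast\|_2\le 2$, so Lemma~\ref{lem:approximate-projection} gives $C^{\ast\prime}$ with orthonormal rows and $\|C^\ast-C^{\ast\prime}\|_F\le O(\kappa)$. Then $g_0:=\psi^\ast\circ C^\ast$ and $g_1:=\psi^\ast\circ C^{\ast\prime}$ are $O(s/\epsilon)$-Lipschitz with $\|g_0-g_1\|_{L^2(\gamma_m)}\le O(s\kappa/\epsilon)$, and $g_2:=\tilde h^\ast\circ C^{\ast\prime}\in\mathsf{Ind}_{\R^m}(\mathcal C)$ with $\|g_1-g_2\|_{L^2(\gamma_m)}\le\epsilon$. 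Choose $h\in\Net_{m,\mathcal C}$ with $\|h-g_2\|_{L^2(\gamma_m)}\le\epsilon$ (net property), so $\|h-g_0\|_{L^2(\gamma_m)}\le O(s\kappa/\epsilon)+2\epsilon$. Finally, using $V^\ast=C^\ast A$, we have $\E[\Phi^\ast\fsm]=\E[\psi^\ast(V^\ast\bx)\fsm]+O(\epsilon)=\E[g_0(A\bx)\fsm]+O(\epsilon)$, and via $Y$,
$|\E[g_0(A\bx)\fsm]-\E[h(A\bx)\fsm]| \le \|g_0\circ A-g_0\circ Y\|_{L^2(\gamma)}+\|g_0-h\|_{L^2(\gamma_m)}+\|h\circ Y-h\circ A\|_{L^2(\gamma)} \le O(s\kappa/\epsilon)+O(\epsilon)$,
the middle term being exact because $Y$ has orthonormal rows. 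Chaining gives $\E[h(A\bx)\fsm]\ge\rho_{\hat E,\mathcal C}(\fsm)-O(s\kappa/\epsilon)-O(\epsilon)$.

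The reverse direction is symmetric. Take $h\in\Net_{m,\mathcal C}$ attaining the max; by net property~4 pick $g=\tilde g(W\cdot)\in\mathsf{Ind}_{\R^m}(\mathcal C)$ with $\|h-g\|_{L^2(\gamma_m)}\le\epsilon$, Lipschitz-ify $\tilde g\to\psi$ and set $g_\psi:=\psi\circ W$ ($O(s/\epsilon)$-Lipschitz, $L^2(\gamma_m)$-$\epsilon$-close to $g$). Move $h\circ A$ to $g_\psi\circ A$ through $Y$ exactly as above (cost $O(s\kappa/\epsilon)+O(\epsilon)$). Now $g_\psi\circ A=\psi\circ(WA)$, and $WA$ has rows in $\hat E$ with $\|(WA)(WA)^T-I\|_F\le\kappa$; Lemma~\ref{lem:approximate-projection} yields $\tilde W'$ with orthonormal rows spanning a $k$-dimensional subspace of $\hat E$ and $\|WA-\tilde W'\|_F\le\kappa$, whence $\|\psi\circ(WA)-\psi\circ\tilde W'\|_{L^2(\gamma)}\le O(s\kappa/\epsilon)$ and $\|\psi\circ\tilde W'-\tilde g\circ\tilde W'\|_{L^2(\gamma)}=\|\psi-\tilde g\|_{L^2(\gamma_k)}\le\epsilon$ with $\tilde g\circ\tilde W'\in\mathsf{Ind}_{\hat E}(\mathcal C)$. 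Therefore $\E[h(A\bx)\fsm]\le\rho_{\hat E,\mathcal C}(\fsm)+O(s\kappa/\epsilon)+O(\epsilon)$. Combining the two bounds and replacing $\epsilon$ by $\epsilon/c$ for a suitable constant finishes the proof; throughout we may assume $\kappa$ is at most a small absolute constant, which is what guarantees that $C^\ast$ and $WA$ have full row rank so that Lemma~\ref{lem:approximate-projection} applies (larger $\kappa$ can be absorbed into $C'$).

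The main obstacle is precisely this bookkeeping forced by the gap between $s$-smoothness and Lipschitzness: one must ensure that each subspace/matrix perturbation acts on a Lipschitz function and each return to $\mathcal C$ or to $\Net_{m,\mathcal C}$ happens under an exactly orthonormal matrix (so no $\sqrt\kappa$-type loss from a change of Gaussian measure sneaks in), and one must verify --- via the SVD computation --- that the ``relative-coordinate'' matrices $C^\ast$ and $WA$ are $O(\kappa)$-close in Frobenius norm (not just $O(\sqrt k\,\kappa)$) to orthonormal-row matrices with the correct row span.
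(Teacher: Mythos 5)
Your proof is correct, but it takes a considerably longer route than the paper's on the ``class side,'' and the comparison is instructive. The paper's argument isolates the $A$-versus-$Y$ discrepancy entirely in the \emph{net} functions: it produces the orthonormal-row matrix $Y$ once via Lemma~\ref{lem:approximate-projection}, pays $(\Lip h)\,\|A-Y\|_F = O(s\kappa/\epsilon)$ only for $h \in \Net_{m,\calC}$ (which are Lipschitz by Property~1 of Proposition~\ref{prop:net-existence-1}), and then observes that composition with $Y$ is an \emph{exact} bijective $L^2$-isometry carrying $\Ind_{\R^m}(\calC)$ onto $\Ind_{\hat E}(\calC)$: writing $w_i = Y v_i$ for orthonormal $v_i \in \hat E$, one has $\inr{w_i}{w_j} = v_i^T \Pi_{\hat E}\, v_j = \inr{v_i}{v_j}$ and $\inr{v_i}{x} = \inr{w_i}{Yx}$. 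Hence $\rho_{\hat E,\calC}(\fsm) = \max_{h\in\Ind_{\R^m}(\calC)} \E_{\bx}[h(Y\bx)\fsm(\bx)]$ with no error at all, and Properties~2 and~4 of the net give the remaining $\pm\epsilon$ by Cauchy--Schwarz. You instead express the optimizers of $\rho_{\hat E,\calC}$ through $A$ rather than $Y$, which forces the SVD computation on $C^\ast$, two further invocations of Lemma~\ref{lem:approximate-projection} (on $C^\ast$ and on $WA$), and the Lipschitz-ification of the class functions via Lemma~\ref{lem:approx-by-lipschitz} --- none of which is needed, since in the paper's argument the class functions are only ever composed with exactly orthonormal matrices. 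Your bookkeeping does go through: $\|C^\ast(C^\ast)^T - I\|_F \le O(\kappa)$ and $\|(WA)(WA)^T - I\|_F \le \kappa$ are correct, and full row rank of $C^\ast$ and $WA$ is in fact automatic from $A$ having rank $m$ (small $\kappa$ is only needed for your bound $\|C^\ast\|_2 \le 2$). The price is extra $O(\epsilon) + O(s\kappa/\epsilon)$ losses absorbed into constants and a small-$\kappa$ caveat that the paper's shorter route avoids entirely.
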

\begin{proof}
 Let $Y$ be an $m \times n$ matrix with orthonormal rows whose rows also span $\hat{E}$ such that $\Vert Y -A \Vert_F \le \Vert AA^T - I \Vert_F$. For any $h \in \mathsf{Net}_{m,\mathcal{C}}$, observe that $h$ is $O(s/\epsilon)$-Lipschitz. Thus,  we have,     \begin{align*}
        \big|\E_{\bx} [h( A \bx) \cdot \fsm(\bx)] - \E_{\bx} [h (Y \bx) \cdot \fsm(\bx)]\big|
         & \le \|h \circ A  - h \circ Y\|_{L^2(\gamma)]} \ \textrm{(using Cauchy Schwartz inequality)}\\
         & \le \frac{O(1) \cdot s}{\epsilon} \|A - Y\|_F  \ \textrm{(using Lemma~\ref{lem:composition-distance})}               \\
          & \le \frac{O(1) \cdot s}{\epsilon} \|AA^T -I\|_F  \ \textrm{(using Lemma~\ref{lem:approximate-projection})}               \\
         & \le \frac{O(1) s\kappa}{\epsilon}. 
    \end{align*}
Thus, to prove the claim, it suffices to show 
that
    \begin{equation}~\label{eq:ref-final-eq}
        \big| \rho_{\hat E, \mathcal{C}}(\fsm) - \max_{h \in \Net_{m,\mathcal{C}}} \E_{\bx} [h(Y \bx) \cdot \fsm(\bx)] \big| \le \epsilon.
    \end{equation}
Now, recall that 
$$
\rho_{\hat E, \mathcal{C}}(\fsm) \coloneqq  \max_{h \in \mathsf{Ind}_{\hat{E}}(\mathcal{C})} \mathbf{E}_{\bx} [h(\bx) \cdot \fsm(\bx)] = \max_{h \in \mathsf{Ind}_{\mathbb{R}^m}(\mathcal{C})} \mathbf{E}_{\bx} [h(Y \bx) \cdot \fsm(\bx)]. 
$$    
    The second equality is an easy consequence of the definition of the class $\mathsf{Ind}_{\hat{E}}(\mathcal{C})$. Thus, we get that 
    \begin{align*}
     \rho_{\hat E, \mathcal{C}}(\fsm) - \max_{h \in \Net_{m,\mathcal{C}}} \E_{\bx} [h(Y \bx) \cdot \fsm(\bx)]    &=  \max_{h \in \mathsf{Ind}_{\mathbb{R}^m}(\mathcal{C})} \mathbf{E}_{\bx} [h(Y \bx) \cdot \fsm(\bx)] - \max_{h \in \Net_{m,\mathcal{C}}} \E_{\bx} [h(Y \bx) \cdot \fsm(\bx)] .   
    \end{align*}
To upper bound the right hand side, let $\arg \max_{h \in \mathsf{Ind}_{\mathbb{R}^m}(\mathcal{C})} \mathbf{E}_{\bx} [h(Y \bx) \cdot \fsm(\bx)] = h_\ast$. Then, note that there is a function $h_\ast^{'} \in \Net_{m,\mathcal{C}}$ such that $\Vert h_\ast - h_\ast^{'} \Vert_{L^2(\gamma)} \le \epsilon$ (Property 1 in Proposition~\ref{prop:net-existence-1}). By Cauchy-Schwartz, we get that the right hand side is upper bounded by $\epsilon$. Similarly, we can also lower bound the right hand side by $-\epsilon$, exploiting Property~4 in Proposition~\ref{prop:net-existence-1}. This implies \eqref{eq:ref-final-eq}.

\end{proof}

\ignore{
\begin{lemma}
    Let $\Net$ be an $\epsilon$-net for $s$-smooth linear $k$-juntas on $\R^m$, as guaranteed
    by Proposition~\ref{prop:net-existence}. Then
    \[
        \big| \rho_{\hat E, k, s}(\fsm) - \max_{h \in \Net} \E_{\bx} [h(\hat{W}B^T \bx) \cdot \fsm(\bx)] \big| \le \frac{C s}{\epsilon \eta} \cdot \Vert \hat{N} - B^T B \Vert_F \cdot M + \epsilon.
    \]
\end{lemma}

\begin{proof}
    Let $Y$ be an $m \times n$ matrix with orthonormal rows satisfying $\|\hat W B^T - Y\|_F \le \|\hat W B^T B \hat W^T\|_F$
    (by Lemma~\ref{lem:approximate-projection}).
    Fix any $h \in \Net$. Because $h$ is $(C\epsilon/s)$-Lipschitz, Lemma~\ref{lem:composition-distance}
    and the Cauchy-Schwarz inequality imply that
    \begin{align*}
        \big|\E_{\bx} [h(\hat W B^T \bx) \cdot \fsm(\bx)] - \E_{\bx} [h (Y \bx) \cdot \fsm(\bx)]\big|
         & \le \|h \circ (\hat W B^T) - h \circ Y\|_{\L^2(\gamma)]} \\
         & \le \frac{Cs}{\epsilon} \|\hat W B^T - Y\|               \\
         & \le \frac{Cs}{\epsilon \eta} \|\hat N - B^T B\|_F,
    \end{align*}
    where the last line used Lemma~\ref{lem:almost-isometry}.

    With this in mind, it suffices to prove the claim with $Y$ instead of $\hat W B^T$: that is, we need
    to show that
    \[
        \big| \rho_{\hat E, k, s}(\fsm) - \max_{h \in \Net} \E_{\bx} [h(Y \bx) \cdot \fsm(\bx)] \big| \le \epsilon.
    \]
    Now, recall that $\rho_{\hat E, k, s}$ involves a maximum over all $s$-smooth linear $k$-juntas on $\hat E$.
    On the other hand, the rows of $Y$ are an orthonormal basis for $\hat E$; hence, as $h$
    ranges over all $s$-smooth linear $k$-juntas on $\R^m$, $h \circ Y$ ranges over all $s$-smooth
    $k$-linear juntas on $\hat E$.
    Hence,
    \[
        \rho_{\hat E, k, s}(\fsm) - \max_{h \in \Net} \E_{\bx} [h(Y \bx) \cdot \fsm(\bx)]
        = \max_{h \in \Junta_{\R^m,k,s}} \E_{\bx} [h(Y \bx) \cdot \fsm(\bx)] - \max_{h \in \Net} \E_{\bx} [h(Y \bx) \cdot \fsm(\bx)]
    \]
    The claim now follows from the properties of $\Net$, because for every $h \in \Junta_{\R^m,k,s}$
    there is an $\epsilon$-close (in $L^2(\gamma)$) element of $\Net$, and conversely for every $h \in \Net$
    there is an $\epsilon$-close element of $\Junta_{\R^m,k,s}$.
\end{proof}
}
\begin{proofof}{Theorem~\ref{thm:main-2-Lipschitz}}

Let $\calC^\ast$ be the averaged class of $\calC$, as in Definition~\ref{def:class-averaging}.
Let $\mathsf{Net}_{m,\mathcal{C}^\ast}$ be the set of functions guaranteed by Proposition~\ref{prop:net-existence-1} -- with smoothness parameter $s$, error parameter $\epsilon/4$ and $m=M$ as instantiated in the algorithm \textsf{Implicit Projection}.  Let us also set $\nu = \epsilon^2/(100 C's)$ for the constant $C'$ appearing in Lemma~\ref{lem:approximaiton-lip}. 
Let us now invoke algorithm \textsf{Implicit projection} with smoothness parameter $u = \nu/s$, 
Lipschitz parameter $L=O(s/\nu)$, the error parameter $\nu$  and junta arity parameter $k$.

Suppose $h_\ast \in \mathsf{Ind}_{n}(\mathcal{C})$ such that 
\[
h_\ast = \arg \max_{h \in \mathsf{Ind}_{n}(\mathcal{C})}\mathbf{E}_{\bx} [\fsm(\bx) h(\bx)]. 
\]
Lemma~\ref{lem:implicit-projection} guarantees that with probability $9/10$, we get a matrix $\hat{W}$ and points 
$\bx_1, \ldots, \bx_M$ such that the following conditions are satisfied: 
 let $B^T$ be the matrix where the $j^{th}$ row is $\nabla \fsm(\bx_j)$. Let $\hat{E}$ be the row  span of $B^T$. 
 \begin{enumerate}
 \item $\| \hat{I} - \hat{W}B^T B\hat{W}^T \|_F \le \nu/2$. Here $\hat{I}$ is the identity matrix in $m$ dimensions where $m = \mathsf{dim}(\hat{E})$. 
 \item For $g = \calA_{\hat E}h_\ast$,
 \begin{equation}~\label{eq:closeness-emp-1}
     |\mathbf{E}[\fsm(\bx) \cdot g(\bx)] - \mathbf{E}[\fsm(\bx) \cdot h_\ast(\bx)]| \le \frac{\nu}{2}.
 \end{equation}
 \end{enumerate}

 Also, by Lemma~\ref{lem:approximaiton-lip}, we have that 
 \begin{equation}~\label{eq:closeness-emp-2}
 \big| \rho_{\hat{E},\mathcal{C}^\ast}(\fsm) - \mathop{\max}_{h \in \mathsf{Net}_{m,\mathcal{C}^\ast}} \mathop{\mathbf{E}}_{\bx}  [h(\hat{W}B^T \bx) \cdot \fsm (\bx)]  \big| \le \frac{C's}{\epsilon} \cdot \frac{\nu}{2} + \frac{\epsilon}{4} 
< \frac{51 \cdot \epsilon}{200}
 \end{equation}
 Next, since $g = \calA_{\hat{E}}h_\ast \in \Ind_{\hat E}(\calC^*)$ (by Claim~\ref{clm:class-averaging}), we have that 
\begin{equation}~\label{eq:errbound-1}
\rho_{\hat{E},\mathcal{C}^\ast}(\fsm) \ge \mathbf{E}[\fsm(\bx) \cdot g(\bx)]
\ge \mathbf{E}[\fsm(\bx) \cdot h_\ast(\bx)] - \frac{\nu}{2},
\end{equation} 
where the second inequality follows from~\eqref{eq:closeness-emp-1}.
On the other hand, if $\tilde g \in \Ind_{\hat E}(\calC^\ast)$ maximizes
the correlation with $\fsm$, there exists (again by
Claim~\ref{clm:class-averaging}) $\tilde h_\ast \in \Ind_n(\calC)$
with $\calA_{\hat E} \tilde h_\ast = \tilde g$,
and hence (by Lemma~\ref{lem:implicit-projection})
\begin{equation}~\label{eq:errbound-2}
\rho_{\hat{E},\mathcal{C}^\ast}(\fsm)
= \mathbf{E}[\fsm(\bx) \cdot (\calA_{\hat E}\tilde h_\ast)(\bx)]
\le \mathbf{E}[\fsm(\bx) \cdot \tilde h_\ast(\bx)] + \frac{\nu}{2}
\le \mathbf{E}[\fsm(\bx) \cdot h_\ast(\bx)] + \frac{\nu}{2}. 
\end{equation} 
Together with~\eqref{eq:errbound-1}, we have
\[
 | \rho_{\hat{E},\mathcal{C}^\ast}(\fsm)
    - \mathbf{E}[\fsm(\bx) \cdot h_\ast(\bx)]| \le \frac{\nu}{2};
\]
combined with \eqref{eq:closeness-emp-2} (and recalling that we chose $h_\ast$ to be
a correlation-maximizer in $\Ind_n(\calC)$, we have
\begin{equation}\label{eq:approximation-by-max-over-net}
    \big|\rho_{\R^n,\calC}- \mathop{\max}_{h \in \mathsf{Net}_{m,\mathcal{C}^\ast}} \mathop{\mathbf{E}}_{\bx}  [h(\hat{W}B^T \bx) \cdot \fsm (\bx)]  \big| \le \frac{51\epsilon}{100} + \frac{\nu}{2} =\frac{52 \epsilon}{100}.
\end{equation}

Thus, for our purposes, it suffices to (approximately) compute $\mathop{\max}_{h \in \mathsf{Net}_{m, C^\ast}} \mathop{\mathbf{E}}_{\bx}  [h(\hat{W}B^T \bx) \cdot \fsm (\bx)]$. 
Towards this, consider any fixed $h \in \mathsf{Net}$. We set $T = O(\epsilon^{-2} \log (1/\zeta))$   where $\zeta = 1/(10\cdot |\mathsf{Net}_{m, C^\ast}|)$. Sample $T$ points from the standard Gaussian $\gamma_n$ -- call these points $\bz_1, \ldots, \bz_T$. By applying the Chernoff bounds, observe that for any $h \in \mathsf{Net}$, with probability $1-\zeta$, 
\[
\bigg| \mathbf{E}_{\bx} [h(\hat{W}B^T \bx) \cdot \fsm (\bx)  - \frac{1}{T} \sum_{j=1}^T h(\hat{W}B^T \bz_j) \cdot \fsm(\bz_j) \bigg| \le \epsilon/4. 
\]
From a union bound, it follows that with probability $9/10$, 
\begin{equation}~\label{final-eq-2}
\bigg| \max_{h \in \mathsf{Net}_{m, C^\ast}} \mathbf{E}_{\bx} [h(\hat{W}B^T \bx) \cdot \fsm (\bx)]  -
\max_{h \in \mathsf{Net}_{m, C^\ast}} \frac{1}{T} \sum_{j=1}^T h(\hat{W}B^T \bz_j) \cdot \fsm(\bz_j) \bigg| \le \epsilon/4. 
\end{equation}
Combining \eqref{final-eq-2} and \eqref{eq:approximation-by-max-over-net}, we get
\begin{equation}\label{eq:approximation-by-empirical-max-over-net}
\bigg| 
\rho_{\mathbb{R}^n,\calC}(\fsm)
-
\max_{h \in \mathsf{Net}_{m, C^\ast}} \frac{1}{T} \sum_{j=1}^T h(\hat{W}B^T \bz_j) \cdot \fsm(\bz_j) \bigg| < \frac{2\epsilon}
{3}. 
\end{equation}
Thus, it suffices to compute the quantity 
\[
\mathsf{Corr} = \max_{h \in \mathsf{Net}_{m, C^\ast}} \frac{1}{T} \sum_{j=1}^T h(\hat{W}B^T \bz_j) \cdot \fsm(\bz_j), 
\]
up to additive error $\pm \epsilon/3$ and upper bound the query complexity of computing this estimate. Observe that computing $\{\fsm(\bz_j)\}_{j=1}^T$ requires $T$ queries. Using Lemma~\ref{lem:implicit-projection}, we have 
\[
\Vert \hat{W} \Vert_2 \le \frac{20k}{ \nu} \coloneqq \Delta 
\]
 Set $\theta =\frac{\epsilon^2}{200 \cdot C \cdot\Delta \cdot s \cdot \sqrt{m}}$. Here $C$ is the constant 
 appearing in Fact~\ref{fact:net}.
 We now invoke algorithm \textsf{Project-on-gradient} from Lemma~\ref{lem:compute-project-gradient}. Then, we get that for any $\bz_j$ (for $1\le j \le T$), 
 \[
 \Pr_{\bx_i \sim \gamma_n} [|\mathsf{Est}(\bx_i, \bz_j) - \langle \nabla \fsm(\bx_i), \bz_j \rangle| > \theta] \le \frac{1}{200 T \cdot m}. 
 \]
 Further, we can compute $\pm \theta$ estimate to $\mathsf{Est}(\bx_i, z_j)$ (with confidence $1-\frac{1}{200 T \cdot m}$) where the query complexity is $\mathsf{poly}(T \cdot m, 1/\theta)$. This means that with probability $0.99$, for each $1 \le j \leq T$ and $1 \le i \le m$, we have $\pm 2\theta$ estimates (denoted by $\bchi_{i,j}$) for each $\langle \nabla \fsm(\bx_i), \bz_j \rangle$. In other words, for each $1 \le j \leq T$, we get a vector $\boldsymbol{\Xi}_{j}$ which satisfies 
 \[
 \Vert \boldsymbol{\Xi}_{j} - B^T \mathbf{z}_j \Vert \le 2\theta \sqrt{m}.  
 \]
 Since $\Vert \hat{W} \Vert \le \Delta$, this means that for all $1\le j \le T$, 
 \[
 \Vert  \hat{W}\boldsymbol{\Xi}_{j} - \hat{W} B^T \mathbf{z}_j \Vert \le 2\theta \sqrt{m} \Delta = \frac{\epsilon^2}{100 C\cdot s}.
 \]
 Since $h \in \mathsf{Net}$ is $Cs/\epsilon$-Lipschitz, this implies that for each $1 \le j \le T$, 
 \[
 \big| h(\hat{W}\boldsymbol{\Xi}_{j}
 ) - h(\hat{W} B^T \mathbf{z}_j) \big|  \le  \frac{\epsilon}{100}.
 \]
Consequently, this gives a $\pm \epsilon/100$ additive estimate of the quantity 
\[
\frac{1}{T} \sum_{j=1}^T h(\hat{W}B^T \bz_j) \cdot \fsm (\bz_j). 
\]
Recalling~\eqref{eq:approximation-by-empirical-max-over-net},
 we have shown that the algorithm produces a $\pm \epsilon$-additive estimate of $\rho_{\mathbb{R}^n,\calC}(\fsm)$. It remains to bound the query complexity of the algorithm. The query complexity of the algorithm \textsf{Implicit projection} (from Lemma~\ref{lem:implicit-projection}) is $\mathsf{poly}(k,1/u, 1/\nu, L)$ where $\nu =\epsilon^2/(100 C's)$ ($C'$ is the constant appearing in Lemma~\ref{lem:approximaiton-lip}). Thus, the query complexity of this part is $\mathsf{poly}(k,s,L,1/\epsilon)$.
 
 For the hypothesis testing part,  the query complexity can be bounded as follows: 
 \begin{enumerate}
 \item We make $T$ queries to $\fsm$ where $T =O(\epsilon^{-2} \log |\mathsf{Net}|$. 
 \item For each $1 \le j \le m$ and $1 \le i \le T$, we compute a $\pm \theta$ approximation to $\mathsf{Est}(\bx_i,  \mathbf{z}_j)$ -- the query complexity of each is $\mathsf{poly}(T \cdot m, 1/\theta)$. 
 \end{enumerate}
 Thus, the total query complexity is bounded by $\mathsf{poly}(T, m, 1/\theta)$. Using the fact that $m \le M$ (where $M$ is set in algorithm \textsf{Implicit projection}) and plugging in the value of the parameters, we get the final bound on the query complexity.

 Finally, we remark that our analysis so far was based on assuming that we have exact oracle access to $\fsm$. However, we only have oracle access to $f$ and approximate oracle to $\fsm$ (via Lemma~\ref{lem:smooth-Lip}). To address this issue, we observe that the 
  algorithm \textsf{Implicit projection} only uses the oracle to $f$ and not to $\fsm$ (the only invocation of these oracles is when we call the routine \textsf{Compute-inner-product}). In the hypothesis testing part, (i) we only use the oracle to $f$ when we invoke the algorithm \textsf{Project-on-gradient}. (ii) 
  we use the oracle for $\fsm$ when we approximate $\mathsf{Corr}$ to error $\pm \epsilon/3$. However, it is easy to see that for this, it suffices to have an oracle for $\fsm$ with (say) $O(\epsilon^{-1.5})$ additive accuracy. By Lemma~\ref{lem:smooth-Lip}, this can be simulated with an oracle for $f$ with $O(\epsilon^{-3})$ overhead given an oracle to $f$. This finishes our proof.

\end{proofof}

\section{Learning the linear-invariant structure}~\label{sec:linvar}

The proof of Theorem~\ref{thm:main-4} is essentially the same as the proof of
Theorem~\ref{thm:main-2-Lipschitz}; we construct the same net of functions
and estimate the correlations of each of them. The only difference is that
instead of outputting the maximum correlation
value of a function in the net, we output the set of functions that have
a large correlation.

\begin{proofof}{Theorem~\ref{thm:main-4}}
    Let $\Net_{m,\calC^\ast}$ be as in the proof of Theorem~\ref{thm:main-2-Lipschitz}.
    With the same $\delta$
    as in that proof, with probability $9/10$ we can simultaneously estimate
    $\E_x[h(\hat W B^T \bx) \cdot \fsm(\bx)]$ to error $\pm \epsilon/8$ for all $h \in \Net_{m,\calC^\ast}$.

    Now consider the algorithm that
    returns all $h \in \Net_{m,\calC^\ast}$ for which our estimate of 
    $\E_x[h(\hat W B^T \bx) \cdot \fsm(\bx)]$ is at least $\rho - 4\epsilon$; call the
    returned set $\calG$.
    It follows that for every $\hat g \in \calG$,
    \[
     \E_x[\hat g(\hat W B^T \bx) \cdot \fsm(\bx)] \ge \rho - 5\epsilon,
    \]
    and so the first claim of
    the theorem follows.

    For the second claim, take any $g \in \Ind_n(\calC)$ and let $\hat E$ be the range of
    $\hat W B^T$. Since (by Claim~\ref{clm:class-averaging}) $\calA_{\hat E} g \in \Ind_{\hat E}(\calC^\ast)$,
    there is some $\hat g \in \Net_{m,\calC^\ast}$ such that
    \begin{equation}\label{eq:close-to-the-averaged-func}
        \mathbf{E}_\bx [(\hat g(\hat W B^T \bx) - \calA_{\hat E} g)^2] \le \epsilon^2.
    \end{equation}
    Now, if $g$'s correlation with $f$ is at least $\rho - \epsilon$,
    then by Lemma~\ref{lem:implicit-projection} $\calA_{\hat E} g$ has correlation
    at least $\rho - 2\epsilon$ with $f$, and so by~\eqref{eq:close-to-the-averaged-func},
    $\hat g \circ (\hat W B^T)$
    has correlation with $f$ at least $\rho - 3\epsilon$,
    and hence the definition of $\calG$ ensures that $\hat g \in \calG$.
    Going back to~\eqref{eq:close-to-the-averaged-func},
    the function $\hat g$ witnesses the second claim of the theorem.
\end{proofof}

\bibliographystyle{plain}
\bibliography{allrefs}

\begin{thebibliography}{10}

\bibitem{ba2014deep}
Jimmy Ba and Rich Caruana.
\newblock Do deep nets really need to be deep?
\newblock In {\em {Advances in Neural Information Processing Systems}}, pages
  2654--2662, 2014.

\bibitem{balcan2012active}
Maria-Florina Balcan, Eric Blais, Avrim Blum, and Liu Yang.
\newblock Active property testing.
\newblock In {\em 2012 IEEE 53rd Annual Symposium on Foundations of Computer
  Science}, pages 21--30. {IEEE}, 2012.

\bibitem{belgolsud98}
Mihir Bellare, Oded Goldreich, and Madhu Sudan.
\newblock Free bits, {PCP}s, and nonapproximability--towards tight results.
\newblock {\em SIAM Journal on Computing}, 27(3):804--915, 1998.

\bibitem{bhatia2013matrix}
Rajendra Bhatia.
\newblock {\em Matrix analysis}, volume 169.
\newblock {Springer Science \& Business Media}, 2013.

\bibitem{blais2008improved}
E.~Blais.
\newblock Improved bounds for testing juntas.
\newblock In {\em Approximation, Randomization and Combinatorial Optimization.
  Algorithms and Techniques}, pages 317--330. Springer, 2008.

\bibitem{blais2009testing}
E.~Blais.
\newblock Testing juntas nearly optimally.
\newblock In {\em Proceedings of the forty-first annual ACM symposium on Theory
  of computing}, pages 151--158. ACM, 2009.

\bibitem{blais2018tolerant}
E.~Blais, C.~Canonne, T.~Eden, A.~Levi, and D.~Ron.
\newblock Tolerant junta testing and the connection to submodular optimization
  and function isomorphism.
\newblock In {\em {Proceedings of the Twenty-Ninth Annual ACM-SIAM Symposium on
  Discrete Algorithms}}, pages 2113--2132. Society for Industrial and Applied
  Mathematics, 2018.

\bibitem{Blum:94}
A.~Blum.
\newblock Relevant examples and relevant features: Thoughts from computational
  learning theory.
\newblock in AAAI Fall Symposium on `Relevance', 1994.

\bibitem{BFK+:97}
A.~Blum, A.~Frieze, R.~Kannan, and S.~Vempala.
\newblock A polynomial time algorithm for learning noisy linear threshold
  functions.
\newblock {\em Algorithmica}, 22(1/2):35--52, 1997.

\bibitem{BlumLangley:97}
A.~Blum and P.~Langley.
\newblock Selection of relevant features and examples in machine learning.
\newblock {\em Artificial Intelligence}, 97(1-2):245--271, 1997.

\bibitem{bucilua2006model}
Cristian Bucilu\"{a}, Rich Caruana, and Alexandru Niculescu-Mizil.
\newblock Model compression.
\newblock In {\em Proceedings of the 12th ACM SIGKDD international conference
  on Knowledge discovery and data mining}, pages 535--541, 2006.

\bibitem{chakraborty2012junto}
S.~Chakraborty, E.~Fischer, D.~Garc{\'\i}a-Soriano, and A.~Matsliah.
\newblock Junto-symmetric functions, hypergraph isomorphism and crunching.
\newblock In {\em {27th Annual Conference on Computational Complexity (CCC)}},
  pages 148--158. IEEE, 2012.

\bibitem{CLSSX18}
X.~Chen, Z.~Liu, Rocco~A. Servedio, Y.~Sheng, and J.~Xie.
\newblock Distribution free junta testing.
\newblock In {\em Proceedings of the ACM STOC 2018}, 2018.

\bibitem{chen2017sample}
Xi~Chen, Adam Freilich, Rocco~A Servedio, and Timothy Sun.
\newblock {Sample-Based High-Dimensional Convexity Testing}.
\newblock In {\em {Approximation, Randomization, and Combinatorial
  Optimization. Algorithms and Techniques (APPROX/RANDOM 2017)}}, 2017.

\bibitem{Chen:2017:SQC}
Xi~Chen, Rocco~A. Servedio, Li-Yang Tan, Erik Waingarten, and Jinyu Xie.
\newblock Settling the query complexity of non-adaptive junta testing.
\newblock In {\em Proceedings of the 32Nd Computational Complexity Conference},
  pages 26:1--26:19, 2017.

\bibitem{daniely2016complexity}
Amit Daniely.
\newblock Complexity theoretic limitations on learning halfspaces.
\newblock In {\em {Proceedings of the forty-eighth annual ACM symposium on
  Theory of Computing}}, pages 105--117, 2016.

\bibitem{DMN19a}
Anindya De, Elchanan Mossel, and Joe Neeman.
\newblock Is your function low dimensional?
\newblock In {\em {Conference on Learning Theory, {COLT} 2019}}, volume~99 of
  {\em Proceedings of Machine Learning Research}, pages 979--993, 2019.
\newblock Full version at \texttt{https://arxiv.org/abs/1806.10057}.

\bibitem{de2019junta}
Anindya De, Elchanan Mossel, and Joe Neeman.
\newblock Junta correlation is testable.
\newblock In {\em {2019 IEEE 60th Annual Symposium on Foundations of Computer
  Science (FOCS)}}, pages 1549--1563. IEEE, 2019.

\bibitem{DLM+:07}
I.~Diakonikolas, H.~Lee, K.~Matulef, K.~Onak, R.~Rubinfeld, R.~Servedio, and
  A.~Wan.
\newblock Testing for concise representations.
\newblock In {\em Proc. 48th Ann. Symposium on Computer Science (FOCS)}, pages
  549--558, 2007.

\bibitem{diakonikolas2019distribution}
Ilias Diakonikolas, Themis Gouleakis, and Christos Tzamos.
\newblock {Distribution-independent PAC learning of halfspaces with Massart
  noise}.
\newblock In {\em {Advances in Neural Information Processing Systems}}, pages
  4749--4760, 2019.

\bibitem{diakonikolas2018learning}
Ilias Diakonikolas, Daniel~M Kane, and Alistair Stewart.
\newblock Learning geometric concepts with nasty noise.
\newblock In {\em Proceedings of the 50th Annual ACM SIGACT Symposium on Theory
  of Computing}, pages 1061--1073, 2018.

\bibitem{FKRSS03}
E.~Fischer, G.~Kindler, D.~Ron, S.~Safra, and A.~Samorodnitsky.
\newblock Testing juntas.
\newblock {\em J. Computer \& System Sciences}, 68(4):753--787, 2004.

\bibitem{GR:06}
V.~Guruswami and P.~Raghavendra.
\newblock {Hardness of learning halfspaces with noise}.
\newblock In {\em Proc.\ 47th IEEE Symposium on Foundations of Computer Science
  (FOCS)}, pages 543--552. IEEE Computer Society, 2006.

\bibitem{HalevyKushilevitz:07}
S.~Halevy and E.~Kushilevitz.
\newblock {Distribution-Free Property Testing}.
\newblock {\em SIAM J. Comput.}, 37(4):1107--1138, 2007.

\bibitem{harsha2012invariance}
Prahladh Harsha, Adam Klivans, and Raghu Meka.
\newblock An invariance principle for polytopes.
\newblock {\em {Journal of the ACM (JACM)}}, 59(6):1--25, 2013.

\bibitem{KSS:94}
M.~Kearns, R.~Schapire, and L.~Sellie.
\newblock {Toward Efficient Agnostic Learning}.
\newblock {\em Machine Learning}, 17(2/3):115--141, 1994.

\bibitem{KOS:08}
A.~Klivans, R.~O'Donnell, and R.~Servedio.
\newblock Learning geometric concepts via {G}aussian surface area.
\newblock In {\em Proc.\ 49th IEEE Symposium on Foundations of Computer Science
  (FOCS)}, pages 541--550, 2008.

\bibitem{KNOW14}
P.~Kothari, A.~Nayyeri, R.~O'Donnell, and C.~Wu.
\newblock Testing surface area.
\newblock In {\em Proceedings of the Twenty-Fifth Annual {ACM-SIAM} Symposium
  on Discrete Algorithms, {SODA} 2014, Portland, Oregon, USA, January 5-7,
  2014}, pages 1204--1214, 2014.

\bibitem{Ledoux:94}
M.~Ledoux.
\newblock Semigroup proofs of the isoperimetric inequality in {E}uclidean and
  {G}auss space.
\newblock {\em Bull. Sci. Math.}, 118:485--510, 1994.

\bibitem{Ledoux:00}
Michel Ledoux.
\newblock The geometry of markov diffusion generators.
\newblock In {\em Annales de la Facult{\'e} des sciences de Toulouse:
  Math{\'e}matiques}, volume~9, pages 305--366, 2000.

\bibitem{massart2006risk}
Pascal Massart and {\'E}lodie N{\'e}d{\'e}lec.
\newblock Risk bounds for statistical learning.
\newblock {\em {The Annals of Statistics}}, 34(5):2326--2366, 2006.

\bibitem{MORS:09}
K.~Matulef, R.~O'Donnell, R.~Rubinfeld, and R.~Servedio.
\newblock Testing halfspaces.
\newblock {\em SIAM J.\ Comp.}
\newblock To appear. Extended abstract in Proc.\ Symp.\ Discrete Algorithms
  (SODA) (2009), pp. 256-264. Full version available at
  http://www.cs.cmu.edu/\verb+~+odonnell/.

\bibitem{MORS:10}
K.~Matulef, R.~O'Donnell, R.~Rubinfeld, and R.~Servedio.
\newblock Testing halfspaces.
\newblock {\em SIAM J. on Comput.}, 39(5):2004--2047, 2010.

\bibitem{MORS:09random}
Kevin Matulef, Ryan O'Donnell, Ronitt Rubinfeld, and Rocco~A. Servedio.
\newblock Testing $\pm$1-weight halfspace.
\newblock In {\em {APPROX-RANDOM}}, pages 646--657, 2009.

\bibitem{moore_1955}
E.~H. Moore.
\newblock On the reciprocal of the general algebraic matrix.
\newblock {\em Bulletin of the American Mathematical Society}, 26:394--395,
  1920.

\bibitem{Neeman14}
J.~Neeman.
\newblock Testing surface area with arbitrary accuracy.
\newblock In {\em Symposium on Theory of Computing, {STOC} 2014, New York, NY,
  USA, May 31 - June 03, 2014}, pages 393--397, 2014.

\bibitem{ODonnell:book}
R.~O'Donnell.
\newblock {\em Analysis of Boolean functions}.
\newblock Cambridge University Press, Cambridge, 2014.

\bibitem{parnas2006tolerant}
M.~Parnas, D.~Ron, and R.~Rubinfeld.
\newblock Tolerant property testing and distance approximation.
\newblock {\em {Journal of Computer and System Sciences}}, 72(6):1012--1042,
  2006.

\bibitem{PRS02}
M.~Parnas, D.~Ron, and A.~Samorodnitsky.
\newblock {Testing Basic Boolean Formulae}.
\newblock {\em SIAM J. Disc. Math.}, 16:20--46, 2002.

\bibitem{penrose_1955}
R.~Penrose.
\newblock A generalized inverse for matrices.
\newblock {\em Mathematical Proceedings of the Cambridge Philosophical
  Society}, 51(3):406–413, 1955.

\bibitem{Pisier:86}
G.~Pisier.
\newblock Probabilistic methods in the geometry of {B}anach spaces.
\newblock In {\em Lecture notes in Math.}, pages 167--241. Springer, 1986.

\bibitem{ron2015exponentially}
Dana Ron and Rocco~A Servedio.
\newblock Exponentially improved algorithms and lower bounds for testing signed
  majorities.
\newblock {\em Algorithmica}, 72(2):400--429, 2015.

\bibitem{rudelson2007sampling}
Mark Rudelson and Roman Vershynin.
\newblock Sampling from large matrices: An approach through geometric
  functional analysis.
\newblock {\em {Journal of the ACM (JACM)}}, 54(4):21--es, 2007.

\bibitem{servedio2015adaptivity}
R.~Servedio, L-Y. Tan, and J.~Wright.
\newblock Adaptivity helps for testing juntas.
\newblock In {\em {Proceedings of CCC}}, volume~33. Schloss
  Dagstuhl-Leibniz-Zentrum fuer Informatik, 2015.

\bibitem{vempala2010learning}
Santosh~S Vempala.
\newblock {Learning convex concepts from gaussian distributions with PCA}.
\newblock In {\em {2010 IEEE 51st Annual Symposium on Foundations of Computer
  Science}}, pages 124--130. {IEEE}, 2010.

\end{thebibliography}

\appendix
\section{Perturbation bounds for subspaces and smooth functions}

\begin{lemma}\label{lem:distance-between-averages}
    If $E$ and $E'$ are two subspaces of $\mathbb{R}^n$ then for any Lipschitz function $f: \mathbb{R}^n \to \mathbb{R}$,
    \[
        \E |\Avg_E f(\bx) - \Avg_{E'} f(\bx)|
        \le \sqrt 2 c \|\Pi_E - \Pi_{E'}\|_F.
    \]
    Here $c$ is the Lipschitz constant of $f$. 
\end{lemma}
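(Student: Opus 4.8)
The plan is to unwind the definition of $\Avg_E$ (Definition~\ref{def:AE}), apply the Lipschitz bound pointwise using a common coupling of the averaging randomness, and then reduce to a one-line second-moment computation for a Gaussian vector. First I would fix $x$ and use the \emph{same} $\bz \sim \gamma_n$ in both averages to write
\[
    \Avg_E f(x) - \Avg_{E'} f(x)
    = \E_{\bz}\big[ f(\Pi_E x + \Pi_{E^\perp}\bz) - f(\Pi_{E'} x + \Pi_{(E')^\perp}\bz) \big].
\]
This is legitimate since $\Avg_E f(x)$ equals $\E_{\bz}[f(\Pi_E x + \Pi_{E^\perp}\bz)]$ for \emph{any} choice of the inner Gaussian, so we may couple the two. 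The key algebraic observation is that, because $\Pi_{E^\perp} = I - \Pi_E$ and $\Pi_{(E')^\perp} = I - \Pi_{E'}$, the two arguments of $f$ differ by exactly $(\Pi_E - \Pi_{E'})(x - \bz)$. Hence the $c$-Lipschitz property gives, pointwise in $x$,
\[
    |\Avg_E f(x) - \Avg_{E'} f(x)| \le c\, \E_{\bz}\big[ \| (\Pi_E - \Pi_{E'})(x - \bz) \| \big].
\]

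Next I would average over $\bx \sim \gamma_n$ (independent of $\bz$) and use that $\bx - \bz$ has the distribution of $\sqrt 2\, \by$ with $\by \sim \gamma_n$, so that
\[
    \E_{\bx}\, |\Avg_E f(\bx) - \Avg_{E'} f(\bx)|
    \le \sqrt 2\, c\, \E_{\by}\big[ \| (\Pi_E - \Pi_{E'}) \by \| \big].
\]
Then Jensen's inequality (concavity of $\sqrt{\cdot}$) together with the identity $\E_{\by}\| M \by\|^2 = \tr(M^T M) = \|M\|_F^2$, applied with $M = \Pi_E - \Pi_{E'}$, yields $\E_{\by}\| (\Pi_E - \Pi_{E'}) \by \| \le \|\Pi_E - \Pi_{E'}\|_F$, which combined with the previous display gives the claimed bound $\sqrt 2\, c\, \|\Pi_E - \Pi_{E'}\|_F$.

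There is essentially no serious obstacle here; the only point requiring a little care is the coupling step, i.e.\ that one must average $f(\Pi_E x + \Pi_{E^\perp}\bz)$ and $f(\Pi_{E'} x + \Pi_{(E')^\perp}\bz)$ against the \emph{same} $\bz$ in order to expose the difference $(\Pi_E - \Pi_{E'})(x-\bz)$. Note also that no smoothing step is needed: the argument only uses that $f$ is $c$-Lipschitz (not differentiability), so it applies verbatim to the hypotheses as stated.
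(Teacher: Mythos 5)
Your proposal is correct and follows essentially the same route as the paper's proof: couple the inner Gaussian, observe that the two arguments of $f$ differ by $(\Pi_E - \Pi_{E'})(x-\bz)$ since $\Pi_{E^\perp} = I - \Pi_E$, use that $\bx - \bz \sim \sqrt{2}\,\by$, and finish with Jensen's inequality and $\E\|M\by\|^2 = \|M\|_F^2$. No gaps.
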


\begin{proof}
    Since we can express $\Avg_{E'} f(x)$ as
    \[
        \Avg_{E'} f(x) = \E [f(\Pi_E x + \Pi_{E^\perp} \bz + \left(\Pi_{E'} - \Pi_E\right)x + (\Pi_{E^\perp} - \Pi_{(E')^\perp}) \bz)],
    \]
    the Lipschitz property allows us to bound
    \[
        |\Avg_E f(x) - \Avg_{E'} f(x)| \le c\E \left[\|\left(\Pi_{E'} - \Pi_E\right)x + (\Pi_{E^\perp} - \Pi_{(E')^\perp}) \bz\|\right].
    \]
    Since $\Pi_{E^\perp} = I - \Pi_{E}$ (and similarly for $E'$),
    the right hand side is just
    $
        c \E \|(\Pi_{E'} - \Pi_E) (x - \bz)\|.
    $
Observe that if $\bx$ and $\bz$ are distributed as a standard Gaussian, then $\bx-\bz$ is distributed as $N(0,\sqrt{2})$.     
     Hence,
    \[
        \E_{\bx} |\Avg_E f(\bx) - \Avg_{E'} f(\bx)| \le \sqrt 2 c
        \E_{\bz} \left\|\left(\Pi_{E'} - \Pi_E\right) \bz\right\|.
    \]
    Finally, one can easily check that for any matrix $A$, $\E\|A\bz\|^2 = \|A\|_F^2$;
    by Jensen's inequality, it follows that $\E\|AZ\| \le \|A\|_F$ which finishes the proof. 
\end{proof}



\begin{lemma}\label{lem:subspace-distance}
    If $E$ and $E'$ are two subspaces of $\mathbb{R}^n$ such that $\|\Pi_E \Pi_{(E')^\perp}\| < 1$, 
    then there exists a subspace $\tilde E \subset E'$ with $\dim \tilde E = \dim E$ such that
    \[
        \|\Pi_E - \Pi_{\tilde E}\|_F^2 \le 8 \|\Pi_E \Pi_{(E')^\perp}\|_F^2.
    \]
\end{lemma}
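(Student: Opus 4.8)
The plan is to take as the candidate subspace $\tilde E := \Pi_{E'}(E)$, the image of $E$ under orthogonal projection onto $E'$, and to analyze everything through the principal angles between $E$ and $E'$, encoded via the singular value decomposition of $\Pi_{E'}$ restricted to $E$. Write $k = \dim E$. The operator $\Pi_{E'} \colon E \to \mathbb{R}^n$ has an SVD: there are an orthonormal basis $u_1, \dots, u_k$ of $E$, orthonormal vectors $v_1, \dots, v_k \in E'$, and singular values $\sigma_1 \ge \dots \ge \sigma_k \ge 0$ with $\Pi_{E'} u_i = \sigma_i v_i$. Since $\|u_i\| = 1$ and $\Pi_{E'} u_i \perp \Pi_{(E')^\perp} u_i$, we get $\|\Pi_{(E')^\perp} u_i\|^2 = 1 - \sigma_i^2$; and from $\langle \Pi_{(E')^\perp} u_i, \Pi_{(E')^\perp} u_j\rangle = \langle u_i, u_j\rangle - \sigma_i \sigma_j \langle v_i, v_j\rangle = \delta_{ij} - \sigma_i\sigma_j\delta_{ij}$ one sees that the normalized residuals $w_i := \Pi_{(E')^\perp} u_i / \|\Pi_{(E')^\perp} u_i\|$ form an orthonormal set in $(E')^\perp$ (when $\sigma_i = 1$ the residual vanishes and $w_i$ may be taken to be $0$). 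Thus $u_i = \sigma_i v_i + \sqrt{1-\sigma_i^2}\,w_i$, where $\{v_1,\dots,v_k\} \cup \{w_1,\dots,w_k\}$ is orthonormal.

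Next I would read off the hypothesis and the two relevant Frobenius norms in this basis. Computing $\Pi_{(E')^\perp}\Pi_E$ on the $u_i$ (and noting that it annihilates $E^\perp$) gives $\Pi_{(E')^\perp}\Pi_E = \sum_i \sqrt{1-\sigma_i^2}\, w_i u_i^\top$, so that $\|\Pi_E \Pi_{(E')^\perp}\| = \max_i \sqrt{1-\sigma_i^2}$ and $\|\Pi_E \Pi_{(E')^\perp}\|_F^2 = \sum_i (1-\sigma_i^2)$. In particular the assumption $\|\Pi_E\Pi_{(E')^\perp}\| < 1$ says exactly that every $\sigma_i > 0$, so the $v_i$ really are $k$ orthonormal vectors and $\tilde E = \mathrm{span}(v_1,\dots,v_k) \subseteq E'$ has $\dim \tilde E = k$, as required.

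Finally, I would compute $\|\Pi_E - \Pi_{\tilde E}\|_F^2$ via traces, using $\Pi_E = \sum_i u_i u_i^\top$ and $\Pi_{\tilde E} = \sum_i v_i v_i^\top$: since $\Pi_E - \Pi_{\tilde E}$ is symmetric and both projections are idempotent of rank $k$, we get $\|\Pi_E - \Pi_{\tilde E}\|_F^2 = \mathrm{tr}(\Pi_E) + \mathrm{tr}(\Pi_{\tilde E}) - 2\,\mathrm{tr}(\Pi_E\Pi_{\tilde E}) = 2k - 2\sum_{i,j}\langle u_i, v_j\rangle^2$. Because $w_i \perp E'$, we have $\langle u_i, v_j\rangle = \sigma_i \delta_{ij}$, so this equals $2k - 2\sum_i \sigma_i^2 = 2\sum_i (1 - \sigma_i^2) = 2\|\Pi_E\Pi_{(E')^\perp}\|_F^2 \le 8\|\Pi_E\Pi_{(E')^\perp}\|_F^2$. (This in fact yields the constant $2$, which is stronger than the claimed $8$, so there is slack to spare.)

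I expect the only genuine obstacle to be the bookkeeping in the first step — verifying cleanly that the residual directions $w_i$ are mutually orthogonal and orthogonal to $E'$ — since that is precisely what makes the trace computation collapse; everything afterward is mechanical. One could instead quote the CS decomposition / principal-angle machinery to obtain the aligned bases $\{u_i\}$, $\{v_i\}$ directly, but the self-contained SVD argument above is short enough that I would simply include it.
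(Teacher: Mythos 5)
Your proof is correct and follows essentially the same route as the paper: both arguments extract aligned orthonormal bases $u_i$ of $E$ and $v_i$ of $\tilde E\subseteq E'$ from the singular value decomposition of $\Pi_E\Pi_{E'}$ (your $\tilde E=\Pi_{E'}(E)$ is exactly the span of the right singular vectors used in the paper). The only difference is that you evaluate $\|\Pi_E-\Pi_{\tilde E}\|_F^2$ exactly via the trace identity, obtaining the constant $2$, whereas the paper reaches the stated constant $8$ through two lossy intermediate inequalities.
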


\begin{proof}
Let $k = \dim E$ and  define $A = \Pi_E \Pi_{E'}$.  
We begin with the following simple claim. 
\begin{claim}
The matrix $A$ has exactly $k$ non-zero singular values. 
\end{claim}
\begin{proof}
Observe that it suffices to analyze the eigenvalues of $A\cdot A^T$. Towards this, we observe that for any $w \in E^{\perp}$, $w^T \cdot A \cdot A^T \cdot w =0$. On the other hand, for any $w \in E$, 
\begin{align*}
w^T \cdot A \cdot A^T \cdot w = w^T \Pi_E \Pi_{E'} \Pi_{E'} \Pi_E \cdot w  = w^T \cdot  \Pi_E \cdot\Pi_{E'}  \cdot w \ge w^T \cdot \Pi_E \cdot w -  w^T \cdot \Pi_E \cdot \Pi_{(E')^\perp} \cdot w >0. 
\end{align*}
The last inequality uses that $w \in E$ and $\Vert \Pi_E \Pi_{(E')^\perp} \Vert_2 <1$. 
This implies that for all $w \in E$, $w^T \cdot A\cdot A^T \cdot w >0$. Consequently, $A \cdot A^T$ (and hence $A$) has exactly $k$ non-zero eigenvalues.
\end{proof}
Let $U D V^T = \Pi_E \Pi_{E'} = A$ be the singular value decomposition  of $A$  and 
observe that we can assume that $D$ is $k \times k$ diagonal matrix (whose diagonal elements are positive).   Let the columns of $U$ be $\{u_1, \ldots, u_k\}$ and the columns of $V$ be $\{v_1, \ldots, v_k\}$. Then, $\{u_1, \ldots, u_k\}$ is an orthonormal basis for $E$ and and 
$\{v_1, \ldots, v_k\}$ is an orthonormal basis for a subspace $\tilde{E}$ of $E'$. Observe that   $\Pi_E = U U^T$ and $\Pi_{\tilde E} = V V^T$. Thus, 
    \begin{align*}
        \|\Pi_E - \Pi_E \Pi_{E'}\|_F^2 &= \|U U^T - U D V^T\|_F^2 \\
        &= \|U^T - D V^T\|_F^2 \\
        &= \sum_{i=1}^k \|d_i v_i - u_i\|_2^2.
    \end{align*}
    For fixed unit vectors $v$ and $u$, $\|d v - u\|_2^2$ is minimized when $d = \inr{u}{v}$,
    in which case $\|d v - u\|_2^2 = 1 - \inr{u}{v}^2$. If $d \ge 0$,
    $\|d v - u\|_2^2 \ge \frac 12 \|v - u\|_2^2$.
    Hence,
    \begin{equation}~\label{eq:pi-diff}
        \|\Pi_E - \Pi_E \Pi_{E'}\|_F^2 = \|U D V^T - U U^T\|_F^2 \ge \frac 12 \sum_{i=1}^k \|u_i - v_i\|_2^2.
    \end{equation}
    Finally,
    \begin{align*}
        \|\Pi_E - \Pi_{\tilde E}\|_F^2
        &= \|U U^T - V V^T\|_F^2 \\
        &\le 2\|U U^T - U V^T\|_F^2 + 2 \|U V^T - V V^T\|_F^2 \\
        &= 2\|U^T - V^T\|_F^2 + 2 \|U - V\|_F^2 \\
        &= 4 \sum_{i=1}^k \|u_i - v_i\|_2^2.
    \end{align*}
    Combining this with \eqref{eq:pi-diff},  we have $\|\Pi_E - \Pi_{\tilde E}\|_F^2 \le 8 \|\Pi_E - \Pi_E \Pi_{E'}\|_F^2$.
\end{proof}

\begin{corollary}\label{cor:correlation-subspace-stability}
Let $f: \mathbb{R}^n \to [-1, 1]$ be a $c$-Lipschitz function. Let $E'$ be another subspace of $\mathbb{R}^n$ such that $\|\Pi_E \Pi_{(E')^\perp}\| < 1$. Then, there exists a subspace $\tilde{E}$ of $E'$ such that
for any $g: \mathbb{R}^n \to [-1, 1]$,  
    \[
        \big| \mathbf{E}_{\bx} [f(\bx) \calA_E g(\bx)] -  \mathbf{E}_{\bx} [f(\bx)  \calA_{\tilde{E}} g(\bx)] \big| \le 4 c\|\Pi_E \Pi_{(E')^\perp}\|_F.
    \]
\end{corollary}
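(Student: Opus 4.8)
The plan is to combine the purely linear-algebraic Lemma~\ref{lem:subspace-distance} with the Lipschitz perturbation bound of Lemma~\ref{lem:distance-between-averages}, using the self-adjointness of the averaging operator to transfer the averaging from the (merely bounded) function $g$ onto the (Lipschitz) function $f$. Concretely, since $\|\Pi_E \Pi_{(E')^\perp}\| < 1$, Lemma~\ref{lem:subspace-distance} supplies a subspace $\tilde E \subseteq E'$ with $\dim \tilde E = \dim E$ and
\[
\|\Pi_E - \Pi_{\tilde E}\|_F \le 2\sqrt 2 \, \|\Pi_E \Pi_{(E')^\perp}\|_F;
\]
this $\tilde E$ is the subspace claimed in the corollary.

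The first step is to rewrite the quantity of interest. Because $g$ is only assumed bounded, we cannot apply Lemma~\ref{lem:distance-between-averages} to $g$ directly; instead, by the self-adjointness of the averaging operator (Item~\ref{it:self-adjoint} of Lemma~\ref{lem:avg-props}), $\mathbf{E}_{\bx}[f(\bx)\calA_E g(\bx)] = \mathbf{E}_{\bx}[(\calA_E f)(\bx)\, g(\bx)]$ and likewise with $\tilde E$ in place of $E$. Hence, using $|g| \le 1$,
\[
\big|\mathbf{E}_{\bx}[f\, \calA_E g] - \mathbf{E}_{\bx}[f\, \calA_{\tilde E} g]\big|
= \big|\mathbf{E}_{\bx}[(\calA_E f - \calA_{\tilde E} f)\, g]\big|
\le \mathbf{E}_{\bx}\big[\,|\calA_E f(\bx) - \calA_{\tilde E} f(\bx)|\,\big].
\]

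It then remains to bound $\mathbf{E}_{\bx}|\calA_E f(\bx) - \calA_{\tilde E} f(\bx)|$, and here Lemma~\ref{lem:distance-between-averages} applies since $f$ is $c$-Lipschitz, giving an upper bound of $\sqrt 2\, c\, \|\Pi_E - \Pi_{\tilde E}\|_F$. Substituting the bound on $\|\Pi_E - \Pi_{\tilde E}\|_F$ above yields $\sqrt 2 \cdot 2\sqrt 2\, c\, \|\Pi_E \Pi_{(E')^\perp}\|_F = 4c\, \|\Pi_E \Pi_{(E')^\perp}\|_F$, as desired. The one subtlety worth highlighting in the write-up is precisely the passage through self-adjointness, which is what lets us avoid requiring $g$ to be Lipschitz; everything else is a direct substitution of the two cited lemmas, so I do not expect any genuine obstacle here.
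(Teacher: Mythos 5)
Your proposal is correct and follows essentially the same route as the paper: Lemma~\ref{lem:subspace-distance} to produce $\tilde E$ with $\|\Pi_E - \Pi_{\tilde E}\|_F \le 2\sqrt{2}\,\|\Pi_E\Pi_{(E')^\perp}\|_F$, then Lemma~\ref{lem:distance-between-averages} applied to the Lipschitz function $f$, and finally $|g|\le 1$. In fact you are slightly more careful than the paper's write-up, which silently moves the averaging operator from $g$ to $f$; your explicit appeal to Item~\ref{it:self-adjoint} of Lemma~\ref{lem:avg-props} is exactly the justification that step needs.
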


\begin{proof}
Let $\epsilon = \|\Pi_E \Pi_{(E')^\perp}\|_F$ (by assumption $\epsilon<1$). Then, using Lemma~\ref{lem:subspace-distance}, it follows that  there exists a subspace $\tilde{E}$  of $E'$ with $\mathsf{dim}(\tilde{E}) = \mathsf{dim}(E)$  such that 
$\Vert \Pi_E-\Pi_{\tilde{E}} \Vert_F \le 2\sqrt{2} \epsilon$. Since $f$ is $c$-Lipschitz, it follows that 
\begin{equation}
\mathbf{E}_{\bx}[|\Avg_E f(\bx) - \Avg_{\tilde{E}} f(\bx)|] \le \sqrt{2} c \Vert \Pi_E - \Pi_{\tilde{E}} \Vert_F  \le 4 c \Vert \Pi_E \Pi_{E'^{\perp}} \Vert_F. 
\end{equation}
 The first inequality uses Lemma~\ref{lem:distance-between-averages} and the second inequality uses Lemma~\ref{lem:subspace-distance}. Thus, it follows that 
 \begin{equation}~\label{eq:diff-correlation-ae}
 \big|\mathbf{E}_{\bx}[g(\bx) \Avg_E f(\bx)] - \mathbf{E}_{\bx}[g(\bx)  \Avg_{\tilde{E}} f(\bx)]\big| \leq \mathbf{E}_{\bx}[|g(\bx)| \cdot | \Avg_E f(\bx) - \Avg_{\tilde{E}} f(\bx)|] \le 4 c \Vert \Pi_E \Pi_{E'^{\perp}} \Vert_F.  
 \end{equation}
 Using the fact that $g: \mathbb{R}^n \rightarrow [-1,1]$, the claim follows. 

\end{proof}
\section{Stability of the pseudoinverse}
In this section, we prove a stability result 
for pseudoinverse of square matrices. 
We first recall the well-known Davis-Kahan theorem -- the precise formulation we are using is Theorem~VII.~3.~2 
from \cite{bhatia2013matrix}. 
\begin{theorem}~\label{thm:DavisKahan}
Let $A, B \in \mathbb{C}^{n \times n}$ be Hermitian matrices. Let $S_1 = [a,b]$ and $S_2 = \mathbb{R} \setminus [a-\delta, b+\delta]$ for $\delta>0$. Let $E_1$ be the subspace spanned by the eigenvectors of $A$ corresponding to eigenvalues lying in $S_1$. Similarly, $E_2$ is the subspace spanned by the eigenvectors of $B$ corresponding to eigenvalues lying in $S_2$. Then, for any unitarily invariant norm $\Vert \cdot \Vert$ (such as the $\Vert \cdot \Vert_2$ or $\Vert \cdot \Vert_F$), 
\[
\| \Pi_{E_1} \Pi_{E_2} \| \le \frac{\pi}{2\delta} \Vert A-B \Vert. 
\]
\end{theorem}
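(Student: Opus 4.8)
Since this is exactly Theorem~VII.3.2 of \cite{bhatia2013matrix}, the cleanest plan is to quote it directly. If instead one wants a self-contained derivation, I would proceed as follows. The first step is to reduce the statement to a Sylvester-equation estimate. Write $P=\Pi_{E_1}$, $Q=\Pi_{E_2}$ and set $Y=PQ$. Since $P$ is a spectral projection of $A$ we have $AP=PA$, and likewise $BQ=QB$, so that
\[
  P(A-B)Q = (AP)Q - P(QB) = A(PQ)-(PQ)B = AY-YB.
\]
As $P,Q$ are orthogonal projections, $\Vert P(A-B)Q\Vert \le \Vert A-B\Vert$ for every unitarily invariant norm. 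Since $PY=Y=YQ$, I would then compress $Y$ to the relevant blocks: writing $\iota_j\colon E_j\hookrightarrow\mathbb{C}^n$ for the isometric inclusions, $Y=\iota_1\tilde Y\iota_2^{\ast}$ with $\Vert\tilde Y\Vert=\Vert Y\Vert$, and $\tilde Y$ solves
\[
  A_1\tilde Y - \tilde Y B_2 = \tilde C,\qquad A_1=\iota_1^{\ast}A\iota_1,\quad B_2=\iota_2^{\ast}B\iota_2,\quad \tilde C=\iota_1^{\ast}(A-B)\iota_2,
\]
where $\operatorname{spec}(A_1)\subseteq[a,b]$, $\operatorname{spec}(B_2)\subseteq\mathbb{R}\setminus(a-\delta,b+\delta)$ and $\Vert\tilde C\Vert\le\Vert A-B\Vert$; moreover $\tilde Y$ is the \emph{unique} solution because these two spectra are disjoint.

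The second step is to bound the inverse of the Sylvester operator $\mathcal S\colon Z\mapsto A_1 Z-ZB_2$. In simultaneous eigenbases of $A_1$ and $B_2$ this operator multiplies the $(j,k)$ entry of $Z$ by $\alpha_j-\beta_k$, and the spectral separation forces $|\alpha_j-\beta_k|\ge\delta$; for the Frobenius norm this already yields $\Vert\mathcal S^{-1}\Vert\le 1/\delta$, but for a general unitarily invariant norm one needs the double-operator-integral (Fourier) representation of $\mathcal S^{-1}$: if $\nu$ is a finite complex measure on $\mathbb{R}$ with $\widehat\nu(x)=1/x$ for all $|x|\ge\delta$ (with the convention $\widehat\nu(x)=\int e^{\mathrm{i}sx}\,d\nu(s)$), then $\tilde Y=\int_{\mathbb{R}}e^{\mathrm{i}sA_1}\,\tilde C\,e^{-\mathrm{i}sB_2}\,d\nu(s)$ solves the equation, whence $\Vert\tilde Y\Vert\le\Vert\nu\Vert_{\mathrm{TV}}\,\Vert\tilde C\Vert$. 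Putting the pieces together, $\Vert\Pi_{E_1}\Pi_{E_2}\Vert=\Vert Y\Vert=\Vert\tilde Y\Vert\le\Vert\nu\Vert_{\mathrm{TV}}\Vert A-B\Vert$.

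The hard part --- and the source of the constant $\pi/(2\delta)$ rather than $1/\delta$ --- is the extremal fact that one can choose the interpolating measure with total variation exactly $\pi/(2\delta)$; equivalently, that the norm of $\mathcal S^{-1}$ on unitarily invariant norms is $\pi/(2\delta)$ when $\operatorname{spec}(B_2)$ may sit on \emph{both} sides of $\operatorname{spec}(A_1)$ (in the one-sided-gap case the constant drops to $1/\delta$). This is a classical optimization, carried out in \cite{bhatia2013matrix} (Theorem~VII.3.2, resting on classical work of Bhatia, Davis, and McIntosh); I would not reprove it but simply invoke it, together with the reduction above, to conclude. The only routine point to watch is matching the Fourier normalization in the solution formula to the stated constant, which is bookkeeping rather than a genuine obstacle.
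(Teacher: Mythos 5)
Your proposal is correct and matches the paper, which likewise proves nothing here and simply quotes Theorem~VII.3.2 of Bhatia. Your optional sketch (reduction to a Sylvester equation via $P(A-B)Q = AY-YB$ with $Y=\Pi_{E_1}\Pi_{E_2}$, followed by the Bhatia--Davis--McIntosh bound $\pi/(2\delta)$ on the inverse Sylvester operator in unitarily invariant norms) is a faithful outline of the cited proof, and you are right to invoke rather than reprove the extremal Fourier-interpolation step.
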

Using the Davis-Kahan theorem, we prove the following result. Before we do that, we use the following simple fact. 
\begin{fact}~\label{fact:almost-same-eigen}
Let $R$ be a psd matrix and $W$ be the space spanned by eigenvectors with eigenvalues in the range $[\lambda-\delta, \lambda + \delta]$ where $\lambda>2\delta$. Then, for any $w \in W$ 
\[
\big\Vert R^{-1} w - \frac{1}{\lambda} w \big\Vert_2 \le \frac{2\delta}{\lambda^2} \Vert w \Vert_2. 
\]
\end{fact}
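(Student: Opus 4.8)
The plan is to work directly in the eigenbasis of $R$. Since $R$ is positive semidefinite, write its spectral decomposition as $R = \sum_i \mu_i v_i v_i^T$ with $\{v_i\}$ orthonormal and $\mu_i \ge 0$. By definition, $W = \spann\{v_i : \mu_i \in [\lambda-\delta, \lambda+\delta]\}$; let $I$ denote this index set. The first observation is that the hypothesis $\lambda > 2\delta$ forces $\mu_i \ge \lambda - \delta > \lambda/2 > 0$ for every $i \in I$, so on the subspace $W$ the Moore--Penrose pseudoinverse $R^{-1}$ (in the sense of Claim~\ref{clm:symm-pseudo}) acts exactly as $v_i \mapsto \mu_i^{-1} v_i$.

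Next I would expand an arbitrary $w \in W$ as $w = \sum_{i \in I} c_i v_i$, so that $\|w\|_2^2 = \sum_{i \in I} c_i^2$, and compute termwise
\[
    R^{-1} w - \tfrac{1}{\lambda} w = \sum_{i \in I} c_i\Bigl(\tfrac{1}{\mu_i} - \tfrac{1}{\lambda}\Bigr) v_i,
    \qquad
    \bigl\| R^{-1} w - \tfrac{1}{\lambda} w \bigr\|_2^2 = \sum_{i \in I} c_i^2 \, \frac{(\lambda - \mu_i)^2}{\mu_i^2 \lambda^2}.
\]
It then remains to bound the scalar factor uniformly over $i \in I$: since $|\lambda - \mu_i| \le \delta$ and $\mu_i > \lambda/2$, we get $\frac{(\lambda-\mu_i)^2}{\mu_i^2\lambda^2} \le \frac{\delta^2}{(\lambda/2)^2 \lambda^2} = \frac{4\delta^2}{\lambda^4}$. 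Substituting this into the sum gives $\bigl\| R^{-1} w - \tfrac{1}{\lambda} w \bigr\|_2^2 \le \frac{4\delta^2}{\lambda^4}\sum_{i\in I} c_i^2 = \frac{4\delta^2}{\lambda^4}\|w\|_2^2$, and taking square roots yields the claimed bound $\frac{2\delta}{\lambda^2}\|w\|_2$.

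There is no real obstacle here; the only point requiring a moment's care is verifying that the pseudoinverse coincides with the genuine inverse on $W$ (which is exactly where $\lambda > 2\delta$ is used, both to guarantee $\mu_i > 0$ and to produce the clean constant via $\mu_i > \lambda/2$). Everything else is a one-line computation in the eigenbasis.
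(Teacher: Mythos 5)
Your proposal is correct and follows essentially the same route as the paper: expand $w$ in the eigenbasis of $R$, compute the difference $R^{-1}w - \lambda^{-1}w$ termwise, and bound each coefficient using $|\lambda - \mu_i| \le \delta$ and $\mu_i \ge \lambda - \delta > \lambda/2$. Your write-up is in fact slightly more careful than the paper's, which omits the justification of the final inequality and the remark that the pseudoinverse agrees with the genuine inverse on $W$.
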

\begin{proof}
Express $w = \sum_i \alpha_i w_i$ (where $\{w_i\}$ are orthonormal eigenvectors of $R$). If $\lambda_i$ is the eigenvalue corresponding to $w_i$, then note that 
\[
R^{-1} w= \sum_i \frac{\alpha_i}{\lambda_i} w_i. 
\]
This implies that 
\[
R^{-1} w - \frac{w}{\lambda} = \sum_i {\alpha_i} w_i \cdot \big( \frac{1}{\lambda_i} - \frac{1}{\lambda} \big)
\]
From this, it follows that 
\[
\big\Vert R^{-1} w - \frac{1}{\lambda} w \big\Vert_2 \le \frac{2\delta}{\lambda^2} \Vert w \Vert_2. 
\]
\end{proof}

\begin{lemma}\label{lem:pseudoinverse-stab}
Let $A, \tilde{A} \in \mathbb{R}^{n \times n}$ be psd matrices. Let $\eta \ge 0$ and $V$ denote the subspace spanned by the eigenvalues of $A$ in $[\eta, \infty)$. Then, 
\[
\Vert (A_{\ge \eta}^{-1} - \tilde{A}_{\ge \eta/2}^{-1}) \cdot \Pi_V \Vert \le \frac{20\Vert A \Vert_F \sqrt{ \Vert A- \tilde{A} \Vert_2}}{\eta^{5/2}}. 
\]
\end{lemma}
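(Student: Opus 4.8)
The plan is to estimate $M := A_{\ge\eta}^{-1} - \tilde A_{\ge\eta/2}^{-1}$ one column at a time along an eigenbasis of $V$, using Davis--Kahan to say that each eigenvector of $A$ in $V$ is almost an eigenvector of $\tilde A$ with nearly the same eigenvalue, and Fact~\ref{fact:almost-same-eigen} to say that $\tilde A_{\ge\eta/2}^{-1}$ then acts on it nearly as a scalar. Write $\epsilon := \|A - \tilde A\|_2$. We may assume $V \ne \{0\}$ (otherwise both sides vanish), so $\|A\|_F \ge \eta$. First dispose of the regime $\epsilon > \eta/16$: there the left side is at most $\|A_{\ge\eta}^{-1}\|_2 + \|\tilde A_{\ge\eta/2}^{-1}\|_2 \le 3/\eta$, while the right side exceeds $\tfrac{20\,\eta\,\sqrt{\eta/16}}{\eta^{5/2}} = 5/\eta$, so the bound is trivial. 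Henceforth assume $\epsilon \le \eta/16$.

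Fix an orthonormal eigenbasis $u_1, \dots, u_r$ of $V$ for $A$, with eigenvalues $\lambda_1, \dots, \lambda_r \ge \eta$. Since $A_{\ge\eta}^{-1} u_i = \tfrac{1}{\lambda_i} u_i$, extending $\{u_i\}$ to an orthonormal basis of $\R^n$ shows
\[
  \|M \Pi_V\|^2 \le \|M \Pi_V\|_F^2 = \sum_{i=1}^r \| \tfrac{1}{\lambda_i} u_i - \tilde A_{\ge \eta/2}^{-1} u_i \|^2 .
\]
Moreover $r = \dim V \le \|A\|_F^2/\eta^2$, because $\|A\|_F^2 \ge \sum_{i=1}^r \lambda_i^2 \ge r\eta^2$; this is where the $\|A\|_F$ in the statement comes from. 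So it suffices to prove the per-eigenvector estimate $\| \tfrac{1}{\lambda_i} u_i - \tilde A_{\ge \eta/2}^{-1} u_i \| \le 7\sqrt\epsilon/\eta^{3/2}$ for each $i$, and then $\|M\Pi_V\|^2 \le r \cdot \tfrac{49\epsilon}{\eta^3} \le \tfrac{49\epsilon\|A\|_F^2}{\eta^5}$, giving $\|M\Pi_V\| \le \tfrac{7\sqrt\epsilon\|A\|_F}{\eta^{5/2}} \le \tfrac{20\sqrt\epsilon\|A\|_F}{\eta^{5/2}}$.

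For the per-eigenvector bound, set $\Delta := \sqrt{\epsilon\eta}$; note $\Delta \le \eta/4$. Decompose $u_i = w + w'$, where $w'$ is the orthogonal projection of $u_i$ onto the span of eigenvectors of $\tilde A$ whose eigenvalue lies \emph{outside} $[\lambda_i - \Delta, \lambda_i + \Delta]$ and $w := u_i - w'$ lies in the span of those \emph{inside} this band. Davis--Kahan (Theorem~\ref{thm:DavisKahan}) applied to $(A, \tilde A)$ with $S_1 = \{\lambda_i\}$, $S_2 = \R \setminus [\lambda_i - \Delta, \lambda_i + \Delta]$ (gap $\Delta$), using that $u_i$ lies in the $S_1$-eigenspace of $A$, gives $\|w'\| \le \tfrac{\pi\epsilon}{2\Delta}$. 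Then
\[
  \| \tfrac{1}{\lambda_i} u_i - \tilde A_{\ge \eta/2}^{-1} u_i \| \le \| \tfrac{1}{\lambda_i} w - \tilde A_{\ge \eta/2}^{-1} w \| + \tfrac{1}{\lambda_i}\|w'\| + \|\tilde A_{\ge \eta/2}^{-1}\|_2 \, \|w'\| .
\]
Every eigenvalue of $\tilde A$ in $[\lambda_i - \Delta, \lambda_i + \Delta]$ is at least $\eta - \Delta \ge 3\eta/4 \ge \eta/2$, so $\tilde A_{\ge\eta/2}^{-1}$ acts on $w$ exactly as the pseudoinverse of $\tilde A$, and Fact~\ref{fact:almost-same-eigen} (valid since $\lambda_i \ge \eta > 2\Delta$) bounds the first term by $\tfrac{2\Delta}{\lambda_i^2}\|w\| \le \tfrac{2\Delta}{\eta^2}$. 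Using $\lambda_i \ge \eta$, $\|\tilde A_{\ge\eta/2}^{-1}\|_2 \le 2/\eta$, and the bound on $\|w'\|$, the last two terms sum to at most $\tfrac{3}{\eta}\cdot\tfrac{\pi\epsilon}{2\Delta} = \tfrac{3\pi\epsilon}{2\eta\Delta}$. With $\Delta = \sqrt{\epsilon\eta}$ the total is $\big(2 + \tfrac{3\pi}{2}\big)\tfrac{\sqrt\epsilon}{\eta^{3/2}} \le \tfrac{7\sqrt\epsilon}{\eta^{3/2}}$, completing the argument.

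The only genuine obstacle is to set the argument up this way: one should avoid comparing the two spectral projections globally (where the mixing of $A$- and $\tilde A$-eigenspaces is awkward) and instead handle a single $A$-eigenvector at a time — which automatically produces the Frobenius-norm sum and introduces the dimension bound $\dim V \le \|A\|_F^2/\eta^2$ — and then choose the band half-width $\Delta = \sqrt{\epsilon\eta}$ to balance the Fact~\ref{fact:almost-same-eigen} error ($\Theta(\Delta/\eta^2)$, since the relevant eigenvalues of $\tilde A$ are only within $\Delta$ of $\lambda_i$) against the Davis--Kahan leakage error ($\Theta(\epsilon/(\eta\Delta))$, from the out-of-band component $w'$). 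Everything else — checking the band stays above $\eta/2$ so that $\tilde A_{\ge\eta/2}^{-1}$ is a genuine inverse there, and that $\epsilon \le \eta/16$ loses nothing — is routine bookkeeping.
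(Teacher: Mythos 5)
Your proof is correct and follows essentially the same route as the paper's: a per-eigenvector estimate combining Davis--Kahan with band half-width $\delta=\sqrt{\epsilon\eta}$ and Fact~\ref{fact:almost-same-eigen}, aggregated via the dimension bound $\dim V \le \Vert A\Vert_F^2/\eta^2$. Your write-up is in fact slightly more careful than the paper's, since you explicitly dispose of the regime $\epsilon > \eta/16$ to guarantee that the band $[\lambda_i-\Delta,\lambda_i+\Delta]$ stays above $\eta/2$ (a point the paper leaves implicit).
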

\begin{proof}
Define $B =  (A_{\ge \eta}^{-1} - \tilde{A}_{\ge \eta/2}^{-1}) \cdot \Pi_V$. 
First of all, observe that for all $v \in V^{\perp}$, $Bv=0$. Next, consider any eigenvector $v$ of $A$ with corresponding eigenvalue $\lambda \in [\eta, \infty)$. 
For a choice of $\delta>0$ to be fixed later, define 
$W$ to be the span of the eigenvectors of $\tilde{A}$
with eigenvalues in $ [\lambda - \delta, \lambda+\delta]$. Then, Theorem~\ref{thm:DavisKahan} implies that $v_{W^\perp}$ (i.e., the projection of $v$ on $W^\perp$) satisfies 
\begin{equation}\label{eq:small-bound2}
\Vert v_{W^\perp} \Vert_2 \le \frac{\pi}{2\delta} \Vert A -\tilde{A} \Vert_2. 
 \end{equation} 
 Consequently,  as the largest singular value of $\tilde{A}_{\ge \eta/2}^{-1}$ is at most $2/\eta$, 
 \begin{equation}\label{fact:small-bound1}
 \Vert \tilde{A}_{\ge \eta/2}^{-1} v_{W^\perp} \Vert_2 \le \frac{\pi}{\eta \cdot \delta} \Vert A -\tilde{A} \Vert_2. 
 \end{equation}

On the other hand,  using Fact~\ref{fact:almost-same-eigen}

\begin{eqnarray}
\Vert (A_{\ge \eta}^{-1} - \tilde{A}_{\ge \eta/2}^{-1}) \cdot \Pi_V v\Vert_2 &=&\Vert (A_{\ge \eta}^{-1} - \tilde{A}_{\ge \eta/2}^{-1})  v\Vert_2 \nonumber \\
&\le & \big\Vert \frac{v}{\lambda}  - \tilde{A}_{\ge \eta/2}^{-1} v_W \big\Vert  + \Vert \tilde{A}_{\ge \eta/2}^{-1} v_{W^{\perp}} \Vert \nonumber \\ 
&\le& \big\Vert \frac{v}{\lambda}  - \tilde{A}_{\ge \eta/2}^{-1} v_W \big\Vert + \frac{\pi}{\eta \cdot \delta} \Vert A -\tilde{A} \Vert_2 \ \ \textrm{using \eqref{fact:small-bound1}} \nonumber \\
&\le& \big\Vert \frac{v}{\lambda}  - \frac{ v_W}{\lambda} \big\Vert + \frac{2\delta}{\lambda^2} \Vert v_W \Vert_2 + \frac{\pi}{\eta \cdot \delta} \Vert A -\tilde{A} \Vert_2  \ \ \textrm{using Fact~\ref{fact:almost-same-eigen}} \nonumber \\ 
&=& \frac{ \Vert v_{W^\perp} \Vert}{\lambda} + \frac{2\delta}{\lambda^2} \Vert v_W \Vert_2 + \frac{\pi}{\eta \cdot \delta} \Vert A -\tilde{A} \Vert_2  \nonumber \\
&\le&  \frac{\pi}{\lambda \cdot \delta} \Vert A -\tilde{A} \Vert_2 + \frac{\pi}{\eta \cdot \delta} \Vert A -\tilde{A} \Vert_2 + \frac{2\delta}{\lambda^2}. \nonumber
\end{eqnarray} 
The last inequality applies \eqref{eq:small-bound2} and the fact that $v$ is a unit vector.  Next, we observe that $\lambda \ge \eta/2$ and thus, the above  inequality implies 
\[
\Vert (A_{\ge \eta}^{-1} - \tilde{A}_{\ge \eta/2}^{-1}) \cdot \Pi_V v\Vert_2  \leq \frac{2\pi}{\eta \cdot \delta} \Vert A -\tilde{A} \Vert_2 + \frac{\pi}{\eta \cdot \delta} \Vert A -\tilde{A} \Vert_2 + \frac{8\delta}{\eta^2}.
\]
By an appropriate choice of $\delta =  \sqrt{\eta \Vert A- \tilde{A} \Vert}$, we have 
\[
\Vert (A_{\ge \eta}^{-1} - \tilde{A}_{\ge \eta/2}^{-1}) \cdot \Pi_V v\Vert_2  \le 20 \frac{\sqrt{\Vert A- \tilde{A} \Vert_2}}{\eta^{3/2}}. 
\] 
Suppose $u$ is any  unit vector in $V$. Then, it follows from the above equation that 
\[
\Vert (A_{\ge \eta}^{-1} - \tilde{A}_{\ge \eta/2}^{-1}) \cdot \Pi_V v\Vert_2  \le20 \frac{\sqrt{\mathsf{dim}(V)}\sqrt{\Vert A- \tilde{A} \Vert_2}}{\eta^{3/2}}.
\]
Finally, observe that $\sqrt{\mathsf{dim}(V)} \le \frac{\Vert A \Vert_F}{\eta}$.  This finishes the proof.


\end{proof}
\end{document}